\documentclass[a4paper,USenglish,cleveref,autoref,thm-restate]{lipics-v2021}

\pdfoutput=1
\hideLIPIcs

\bibliographystyle{plainurl}

\title{Higher-Order Constrained Dependency Pairs for (Universal) Computability}

\author{Liye Guo}{Radboud University, Nijmegen, The Netherlands}{l.guo@cs.ru.nl}{https://orcid.org/0000-0002-3064-2691}{}

\author{Kasper Hagens}{Radboud University, Nijmegen, The Netherlands}{k.hagens@cs.ru.nl}{https://orcid.org/0009-0005-2382-0559}{}

\author{Cynthia Kop}{Radboud University, Nijmegen, The Netherlands}{c.kop@cs.ru.nl}{https://orcid.org/0000-0002-6337-2544}{}

\author{Deivid Vale}{Radboud University, Nijmegen, The Netherlands}{d.vale@cs.ru.nl}{https://orcid.org/0000-0003-1350-3478}{}

\authorrunning{L. Guo, K. Hagens, C. Kop and D. Vale}

\Copyright{Liye Guo, Kasper Hagens, Cynthia Kop and Deivid Vale}

\ccsdesc[500]{Theory of computation~Equational logic and rewriting}

\keywords{Higher-order term rewriting, constrained rewriting, dependency pairs}

\supplement{The source code is available on GitHub \cite{cora} and the version for this paper on Zenodo \cite{coraMFCS:24}.}

\funding{The authors are supported by NWO VI.Vidi.193.075, project ``CHORPE''.}

\nolinenumbers

\EventEditors{Rastislav Kr\'{a}lovi\v{c} and Anton\'{i}n Ku\v{c}era}
\EventNoEds{2}
\EventLongTitle{49th International Symposium on Mathematical Foundations of Computer Science (MFCS 2024)}
\EventShortTitle{MFCS 2024}
\EventAcronym{MFCS}
\EventYear{2024}
\EventDate{August 26--30, 2024}
\EventLocation{Bratislava, Slovakia}
\EventLogo{}
\SeriesVolume{306}
\ArticleNo{49}

\usepackage{mathtools,tikz}
\usetikzlibrary{positioning}

\newcommand{\eiLabel}[1]{(#1)}
\newenvironment{enumerate*}{\setcounter{enumi}{0}\renewcommand{\item}{\refstepcounter{enumi}\eiLabel{\theenumi}~}\ignorespaces}{\ignorespacesafterend}

\newcommand{\setBrac}[1]{\{\, #1 \,\}}
\newcommand{\setComp}[2]{\setBrac{#1 \,|\, #2}}
\newcommand{\setTheo}[1]{{#1}_\vartheta}
\newcommand{\setSort}{\mathcal{S}}
\newcommand{\setType}{\mathcal{T}}
\newcommand{\setFunc}{\mathcal{F}}
\newcommand{\setVari}{\mathcal{V}}
\newcommand{\setPret}{\mathfrak{T}}
\newcommand{\setTerm}[2]{T({#1}, {#2})}
\newcommand{\setFvar}[1]{\operatorname{Var}(#1)}
\newcommand{\setAlge}{\mathfrak{X}}
\newcommand{\setInte}{\mathbb{Z}}
\newcommand{\setRule}{\mathcal{R}}
\newcommand{\setDfnd}{\mathcal{D}}
\newcommand{\setAcce}[1]{\operatorname{Acc}(#1)}
\newcommand{\setReCa}{\mathbb{C}}
\newcommand{\setStDP}[1]{\operatorname{SDP}(#1)}
\newcommand{\setDePa}{\mathcal{P}}
\newcommand{\setHead}[1]{\operatorname{heads}(#1)}
\newcommand{\setFreI}[1]{\iota(#1)}
\newcommand{\setFreV}[1]{\mathcal{X}(#1)}

\newcommand{\arrType}{\rightarrow}
\newcommand{\arrRule}{\rightarrow}
\newcommand{\arrRwrt}{\arrRule}
\newcommand{\arrNorm}{\downarrow}
\newcommand{\arrComp}{\Rrightarrow}
\newcommand{\arrDePa}{\Rightarrow}

\newcommand{\relComp}[2]{{#1}\mathrel{;}{#2}}
\newcommand{\relType}{:}
\newcommand{\relSupt}{\trianglerighteq}
\newcommand{\relSupp}{\vartriangleright}
\newcommand{\relSord}{\succsim}
\newcommand{\relSoSt}{\succ}
\newcommand{\relSoRe}{\precsim}
\newcommand{\relSoPo}{\relSord_+}
\newcommand{\relSoNe}{\relSoSt_-}
\newcommand{\relAcce}{\relSupt_\mathrm{acc}}
\newcommand{\relThEq}{\equiv}

\newcommand{\mapNtrp}[1]{[\![#1]\!]}
\newcommand{\mapNorm}[1]{#1\,{\arrNorm}}

\newcommand{\lblCalc}{\kappa}

\newcommand{\flagAny}{\mathfrak{an}}
\newcommand{\flagPub}{\mathfrak{pu}}

\newcommand{\vect}[4]{{#1}_{#2} #4 {#1}_{#3}}
\newcommand{\appl}[2]{{#1}\ {#2}}
\newcommand{\appB}[3]{\appl{\appl{#1}{#2}}{#3}}
\newcommand{\appT}[4]{\appl{\appB{#1}{#2}{#3}}{#4}}
\newcommand{\bind}[3]{#1 {#2}.\, #3}
\newcommand{\sbst}[2]{{#1} \coloneq {#2}}
\newcommand{\hole}{\square}
\newcommand{\plug}[2]{{#1}[#2]}
\newcommand{\rwrl}[3]{{#1}\arrRule{#2}\ [#3]}
\newcommand{\shrp}[1]{{#1}^\sharp}
\newcommand{\dpso}{\mathsf{dp}}
\newcommand{\dpnc}[2]{{#1}\arrDePa{#2}}
\newcommand{\dpwc}[3]{\dpnc{#1}{#2}\ [#3]}
\newcommand{\dpcv}[4]{\dpwc{#1}{#2}{#3 \mid #4}}
\newcommand{\crel}[5]{{#2} \mathrel{#1}_{#4}^{#5} {#3}}
\newcommand{\pflg}{\mathtt{p}}

\newcommand{\typA}{A}
\newcommand{\typB}{B}
\newcommand{\typC}{C}
\newcommand{\trmA}{t}
\newcommand{\trmB}{s}
\newcommand{\trmC}{u}
\newcommand{\trmD}{v}
\newcommand{\funA}{f}
\newcommand{\funB}{g}
\newcommand{\varA}{x}
\newcommand{\varB}{y}
\newcommand{\varC}{z}
\newcommand{\subA}{\sigma}
\newcommand{\ctxA}{C}
\newcommand{\rulL}{\ell}
\newcommand{\rulR}{r}
\newcommand{\rulC}{\varphi}
\newcommand{\valA}{v}
\newcommand{\rcsA}{I}
\newcommand{\gtvA}{L}
\newcommand{\dppA}{\rho}
\newcommand{\grhA}{G}
\newcommand{\homA}{\theta}
\newcommand{\sdpA}{p}
\newcommand{\sccA}{S}
\newcommand{\prjA}{\nu}
\newcommand{\inmA}{\mathcal{J}}
\newcommand{\tamA}{\tau}
\newcommand{\relA}{R}

\newcommand{\mapf}{\mathsf{map}}
\newcommand{\fold}{\mathsf{fold}}
\newcommand{\lapp}{\mathsf{app}}
\newcommand{\llam}{\mathsf{lam}}
\newcommand{\fact}{\mathsf{fact}}
\newcommand{\comp}{\mathsf{comp}}
\newcommand{\iden}{\mathsf{id}}
\newcommand{\coml}{\mathsf{complst}}
\newcommand{\gcdl}{\mathsf{gcdlist}}
\newcommand{\gcdf}{\mathsf{gcd}}
\newcommand{\init}{\mathsf{init}}

\newcommand{\tyIn}{\mathsf{int}}
\newcommand{\tyBo}{\mathsf{bool}}
\newcommand{\tyLF}{\mathsf{funlist}}
\newcommand{\tyLI}{\mathsf{intlist}}
\newcommand{\booF}{\mathfrak{f}}
\newcommand{\booT}{\mathfrak{t}}
\newcommand{\algF}{0}
\newcommand{\algT}{1}
\newcommand{\fnlN}{\mathsf{fnil}}
\newcommand{\fnlC}{\mathsf{fcons}}
\newcommand{\lstN}{\mathsf{nil}}
\newcommand{\lstC}{\mathsf{cons}}

\newcommand{\cora}{\textsf{Cora}}
\newcommand{\tcap}{\zeta}

\begin{document}

\maketitle

\begin{abstract}
  Dependency pairs constitute a series of very effective techniques for
  the termination analysis of term rewriting systems.
  In this paper, we adapt the static dependency pair framework to
  logically constrained simply-typed term rewriting systems (LCSTRSs),
  a higher-order formalism with logical constraints built in.
  We also propose the concept of universal computability,
  which enables a form of open-world termination analysis
  through the use of static dependency pairs.
\end{abstract}

\section{Introduction}\label{sec:introduction}
Logically constrained simply-typed term rewriting systems (LCSTRSs) \cite{guo:kop:24}
are a formalism of higher-order term rewriting with logical constraints
(built on its first-order counterpart \cite{kop:nis:13}).
Proposed for program analysis,
LCSTRSs offer a flexible representation of programs
owing to---in contrast with traditional TRSs---%
their support for primitive data types such as
(arbitrary-precision or fixed-width) integers and floating-point numbers.
Without compromising the ability
to directly reason about these widely used data types,
LCSTRSs bridge the gap between
the abundant techniques based on term rewriting and
automatic program analysis.

We consider \emph{termination} analysis in this paper.
The termination of LCSTRSs was first discussed in \cite{guo:kop:24}
through a variant of the higher-order recursive path ordering (HORPO) \cite{jou:rub:99}.
This paper furthers that discussion
by introducing dependency pairs \cite{art:gie:00} to LCSTRSs.
As a broad framework for termination,
this method was initially proposed for unconstrained first-order term rewriting,
and was later generalized in a variety of higher-order settings
(see, e.g., \cite{sak:wat:sak:01,kop:raa:12,sak:kus:05,bla:06}).
Modern termination analyzers rely heavily on dependency pairs.

In higher-order termination analysis, dependency pairs take two forms:\ %
the dynamic \cite{sak:wat:sak:01,kop:raa:12} and
the static \cite{sak:kus:05,bla:06,kus:sak:07,fuh:kop:19}.
This paper concentrates on \emph{static} dependency pairs,
and is based on the definitions in \cite{fuh:kop:19,kus:sak:07}.
First-order dependency pairs with logical constraints
have been informally defined by the third author \cite{kop:13},
from which we also take inspiration.

For program analysis,
the traditional notion of termination can be inefficient,
and arguably insufficient.
It assumes that the whole program is known and analyzed,
i.e., \emph{closed-world} analysis.
This way even small programs that happen to import a large standard library
need sophisticated analysis;
local changes in a multipart, previously verified program
also require the entire analysis to be redone.
As O'Hearn \cite{hea:18} argues (in a different context),
studying \emph{open-world} analysis opens up many applications.
In particular,
it is practically desirable to analyze the termination of standard libraries---%
or modules of a larger program in general---%
without prior knowledge of how the functions they define may be used.

It is tricky to characterize such a property,
especially in the presence of higher-order arguments.
For example, \( \mapf \) and \( \fold \) are usually considered ``terminating'',
even though passing a non-terminating function to them can surely result in non-termination.
Hence, we need to narrow our focus to certain ``reasonable'' calls.
On the other hand, the function
\( \appl{\lapp}{(\appl{\llam}{f})} \arrRule f \) where
\( \lapp \relType \mathsf{o} \arrType \mathsf{o} \arrType \mathsf{o} \) and
\( \llam \relType (\mathsf{o} \arrType \mathsf{o}) \arrType \mathsf{o} \)
would generally be considered ``non-terminating'', because
if we define \( \appl{\mathsf{w}}{x} \arrRule \appB{\lapp}{x}{x} \),
an infinite rewrite sequence starts from
\( \appB{\lapp}{(\appl{\llam}{\mathsf{w}})}{(\appl{\llam}{\mathsf{w}})} \)---%
this encodes the famous \( \Omega \) in the untyped \( \lambda \)-calculus.
The property we are looking for must distinguish \( \mapf \) and \( \fold \) from \( \lapp \).

To capture this property,
we propose a new concept, called \emph{universal computability}.
In light of information hiding,
this concept can be further generalized to \emph{public computability}.
We will see that static dependency pairs
are a natural vehicle for analyzing these properties.

Various modular aspects of term rewriting have been studied by the community.
Our scenario roughly corresponds to hierarchical combinations
\cite{rao:93,rao:94,rao:95,der:95},
where different parts of a program are analyzed separately.
We follow this terminology so that
it will be easier to compare our work with the literature.
However, our setup---higher-order constrained rewriting---%
is separate from the first-order and unconstrained setting
in which hierarchical combinations were initially proposed.
Furthermore, our approach has a different focus---%
namely, the use of static dependency pairs.

\subparagraph*{Contributions.}
We recall the formalism of LCSTRSs and the predicate of computability in
\cref{sec:preliminaries}.
Then the contributions of this paper follow:
\begin{itemize}
\item We propose in \cref{sec:dependencypairs} the first definition of
  \emph{dependency pairs} for higher-order logically constrained TRSs.
  This is also the first DP approach for constrained rewriting
  as the prior work on first-order constrained dependency pairs \cite{kop:13}
  has never been formally published.
\item We define in \cref{sec:dpframework} the \emph{constrained DP framework}
  for termination analysis with five classes of \emph{DP processors},
  which can be used to simplify termination problems.
\item We extend the notion of a \emph{hierarchical combination}
  \cite{rao:93,rao:94,rao:95,der:95} to LCSTRSs and
  define \emph{universal} and \emph{public computability} in \cref{sec:computability}.
  We also fine-tune the DP framework to support these properties,
  and provide extra DP processors for public computability.
  This allows the DP framework to be used for open-world analysis.
  We base \cref{sec:computability} on LCSTRSs
  to emphasize those notions in real-world programming,
  but they are new and of theoretical interest in higher-order term rewriting
  even without logical constraints.
\item We have implemented the DP framework for both termination and public computability
  in our open-source analyzer \cora.
  We describe the experimental evaluation in \cref{sec:experiments}.
\end{itemize}

\section{Preliminaries}\label{sec:preliminaries}
In this section, we collect
the preliminary definitions and results we need
from the literature.
First, we recall the definition of an LCSTRS \cite{guo:kop:24}.
In this paper, we put a restriction on rewrite rules:\ %
\( \rulL \) is always a pattern in \( \rwrl{\rulL}{\rulR}{\rulC} \).
Next, we recall the definition of computability (with accessibility) from \cite{fuh:kop:19}.
This version is particularly tailored for static dependency pairs.

\subsection{Logically Constrained STRSs}

\subparagraph*{Terms Modulo Theories.}
Given a non-empty set \( \setSort \) of \emph{sorts} (or \emph{base types}),
the set \( \setType \) of simple types over \( \setSort \) is generated by
the grammar \( \setType \Coloneqq \setSort \mid (\setType \arrType \setType) \).
Right-associativity is assigned to \( \arrType \) so we can omit some parentheses.
Given disjoint sets \( \setFunc \) and \( \setVari \),
whose elements we call \emph{function symbols} and \emph{variables}, respectively,
the set \( \setPret \) of \emph{pre-terms} over \( \setFunc \) and \( \setVari \)
is generated by the grammar
\( \setPret \Coloneqq \setFunc \mid \setVari \mid (\appl{\setPret}{\setPret}) \).
Left-associativity is assigned to the juxtaposition operation, called \emph{application},
so \( \appB{\trmA_0}{\trmA_1}{\trmA_2} \) stands for
\( (\appl{(\appl{\trmA_0}{\trmA_1})}{\trmA_2}) \), for example.

We assume every function symbol and variable is assigned a unique type.
Typing works as expected:\ %
if pre-terms \( \trmA_0 \) and \( \trmA_1 \) have
types \( \typA \arrType \typB \) and \( \typA \), respectively,
\( \appl{\trmA_0}{\trmA_1} \) has type \( \typB \).
The set \( \setTerm{\setFunc}{\setVari} \) of \emph{terms} over \( \setFunc \) and \( \setVari \)
consists of pre-terms that have a type.
We write \( \trmA \relType \typA \) if a term \( \trmA \) has type \( \typA \).
We assume there are infinitely many variables of each type.

The set \( \setFvar{\trmA} \) of variables in \( \trmA \in \setTerm{\setFunc}{\setVari} \)
is defined by
\( \setFvar{\funA} = \emptyset \) for \( \funA \in \setFunc \),
\( \setFvar{\varA} = \setBrac{\varA} \) for \( \varA \in \setVari \) and
\( \setFvar{\appl{\trmA_0}{\trmA_1}} = \setFvar{\trmA_0} \cup \setFvar{\trmA_1} \).
A term \( \trmA \) is called \emph{ground}
if \( \setFvar{\trmA} = \emptyset \).

For constrained rewriting,
we make further assumptions.
First, we assume
that there is a distinguished subset \( \setTheo{\setSort} \) of \( \setSort \),
called the set of \emph{theory sorts}.
The grammar
\( \setTheo{\setType} \Coloneqq \setTheo{\setSort} \mid (\setTheo{\setSort} \arrType \setTheo{\setType}) \)
generates the set \( \setTheo{\setType} \) of
\emph{theory types} over \( \setTheo{\setSort} \).
Note that a theory type is essentially a non-empty list of theory sorts.
Next, we assume
that there is a distinguished subset \( \setTheo{\setFunc} \) of \( \setFunc \),
called the set of \emph{theory symbols},
and that the type of every theory symbol is in \( \setTheo{\setType} \),
which means that the type of any argument passed to a theory symbol is a theory sort.
Theory symbols whose type is a theory sort are called \emph{values}.
Elements of \( \setTerm{\setTheo{\setFunc}}{\setVari} \) are called \emph{theory terms}.

Theory symbols are interpreted in an underlying theory:\ %
given an \( \setTheo{\setSort} \)-indexed family of sets
\( (\setAlge_\typA)_{\typA \in \setTheo{\setSort}} \),
we extend it to a \( \setTheo{\setType} \)-indexed family by
letting \( \setAlge_{\typA \arrType \typB} \) be the set of mappings
from \( \setAlge_\typA \) to \( \setAlge_\typB \);
an \emph{interpretation} of theory symbols
is a \( \setTheo{\setType} \)-indexed family of mappings
\( (\mapNtrp{\cdot}_\typA)_{\typA \in \setTheo{\setType}} \) where
\( \mapNtrp{\cdot}_\typA \) assigns to each theory symbol of type \( \typA \)
an element of \( \setAlge_\typA \) and is bijective
if \( \typA \in \setTheo{\setSort} \).
Given an interpretation of theory symbols \( (\mapNtrp{\cdot}_\typA)_{\typA \in \setTheo{\setType}} \),
we extend each indexed mapping \( \mapNtrp{\cdot}_\typB \) to one
that assigns to each \emph{ground theory term} of type \( \typB \)
an element of \( \setAlge_\typB \) by
letting \( \mapNtrp{\appl{\trmA_0}{\trmA_1}}_\typB \) be
\( \mapNtrp{\trmA_0}_{\typA \arrType \typB}(\mapNtrp{\trmA_1}_\typA) \).
We write just \( \mapNtrp{\cdot} \) when the type can be deduced.

\begin{example}
  Let \( \setTheo{\setSort} \) be \( \setBrac{\tyIn} \).
  Then \( \tyIn \arrType \tyIn \arrType \tyIn \)
  is a theory type over \( \setTheo{\setSort} \)
  while \( (\tyIn \arrType \tyIn) \arrType \tyIn \) is not.
  Let \( \setTheo{\setFunc} \) be
  \( \setBrac{-} \cup \setInte \) where
  \( {-} \relType \tyIn \arrType \tyIn \arrType \tyIn \) and
  \( n \relType \tyIn \) for all \( n \in \setInte \).
  The values are the elements of \( \setInte \).
  Let \( \setAlge_\tyIn \) be \( \setInte \),
  \( \mapNtrp{\cdot}_\tyIn \) be the identity mapping and
  \( \mapNtrp{-} \) be the mapping \( \bind{\lambda}{m}{\bind{\lambda}{n}{m - n}} \).
  The interpretation of \( \appl{(-)}{1} \)
  is the mapping \( \bind{\lambda}{n}{1 - n} \).
\end{example}

\subparagraph*{Substitutions, Contexts and Subterms.}
Type-preserving mappings
from \( \setVari \) to \( \setTerm{\setFunc}{\setVari} \) are called \emph{substitutions}.
Every substitution \( \subA \) extends to a type-preserving mapping \( \bar{\subA} \)
from \( \setTerm{\setFunc}{\setVari} \) to \( \setTerm{\setFunc}{\setVari} \).
We write \( \trmA \subA \) for \( \bar{\subA}(\trmA) \) and define it as follows:\ %
\( \funA \subA = \funA \) for \( \funA \in \setFunc \),
\( \varA \subA = \subA(\varA) \) for \( \varA \in \setVari \) and
\( (\appl{\trmA_0}{\trmA_1}) \subA = \appl{(\trmA_0 \subA)}{(\trmA_1 \subA)} \).
Let \( [\sbst{\varA_1}{\trmA_1}, \ldots, \sbst{\varA_n}{\trmA_n}] \) denote
the substitution \( \subA \) such that
\( \subA(\varA_i) = \trmA_i \)
for all \( i \),
and \( \subA(\varB) = \varB \)
for all \( \varB \in \setVari \setminus \setBrac{\vect{\varA}{1}{n}{, \ldots,}} \).

A context is a term containing a hole.
Let \( \hole \) be a special terminal symbol and
assign to it a type \( \typA \);
a \emph{context} \( \plug{\ctxA}{} \)
is an element of \( \setTerm{\setFunc}{\setVari \cup \setBrac{\hole}} \)
such that \( \hole \) occurs in \( \plug{\ctxA}{} \) exactly once.
Given a term \( \trmA \relType \typA \),
let \( \plug{\ctxA}{\trmA} \) denote the term produced by
replacing \( \hole \) in \( \plug{\ctxA}{} \) with \( \trmA \).

A term \( \trmA \) is called a (maximally applied) \emph{subterm} of a term \( \trmB \),
written as \( \trmB \relSupt \trmA \),
if either \( \trmB = \trmA \),
\( \trmB = \appl{\trmB_0}{\trmB_1} \)
where \( \trmB_1 \relSupt \trmA \), or
\( \trmB = \appl{\trmB_0}{\trmB_1} \)
where \( \trmB_0 \relSupt \trmA \) and \( \trmB_0 \neq \trmA \);
i.e., \( \trmB = \plug{\ctxA}{\trmA} \) for \( \plug{\ctxA}{} \)
that is not of form \( \plug{\ctxA^\prime}{\appl{\hole}{\trmA_1}} \).
We write \( \trmB \relSupp \trmA \)
and call \( \trmA \) a \emph{proper subterm} of \( \trmB \)
if \( \trmB \relSupt \trmA \) and \( \trmB \neq \trmA \).

\subparagraph*{Constrained Rewriting.}
Constrained rewriting requires the theory sort \( \tyBo \):\ %
we henceforth assume that
\( \tyBo \in \setTheo{\setSort} \),
\( \setBrac{\booF, \booT} \subseteq \setTheo{\setFunc} \),
\( \setAlge_\tyBo = \setBrac{\algF, \algT} \),
\( \mapNtrp{\booF}_\tyBo = \algF \) and
\( \mapNtrp{\booT}_\tyBo = \algT \).
A \emph{logical constraint} is a theory term \( \rulC \) such that
\( \rulC \) has type \( \tyBo \) and
the type of each variable in \( \setFvar{\rulC} \) is a theory sort.
A (constrained) \emph{rewrite rule} is a triple \( \rwrl{\rulL}{\rulR}{\rulC} \) where
\( \rulL \) and \( \rulR \) are terms which have the same type,
\( \rulC \) is a logical constraint,
the type of each variable in \( \setFvar{\rulR} \setminus \setFvar{\rulL} \)
is a theory sort and
\( \rulL \) is a pattern that
takes the form \( \appl{\funA}{\vect{\trmA}{1}{n}{\cdots}} \) for
some function symbol \( \funA \) and
contains at least one function symbol in \( \setFunc \setminus \setTheo{\setFunc} \).
Here a \emph{pattern} is a term whose subterms are either
\( \appl{\funA}{\vect{\trmA}{1}{n}{\cdots}} \) for some function symbol \( \funA \) or
a variable.
A substitution \( \subA \) is said
to \emph{respect} \( \rwrl{\rulL}{\rulR}{\rulC} \) if
\( \subA(\varA) \) is a value for all
\( \varA \in \setFvar{\rulC} \cup (\setFvar{\rulR} \setminus \setFvar{\rulL}) \) and
\( \mapNtrp{\rulC \subA} = \algT \).

A \emph{logically constrained simply-typed term rewriting system} (LCSTRS)
collects the above data---%
\( \setSort \), \( \setTheo{\setSort} \), \( \setFunc \), \( \setTheo{\setFunc} \),
\( \setVari \), \( (\setAlge_\typA) \) and \( \mapNtrp{\cdot} \)---%
along with a set \( \setRule \) of rewrite rules.
We usually let \( \setRule \) alone stand for the system.
The set \( \setRule \) induces the \emph{rewrite relation}
\( {\arrRwrt_\setRule} \) over terms:\ %
\( \trmA \arrRwrt_\setRule \trmA^\prime \) if and only if
there exists a context \( \plug{\ctxA}{} \) such that either
\begin{enumerate*}
\item \( \trmA = \plug{\ctxA}{\rulL \subA} \) and
  \( \trmA^\prime = \plug{\ctxA}{\rulR \subA} \) for
  some rewrite rule \( \rwrl{\rulL}{\rulR}{\rulC} \in \setRule \) and
  some substitution \( \subA \) which respects \( \rwrl{\rulL}{\rulR}{\rulC} \), or
\item \( \trmA = \plug{\ctxA}{\appl{\funA}{\vect{\valA}{1}{n}{\cdots}}} \) and
  \( \trmA^\prime = \plug{\ctxA}{\valA^\prime} \) for
  some theory symbol \( \funA \) and
  some values \( \vect{\valA}{1}{n}{, \ldots,}, \valA^\prime \) with
  \( n > 0 \) and
  \( \mapNtrp{\appl{\funA}{\vect{\valA}{1}{n}{\cdots}}} = \mapNtrp{\valA^\prime} \).
\end{enumerate*}
When no ambiguity arises,
we may write \( \arrRwrt \) for \( \arrRwrt_\setRule \).

If \( \trmA \arrRwrt_\setRule \trmA^\prime \) due to the second condition above,
we also write \( \trmA \arrRwrt_\lblCalc \trmA^\prime \) and
call it a \emph{calculation step}.
Theory symbols that are not a value
are called \emph{calculation symbols}.
Let \( \mapNorm{\trmA}_\lblCalc \) denote the (unique) \( \lblCalc \)-normal form of \( \trmA \),
i.e., the term \( \trmA^\prime \) such that
\( \trmA \arrRwrt_\lblCalc^* \trmA^\prime \) and
\( \trmA^\prime \not\arrRwrt_\lblCalc \trmA^{\prime\prime} \) for any \( \trmA^{\prime\prime} \).
For example,
\( \mapNorm{(\appl{\funA}{(7 * (3 * 2))})}_\lblCalc = \appl{\funA}{42} \) if
\( \funA \) is not a calculation symbol, or if
\( \funA \relType \tyIn \arrType \typA \arrType \typB \).

A rewrite rule \( \rwrl{\rulL}{\rulR}{\rulC} \)
is said to \emph{define} a function symbol \( \funA \) if
\( \rulL = \appl{\funA}{\vect{\trmA}{1}{n}{\cdots}} \).
Given an LCSTRS \( \setRule \),
\( \funA \) is called a \emph{defined symbol} if
some rewrite rule in \( \setRule \) defines \( \funA \).
Let \( \setDfnd \) denote the set of defined symbols.
Values and function symbols in \( \setFunc \setminus (\setTheo{\setFunc} \cup \setDfnd) \)
are called \emph{constructors}.

\begin{example}\label{ex:fact}
  Below is the factorial function in continuation-passing style as an LCSTRS:
  \begin{align*}
    \appB{\fact}{n}{k}    &\arrRule \appl{k}{1}                                               &&[n \le 0] &
    \appT{\comp}{g}{f}{x} &\arrRule \appl{g}{(\appl{f}{x})}\\
    \appB{\fact}{n}{k}    &\arrRule \appB{\fact}{(n - 1)}{(\appB{\comp}{k}{(\appl{(*)}{n})})} &&[n > 0]   &
    \appl{\iden}{x}       &\arrRule x
  \end{align*}
  We use infix notation for some binary operators,
  and omit the logical constraint of a rewrite rule when it is \( \booT \).
  An example rewrite sequence is
  \( \appB{\fact}{1}{\iden}
     \arrRwrt
     \appB{\fact}{(1 - 1)}{(\appB{\comp}{\iden}{(\appl{(*)}{1})})}
     \arrRwrt_\lblCalc
     \appB{\fact}{0}{(\appB{\comp}{\iden}{(\appl{(*)}{1})})}
     \arrRwrt
     \appT{\comp}{\iden}{(\appl{(*)}{1})}{1}
     \arrRwrt
     \appl{\iden}{(\appB{(*)}{1}{1})}
     \arrRwrt_\lblCalc
     \appl{\iden}{1}
     \arrRwrt
     1 \).
\end{example}

\subsection{Accessibility and Computability}
We recall the notion of computability with accessibility---%
which originates from \cite{bla:jou:oka:02} and
is reformulated in \cite{fuh:kop:19} to couple with static dependency pairs---%
and adapt the notion of accessible function passing \cite{fuh:kop:19} to LCSTRSs.

\subparagraph*{Accessibility.}
Assume given a \emph{sort ordering}---%
a quasi-ordering \( \relSord \) over \( \setSort \)
whose strict part \( {\relSoSt} = {\relSord} \setminus {\relSoRe} \) is well-founded.
We inductively define two relations
\( \relSoPo \) and \( \relSoNe \) over \( \setSort \) and \( \setType \):\ %
given a sort \( \typA \) and
a type \( \typB = \vect{\typB}{1}{n}{\arrType \cdots \arrType} \arrType \typC \)
where \( \typC \) is a sort and \( n \ge 0 \),
\( \typA \relSoPo \typB \) if and only if
\( \typA \relSord \typC \) and \( \bind{\forall}{i}{\typA \relSoNe \typB_i} \), and
\( \typA \relSoNe \typB \) if and only if
\( \typA \relSoSt \typC \) and \( \bind{\forall}{i}{\typA \relSoPo \typB_i} \).

Given a function symbol
\( \funA \relType \vect{\typA}{1}{n}{\arrType \cdots \arrType} \arrType \typB \)
where \( \typB \) is a sort,
the set \( \setAcce{\funA} \) of the \emph{accessible argument positions} of \( \funA \)
is defined as \( \setComp{1 \le i \le n}{\typB \relSoPo \typA_i} \).
A term \( \trmA \) is called an \emph{accessible subterm} of a term \( \trmB \),
written as \( \trmB \relAcce \trmA \),
if either \( \trmB = \trmA \), or
\( \trmB = \appl{\funA}{\vect{\trmB}{1}{m}{\cdots}} \) for some \( \funA \in \setFunc \) and
there exists \( k \in \setAcce{\funA} \) such that \( \trmB_k \relAcce \trmA \).
An LCSTRS \( \setRule \) is called \emph{accessible function passing} (AFP) if
there exists a sort ordering such that
for all \( \rwrl{\appl{\funA}{\vect{\trmB}{1}{m}{\cdots}}}{\rulR}{\rulC} \in \setRule \) and
\( \varA \in \setFvar{\appl{\funA}{\vect{\trmB}{1}{m}{\cdots}}} \cap \setFvar{\rulR} \setminus \setFvar{\rulC} \),
there exists \( k \) such that \( \trmB_k \relAcce \varA \).

\begin{example}
  An LCSTRS \( \setRule \) is AFP (with \( \relSord \) equating all the sorts) if
  for all \( \rwrl{\appl{\funA}{\vect{\trmB}{1}{m}{\cdots}}}{\rulR}{\rulC} \in \setRule \)
  and \( i \in \setBrac{1, \ldots, m} \),
  the type of each proper subterm of \( \trmB_i \) is a sort.
  Rewrite rules for common higher-order functions, e.g., \( \mapf \) and \( \fold \),
  usually fit this criterion.

  Consider \( \setBrac{\appB{\coml}{\fnlN}{x} \arrRule x, \appB{\coml}{(\appB{\fnlC}{f}{l})}{x} \arrRule \appB{\coml}{l}{(\appl{f}{x})}} \),
  where \( \coml \relType \tyLF \arrType \tyIn \arrType \tyIn \)
  composes a list of \emph{functions}.
  This system is AFP with \( \tyLF \relSoSt \tyIn \).

  The system \( \setBrac{\appl{\lapp}{(\appl{\llam}{f})} \arrRule f} \)
  in \cref{sec:introduction}
  is not AFP since \( \mathsf{o} \relSoSt \mathsf{o} \) cannot be true.
\end{example}

\subparagraph*{Computability.}
A term is called \emph{neutral} if it takes the form
\( \appl{\varA}{\vect{\trmA}{1}{n}{\cdots}} \) for some variable \( \varA \).
A set of \emph{reducibility candidates}, or an \emph{RC-set},
for the rewrite relation \( \arrRwrt_\setRule \) of an LCSTRS \( \setRule \)
is an \( \setSort \)-indexed family of sets \( (\rcsA_\typA)_{\typA \in \setSort} \)
(let \( \rcsA \) denote \( \bigcup_\typA \rcsA_\typA \))
satisfying the following conditions:
\begin{bracketenumerate}
\item Each element of \( \rcsA_\typA \) is a terminating
  (with respect to \( \arrRwrt_\setRule \)) term of type \( \typA \).
\item Given terms \( \trmB \) and \( \trmA \) such that \( \trmB \arrRwrt_\setRule \trmA \),
  if \( \trmB \) is in \( \rcsA_\typA \), so is \( \trmA \).
\item Given a neutral term \( \trmB \),
  if \( \trmA \) is in \( \rcsA_\typA \) for all \( \trmA \) such that
  \( \trmB \arrRwrt_\setRule \trmA \),
  so is \( \trmB \).
\end{bracketenumerate}
Given an RC-set \( \rcsA \) for \( \arrRwrt_\setRule \),
a term \( \trmA_0 \) is called \emph{\( \rcsA \)-computable} if either
the type of \( \trmA_0 \) is a sort and \( \trmA_0 \in \rcsA \), or
the type of \( \trmA_0 \) is \( \typA \arrType \typB \) and
\( \appl{\trmA_0}{\trmA_1} \) is \( \rcsA \)-computable
for all \( \rcsA \)-computable \( \trmA_1 \relType \typA \).

We are interested in a specific RC-set \( \setReCa \),
whose existence is guaranteed by \cref{thm:existsC}.

\begin{theorem}[\textnormal{see \cite{fuh:kop:19}}]\label{thm:existsC}
  Given a sort ordering and an RC-set \( \rcsA \) for \( \arrRwrt_\setRule \),
  let \( \arrComp_\rcsA \) be the relation over terms such that
  \( \trmB \arrComp_\rcsA \trmA \) if and only if
  both \( \trmB \) and \( \trmA \) have a base type,
  \( \trmB = \appl{\funA}{\vect{\trmB}{1}{m}{\cdots}} \)
  for some function symbol \( \funA \),
  \( \trmA = \appl{\trmB_k}{\vect{\trmA}{1}{n}{\cdots}} \)
  for some \( k \in \setAcce{\funA} \) and
  \( \trmA_i \) is \( \rcsA \)-computable for all \( i \).

  Given an LCSTRS \( \setRule \) with a sort ordering,
  there exists an RC-set \( \setReCa \) for \( \arrRwrt_\setRule \) such that
  \( \trmA \in \setReCa_\typA \) if and only if
  \( \trmA \relType \typA \) is terminating with respect to
  \( {\arrRwrt_\setRule} \cup {\arrComp_\setReCa} \),
  and for all \( \trmA^\prime \) such that \( \trmA \arrRwrt_\setRule^* \trmA^\prime \),
  if \( \trmA^\prime = \appl{\funA}{\vect{\trmA}{1}{n}{\cdots}} \)
  for some function symbol \( \funA \),
  \( \trmA_i \) is \( \setReCa \)-computable for all \( i \in \setAcce{\funA} \).
\end{theorem}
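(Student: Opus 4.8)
The characterization of $\setReCa$ is a fixpoint condition, and the plan is to solve it by well-founded induction on the sort ordering, following the reducibility-candidates / computability-with-accessibility construction of \cite{bla:jou:oka:02,fuh:kop:19}. Note first that the two conditions only ever mention $\setReCa$-computability, and that the $\setReCa$-computability of a term of \emph{any} type is determined by the $\setSort$-indexed family $(\setReCa_\typA)_{\typA \in \setSort}$ via the applicative clause; moreover this clause makes sense for an arbitrary $\setSort$-indexed family of term sets, not only for RC-sets. So it suffices to construct that base-sort family, and I would do so by recursion on the $\relSord$-equivalence classes of sorts, which are well-founded under the strict part $\relSoSt$.

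Fix a class $\sccA$ and suppose $\setReCa_\typC$ has been built for all sorts strictly below $\sccA$ under $\relSoSt$; these lower sets are henceforth constants. On the complete lattice of $\sccA$-indexed families $X$ of term sets, ordered by componentwise inclusion, define an operator $\Phi$ by putting $\trmA \relType \typA$ (with $\typA \in \sccA$) into $\Phi(X)_\typA$ exactly when $\trmA$ is terminating with respect to ${\arrRwrt_\setRule} \cup {\arrComp_X}$ and, for every $\trmA'$ with $\trmA \arrRwrt_\setRule^* \trmA'$ and $\trmA' = \appl{\funA}{\vect{\trmA}{1}{n}{\cdots}}$, each $\trmA_i$ with $i \in \setAcce{\funA}$ is $X$-computable, computability being read off the fixed lower sets together with $X$. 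The structural fact that makes this well-posed is that $\arrComp_X$ never raises the sort: a step $\trmB \arrComp_X \trmA$ has $k \in \setAcce{\funA}$, hence $\typB \relSoPo \typA_k$, so the sort of $\trmA$ is $\relSord$-dominated by that of $\trmB$; and the arguments whose computability it demands have types $\typC_i$ with $\typB \relSoNe \typC_i$, all of whose sorts are $\relSord$-dominated by $\typB$. Consequently, from an $\sccA$-typed term every term met along ${\arrRwrt_\setRule} \cup {\arrComp_X}$ has sort $\relSord$-dominated by $\sccA$, so $\Phi$ consults candidates only at sorts in $\sccA$ or strictly below, which is precisely the data available at this stage.

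It remains to see that $\Phi$ is monotone, which I would derive from a polarity lemma proved by induction on types: the $X$-computability of a term of type $\typB$ is monotone in the candidates at sorts occurring positively in $\typB$ and antitone in those occurring negatively, reflecting the co- and contravariance of the result and argument positions of the applicative clause. The definitions of $\relSoPo$ and $\relSoNe$ are calibrated precisely so that $\typA \relSoPo \typB$ keeps every negatively occurring sort of $\typB$ strictly below $\typA$ while positively occurring ones stay $\relSord$-dominated by $\typA$, and $\typA \relSoNe \typB$ does the reverse. Hence in the second clause the types $\typA_i$ satisfy $\typA \relSoPo \typA_i$, so their negatively occurring sorts are strictly below $\sccA$, i.e. among the fixed data, and the computability of $\trmA_i$ depends on $X$ only positively: this clause is monotone. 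In the first clause the arguments demanded inside $\arrComp_X$ have types $\typC_i$ with $\typB \relSoNe \typC_i$, so their positively occurring sorts lie strictly below the fixed lower part while only negatively occurring ones can reach $\sccA$; thus their computability is antitone in $X$, whence $\arrComp_X$ shrinks as $X$ grows, ${\arrRwrt_\setRule} \cup {\arrComp_X}$ shrinks with it, and the set of terms terminating with respect to it grows. This double reversal makes the first clause monotone as well, so Knaster--Tarski supplies a fixpoint and the induction goes through.

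Finally, the family so obtained is an RC-set. Condition (1) holds because membership requires termination with respect to a relation containing $\arrRwrt_\setRule$. Condition (2) holds because both clauses are inherited along $\arrRwrt_\setRule$: a reduct of a terminating term terminates, and the reducts reachable from a one-step reduct are among those reachable from the term. For condition (3), a neutral $\trmB = \appl{\varA}{\vect{\trmA}{1}{n}{\cdots}}$ admits no root step under ${\arrRwrt_\setRule} \cup {\arrComp_\setReCa}$, since its variable head matches neither an $\arrComp_\setReCa$-pattern nor a rule or calculation redex; every one-step reduct is therefore an internal $\arrRwrt_\setRule$-reduct, assumed to lie in $\setReCa_\typA$ and hence to terminate, so $\trmB$ terminates, and as rewriting never turns $\varA$ into a function symbol no reduct has the form $\appl{\funA}{\vect{\trmA}{1}{n}{\cdots}}$, making the second clause vacuous; thus $\trmB \in \setReCa_\typA$. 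The main obstacle throughout is the monotonicity of $\Phi$: correctly tracking the mixed variance of computability across $\arrComp_\setReCa$ and the accessibility relation, and in particular ensuring that the self-reference among $\relSord$-equivalent sorts remains positive, which is exactly what the alternation in the definitions of $\relSoPo$ and $\relSoNe$ is engineered to guarantee.
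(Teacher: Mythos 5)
The paper does not include its own proof of this theorem but imports it from \cite{fuh:kop:19} (see the footnote pointing to Appendix A of its extended version), and your construction is essentially that proof, as it also stems from \cite{bla:jou:oka:02}: stratify by the well-founded sort ordering, define a monotone operator on the complete lattice of candidate families for one equivalence class at a time, obtain a fixpoint by Knaster--Tarski, and finally check the three RC-set conditions. Your variance analysis is the correct core of the argument --- accessible-argument types keep the current-class sorts in positive position (so the second clause is monotone), while the types demanded inside \( \arrComp_X \) keep them in negative position (so \( \arrComp_X \) is antitone and the termination clause becomes monotone by double reversal), exactly the role the alternation in \( \relSoPo \)/\( \relSoNe \) plays in the cited proof.
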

Thus, given a \( \setReCa \)-computable term \( \appl{\funA}{\vect{\trmA}{1}{n}{\cdots}} \),
all its reducts and the accessible arguments---%
\( \trmA_i \) for \( i \in \setAcce{\funA} \)---%
are also \( \setReCa \)-computable.
We consider \( \setReCa \)-computability throughout this paper.

\section{Static Dependency Pairs for LCSTRSs}\label{sec:dependencypairs}
Originally proposed for unconstrained first-order term rewriting,
the dependency pair approach \cite{art:gie:00}---%
a methodology that analyzes the recursive structure of function calls---%
is at the heart of most modern automatic termination analyzers
for various styles of term rewriting.
There follow multiple higher-order generalizations,
among which we adopt here the \emph{static} branch \cite{kus:sak:07,fuh:kop:19}.
As we shall see in \cref{sec:computability},
this approach extends well to open-world analysis.

In this section,
we adapt static dependency pairs to LCSTRSs.
We start with a notation:

\begin{definition}
  Given an LCSTRS \( \setRule \),
  let \( \shrp{\setFunc} \) be
  \( \setFunc \cup \setComp{\shrp{\funA}}{\funA \in \setDfnd} \)
  where \( \setDfnd \) is the set of defined symbols in \( \setRule \) and
  \( \shrp{\funA} \) is a fresh function symbol for all \( \funA \).
  Let \( \dpso \) be a fresh sort,
  and for each defined symbol
  \( \funA \relType \vect{\typA}{1}{n}{\arrType \cdots \arrType} \arrType \typB \)
  where \( \typB \in \setSort \),
  we assign
  \( \shrp{\funA} \relType \vect{\typA}{1}{n}{\arrType \cdots \arrType} \arrType \dpso \).
  Given a term
  \( \trmA = \appl{\funA}{\vect{\trmA}{1}{n}{\cdots}} \in \setTerm{\setFunc}{\setVari} \)
  where \( \funA \in \setDfnd \),
  let \( \shrp{\trmA} \) denote
  \( \appl{\shrp{\funA}}{\vect{\trmA}{1}{n}{\cdots}} \in \setTerm{\shrp{\setFunc}}{\setVari} \).
\end{definition}

In the presence of logical constraints,
a dependency pair should be more than a pair.
Two extra components---a logical constraint and a set of variables---%
keep track of what substitutions are expected by the dependency pair.

\begin{definition}
  A \emph{static dependency pair} (SDP) is a quadruple
  \( \dpcv{\shrp{\trmB}}{\shrp{\trmA}}{\rulC}{\gtvA} \) where
  \( \shrp{\trmB} \) and \( \shrp{\trmA} \) are terms of type \( \dpso \),
  \( \rulC \) is a logical constraint and
  \( \gtvA \supseteq \setFvar{\rulC} \) is a set of variables
  whose types are theory sorts.
  Given a rewrite rule \( \rwrl{\rulL}{\rulR}{\rulC} \),
  let \( \setStDP{\rwrl{\rulL}{\rulR}{\rulC}} \) denote
  the set of SDPs of form
  \( \dpcv{\appl{\shrp{\rulL}}{\vect{\varA}{1}{m}{\cdots}}}{\appB{\shrp{\funB}}{\vect{\trmA}{1}{q}{\cdots}}{\vect{\varB}{q + 1}{n}{\cdots}}}{\rulC}{\setFvar{\rulC} \cup (\setFvar{\rulR} \setminus \setFvar{\rulL})} \)
  such that
  \begin{bracketenumerate}
  \item \( \shrp{\rulL} \relType \vect{\typA}{1}{m}{\arrType \cdots \arrType} \arrType \dpso \)
    while \( \varA_i \relType \typA_i \) is a fresh variable for all \( i \),
  \item \( \appl{\rulR}{\vect{\varA}{1}{m}{\cdots}} \relSupt \appl{\funB}{\vect{\trmA}{1}{q}{\cdots}} \)
    for \( \funB \in \setDfnd \), and
  \item \( \shrp{\funB} \relType \vect{\typB}{1}{n}{\arrType \cdots \arrType} \arrType \dpso \)
    while \( \varB_i \relType \typB_i \) is a fresh variable for all \( i > q \).
  \end{bracketenumerate}
  Let \( \setStDP{\setRule} \) be
  \( \bigcup_{\rwrl{\rulL}{\rulR}{\rulC} \in \setRule} \setStDP{\rwrl{\rulL}{\rulR}{\rulC}} \).
  A substitution \( \subA \) is said to \emph{respect} an SDP
  \( \dpcv{\shrp{\trmB}}{\shrp{\trmA}}{\rulC}{\gtvA} \) if
  \( \subA(\varA) \) is a ground theory term for all \( \varA \in \gtvA \) and
  \( \mapNtrp{\rulC \subA} = \algT \).
\end{definition}
The component \( \gtvA \) is new compared to \cite{kop:13}.
We shall see its usefulness in \cref{subsec:theoryarg},
as it gives us more freedom to manipulate dependency pairs.
We introduce two shorthand notations for SDPs:\ %
\( \dpwc{\shrp{\trmB}}{\shrp{\trmA}}{\rulC} \) for
\( \dpcv{\shrp{\trmB}}{\shrp{\trmA}}{\rulC}{\setFvar{\rulC}} \),
and \( \dpnc{\shrp{\trmB}}{\shrp{\trmA}} \) for
\( \dpcv{\shrp{\trmB}}{\shrp{\trmA}}{\booT}{\emptyset} \).

\begin{example}\label{ex:gcdlist}
  Consider the system \( \setRule \) consisting of the following rewrite rules, in which
  \( \gcdl \relType \tyLI \arrType \tyIn \),
  \( \fold \relType (\tyIn \arrType \tyIn \arrType \tyIn) \arrType \tyIn \arrType \tyLI \arrType \tyIn \) and
  \( \gcdf \relType \tyIn \arrType \tyIn \arrType \tyIn \).
  \begin{gather*}
    \gcdl \arrRule \appB{\fold}{\gcdf}{0}
    \qquad
    \appT{\fold}{f}{y}{\lstN} \arrRule y
    \qquad
    \appT{\fold}{f}{y}{(\appB{\lstC}{x}{l})} \arrRule \appB{f}{x}{(\appT{\fold}{f}{y}{l})}\\
    \begin{aligned}
      \appB{\gcdf}{m}{n} &\arrRule \appB{\gcdf}{(- m)}{n}       &&[m < 0] &
      \appB{\gcdf}{m}{n} &\arrRule \appB{\gcdf}{m}{(- n)}       &&[n < 0]\\
      \appB{\gcdf}{m}{0} &\arrRule m                            &&[m \ge 0] &
      \appB{\gcdf}{m}{n} &\arrRule \appB{\gcdf}{n}{(m \bmod n)} &&[m \ge 0 \wedge n > 0]
    \end{aligned}
  \end{gather*}
  The set \( \setStDP{\setRule} \) consists of
  \begin{enumerate*}
  \item\label{itm:gcdlistGcdlGcdf}%
    \( \dpnc{\appl{\shrp{\gcdl}}{l^\prime}}{\appB{\shrp{\gcdf}}{m^\prime}{n^\prime}} \),
  \item\label{itm:gcdlistGcdlFold}%
    \( \dpnc{\appl{\shrp{\gcdl}}{l^\prime}}{\appT{\shrp{\fold}}{\gcdf}{0}{l^\prime}} \),
  \item\label{itm:gcdlistFoldFold}%
    \( \dpnc{\appT{\shrp{\fold}}{f}{y}{(\appB{\lstC}{x}{l})}}{\appT{\shrp{\fold}}{f}{y}{l}} \),
  \item\label{itm:gcdlistGcdfGcdfM}%
    \( \dpwc{\appB{\shrp{\gcdf}}{m}{n}}{\appB{\shrp{\gcdf}}{(- m)}{n}}{m < 0} \),
  \item\label{itm:gcdlistGcdfGcdfN}%
    \( \dpwc{\appB{\shrp{\gcdf}}{m}{n}}{\appB{\shrp{\gcdf}}{m}{(- n)}}{n < 0} \), and
  \item\label{itm:gcdlistGcdfGcdfMod}%
    \( \dpwc{\appB{\shrp{\gcdf}}{m}{n}}{\appB{\shrp{\gcdf}}{n}{(m \bmod n)}}{m \ge 0 \wedge n > 0} \).
  \end{enumerate*}
  Note that in \eiLabel{\ref{itm:gcdlistGcdlGcdf}},
  \( m^\prime \) and \( n^\prime \) occur on the right-hand side of \( \arrDePa \)
  but not on the left
  while they are \emph{not} required to be instantiated to ground theory terms
  (\( \gtvA = \emptyset \)).
  This is normal for SDPs \cite{fuh:kop:19,kus:sak:07}.
\end{example}

Termination analysis via SDPs is based on the notion of a chain:

\begin{definition}
  Given a set \( \setDePa \) of SDPs and
  a set \( \setRule \) of rewrite rules,
  a \emph{\( (\setDePa, \setRule) \)-chain} is a (finite or infinite) sequence
  \( (\dpcv{\shrp{\trmB_0}}{\shrp{\trmA_0}}{\rulC_0}{\gtvA_0}, \subA_0), (\dpcv{\shrp{\trmB_1}}{\shrp{\trmA_1}}{\rulC_1}{\gtvA_1}, \subA_1), \ldots \)
  such that for all \( i \),
  \( \dpcv{\shrp{\trmB_i}}{\shrp{\trmA_i}}{\rulC_i}{\gtvA_i} \in \setDePa \),
  \( \subA_i \) is a substitution which respects
  \( \dpcv{\shrp{\trmB_i}}{\shrp{\trmA_i}}{\rulC_i}{\gtvA_i} \), and
  \( \shrp{\trmA_{i - 1}} \subA_{i - 1} \arrRwrt_\setRule^* \shrp{\trmB_i} \subA_i \) if \( i > 0 \).
  The above \( (\setDePa, \setRule) \)-chain is called \emph{computable} if
  \( \trmC \subA_i \) is \( \setReCa \)-computable
  for all \( i \) and \( \trmC \) such that \( \trmA_i \relSupp \trmC \).
\end{definition}

\begin{example}
  Following \cref{ex:gcdlist},
  \( (\text{\ref{itm:gcdlistGcdlGcdf}}, [\sbst{l}{\lstN}, \sbst{m}{42}, \sbst{n}{24}]),
     (\text{\ref{itm:gcdlistGcdfGcdfMod}}, [\sbst{m}{42}, \sbst{n}{24}]),
     (\text{\ref{itm:gcdlistGcdfGcdfMod}}, [\sbst{m}{24}, \sbst{n}{18}]),
     (\text{\ref{itm:gcdlistGcdfGcdfMod}}, [\sbst{m}{18}, \sbst{n}{6}]) \)
  is a computable \( (\setStDP{\setRule}, \setRule) \)-chain.
\end{example}

The key to establishing termination is the following result:

\begin{restatable}{theorem}{chainTermination}\label{thm:chainTermination}
  An AFP system \( \setRule \) is terminating
  if there exists no infinite computable \( (\setStDP{\setRule}, \setRule) \)-chain.
\end{restatable}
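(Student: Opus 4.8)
The plan is to work with the reducibility candidate set \( \setReCa \) supplied by \cref{thm:existsC}, using that every \( \setReCa \)-computable term is terminating. Assuming \( \setRule \) admits no infinite computable \( (\setStDP{\setRule}, \setRule) \)-chain, I would prove the stronger claim that \emph{every} term is \( \setReCa \)-computable, from which termination of \( \setRule \) is immediate. By the standard Tait--Girard style argument this reduces to one substitution lemma: for every term \( \trmA \) and every substitution \( \subA \) sending each variable to a \( \setReCa \)-computable term, \( \trmA\subA \) is \( \setReCa \)-computable. The variable and application cases are routine---the application case is immediate from the definition of computability, and a variable is computable because a neutral term all of whose reducts are computable is itself computable---so the entire difficulty concentrates in the case \( \trmA\subA = \appl{\funA}{\vect{\trmB}{1}{n}{\cdots}} \) with \( \funA \in \setFunc \) and each \( \trmB_i \) already \( \setReCa \)-computable.

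For this case I would argue by contradiction with the no-chain hypothesis. Suppose some such \( \appl{\funA}{\vect{\trmB}{1}{n}{\cdots}} \), which may be taken fully applied and of base type, fails to lie in \( \setReCa \). By the characterization in \cref{thm:existsC} this either makes it non-terminating with respect to \( {\arrRwrt_\setRule} \cup {\arrComp_\setReCa} \) or yields a reduct with a non-\( \setReCa \)-computable accessible argument; organizing the argument as a well-founded induction that always descends to a minimal non-\( \setReCa \)-computable subterm lets me reduce to the former. I then build an infinite computable chain one step at a time. Since each \( \trmB_i \) is \( \setReCa \)-computable, it is terminating for \( {\arrRwrt_\setRule} \cup {\arrComp_\setReCa} \), so reductions taking place strictly inside the arguments cannot continue forever; the infinite reduction must therefore eventually fire a rule at the root, \( \appl{\funA}{\vect{\trmB}{1}{n}{\cdots}} \arrRwrt_\setRule^* \rulL\subA \arrRwrt_\setRule \rulR\subA \) for some \( \rwrl{\rulL}{\rulR}{\rulC} \in \setRule \) defining \( \funA \). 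From \( \rulL \), from a maximally applied subterm \( \appl{\funB}{\cdots} \) of \( \rulR \) (padded with fresh variables) that carries the divergence onward, and from \( \rulC \), I read off an element of \( \setStDP{\setRule} \) together with a substitution whose extension on the fresh variables drives the next divergent term. Iterating produces the infinite chain and hence the contradiction.

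Two observations make this go through and pinpoint the hard part. First, because \( \setRule \) is AFP, every non-theory variable shared between \( \rulL \) and \( \rulR \) sits in an \emph{accessible} position of \( \rulL \), so the matching substitution \( \subA \) maps it to a term that is \( \relAcce \)-below some computable \( \trmB_i \) and hence \( \setReCa \)-computable by \cref{thm:existsC}, while the constraint variables and the fresh theory variables of \( \rulR \) are instantiated to values. This is exactly what simultaneously guarantees that \( \subA \) \emph{respects} each generated SDP and that every proper subterm \( \trmC \) of its right-hand side satisfies ``\( \trmC\subA \) is \( \setReCa \)-computable'', i.e.\ that the chain under construction is computable. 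Second, the relation \( \arrComp_\setReCa \) built into \( \setReCa \) neutralizes the only other potential source of divergence---applying an accessible higher-order argument to further arguments---since such terms are \( \setReCa \)-computable and therefore terminating. The main obstacle is maintaining, all along the infinite reduction, the invariant that the tracked subterm is always a defined symbol applied to \( \setReCa \)-computable arguments: this single invariant is what makes each step a legitimate SDP \emph{and} keeps the chain computable, and preserving it across a root rewrite step is precisely where AFP and the accessibility clause of \cref{thm:existsC} are indispensable. The logical constraints add the bookkeeping that the extra component \( \gtvA \) of an SDP was designed to absorb.
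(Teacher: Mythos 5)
Your proposal is correct, and its engine is the same as the paper's: from a defined symbol applied to \( \setReCa \)-computable arguments whose application is not \( \setReCa \)-computable, extract a head rewrite step, pick an innermost subterm of the rule's right-hand side whose instantiation remains uncomputable, use AFP together with the accessibility clause of \cref{thm:existsC} to exclude a variable head, pad with fresh variables to obtain an SDP and a respecting substitution, and iterate---this is precisely the content of \cref{lem:chainStep} and \cref{cor:chainExistence}. The packaging differs in two ways. First, you strengthen to ``every term is \( \setReCa \)-computable'' in Tait--Girard style, where the paper argues contrapositively from a minimal non-computable subterm of a hypothetical non-terminating term; these entry points are interchangeable. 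Second, and more substantively, you first convert non-computability of a base-type term into non-termination with respect to \( {\arrRwrt_\setRule} \cup {\arrComp_\setReCa} \) and then chase an infinite reduction to locate the head step, whereas the paper reads the head step off the characterization of \( \setReCa \) directly (if every reduct were an argument-wise reduct or a value, the term would be in \( \setReCa \)). Your reduction lemma is valid---it follows by well-founded induction along \( {\arrRwrt_\setRule} \cup {\arrComp_\setReCa} \), since an accessible argument of a reduct, applied to computable terms, is again a combined-relation reduct of the original term---but it is an extra lemma the paper does not need. Two points want tightening. The computability of the proper subterms of each SDP's right-hand side, which is what makes the chain computable and preserves your invariant, comes from choosing the tracked subterm of \( \rulR \) \emph{minimal}; it does not follow, as your first observation claims, from AFP making the shared variables computable (AFP only disposes of the variable-headed case, where the tracked subterm would otherwise be computable by minimality of its arguments). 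And the minimal uncomputable subterm could a priori be headed by a constructor or another undefined symbol, a case you never exclude; the paper dispatches it with the separate small lemma that undefined function symbols are \( \setReCa \)-computable.
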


The proof (see \cref{app:proofChainTermination})
is very similar to that for unconstrained SDPs \cite{kus:sak:07,fuh:kop:19}.

\section{The Constrained DP Framework}\label{sec:dpframework}
In this section,
we present several techniques based on SDPs,
each as a class of \emph{DP processors};
formally, we call this collection of DP processors
the \emph{constrained (static) DP framework}.
In general,
a DP framework \cite{gie:thi:sch:05,fuh:kop:19}
constitutes a broad method for termination and non-termination.
The presentation here is not complete---%
for example, we do not consider non-termination---%
and a complete one is beyond the scope of this paper.
We rather focus on the most essential DP processors and
those newly designed to handle logical constraints.

For presentation,
we fix an LCSTRS \( \setRule \).

\begin{definition}
  A \emph{DP problem} is a set \( \setDePa \) of SDPs.
  A DP problem \( \setDePa \) is called \emph{finite} if
  there exists no infinite computable \( (\setDePa, \setRule) \)-chain.
  A \emph{DP processor} is a partial mapping
  which possibly assigns to a DP problem a set of DP problems.
  A DP processor \( \dppA \) is called \emph{sound} if
  a DP problem \( \setDePa \) is finite whenever
  \( \dppA(\setDePa) \) consists only of finite DP problems.
\end{definition}

Following \cref{thm:chainTermination},
in order to establish the termination of an AFP system \( \setRule \),
it suffices to show that \( \setStDP{\setRule} \) is a finite DP problem.
Given a collection of sound DP processors,
we have the following procedure:\ %
\begin{enumerate*}
\item \( Q \coloneq \setBrac{\setStDP{\setRule}} \);
\item while \( Q \) contains a DP problem \( \setDePa \)
  to which some sound DP processor \( \dppA \) is applicable,
  \( Q \coloneq (Q \setminus \setBrac{\setDePa}) \cup \dppA(\setDePa) \).
\end{enumerate*}
If this procedure ends with \( Q = \emptyset \),
we can conclude that \( \setRule \) is terminating.

\subsection{The DP Graph and Its Approximations}\label{subsec:dpgraph}
The interconnection of SDPs via chains gives rise to a graph,
namely, the DP graph \cite{art:gie:00},
which models the reachability between dependency pairs.
Since this graph is not computable in general,
we follow the usual convention and consider its (over-)approximations:

\begin{definition}
  Given a set \( \setDePa \) of SDPs,
  a \emph{graph approximation} \( (\grhA_\homA, \homA) \) for \( \setDePa \) consists of
  a finite directed graph \( \grhA_\homA \) and a mapping \( \homA \)
  which assigns to each SDP in \( \setDePa \) a vertex of \( \grhA_\homA \)
  so that there is an edge from \( \homA(\sdpA_0) \) to \( \homA(\sdpA_1) \)
  whenever \( (\sdpA_0, \subA_0), (\sdpA_1, \subA_1) \) is a \( (\setDePa, \setRule) \)-chain
  for some substitutions \( \subA_0 \) and \( \subA_1 \).
\end{definition}
Here \( (\grhA_\homA, \homA) \) approximates the true DP graph
by allowing \( \homA \) to assign a single vertex
to multiple (possibly, infinitely many) SDPs,
and by allowing \( \grhA_\homA \) to contain
an edge from \( \homA(\sdpA_0) \) to \( \homA(\sdpA_1) \)
even if \( \sdpA_0 \) and \( \sdpA_1 \) are not connected
by any \( (\setDePa, \setRule) \)-chain.
In practice, we typically deal with only a finite set \( \setDePa \) of SDPs,
in which case we usually take a bijection for \( \homA \).

This graph structure is useful
because we can leverage it to decompose the DP problem.

\begin{definition}
  Given a DP problem \( \setDePa \),
  a \emph{graph processor} computes a graph approximation \( (\grhA_\homA, \homA) \) for \( \setDePa \) and
  the strongly connected components (SCCs) of \( \grhA_\homA \),
  then returns
  \( \setComp{\setComp{\sdpA \in \setDePa}{\homA(\sdpA) \text{ belongs to } \sccA}}{\sccA \text{ is a non-trivial SCC of } \grhA_\homA} \).
\end{definition}

\begin{example}\label{ex:gcdlistGraph}
  Following \cref{ex:gcdlist},
  a (tight) graph approximation for \( \setStDP{\setRule} \) is in \cref{fig:graph}.
  If a graph processor produces this graph as the graph approximation,
  it will return the set of DP problems
  \( \setBrac{\setBrac{\text{\ref{itm:gcdlistFoldFold}}},
              \setBrac{\text{\ref{itm:gcdlistGcdfGcdfM}}, \text{\ref{itm:gcdlistGcdfGcdfN}}},
              \setBrac{\text{\ref{itm:gcdlistGcdfGcdfMod}}}} \).
\end{example}

\begin{figure}[h]
  \centering
  \begin{tikzpicture}
    \node (GcdlGcdf)    [shape=circle,draw]                    {\ref{itm:gcdlistGcdlGcdf}};
    \node (GcdlFold)    [shape=circle,draw,right=of GcdlGcdf]  {\ref{itm:gcdlistGcdlFold}};
    \node (FoldFold)    [shape=circle,draw,right=of GcdlFold]  {\ref{itm:gcdlistFoldFold}};
    \node (GcdfGcdfN)   [shape=circle,draw,below=of GcdlGcdf]  {\ref{itm:gcdlistGcdfGcdfN}};
    \node (GcdfGcdfM)   [shape=circle,draw,left=of GcdfGcdfN]  {\ref{itm:gcdlistGcdfGcdfM}};
    \node (GcdfGcdfMod) [shape=circle,draw,right=of GcdfGcdfN] {\ref{itm:gcdlistGcdfGcdfMod}};
    \path
    (GcdlGcdf)    edge[->]            (GcdfGcdfM)
    (GcdlGcdf)    edge[->]            (GcdfGcdfN)
    (GcdlGcdf)    edge[->]            (GcdfGcdfMod)
    (GcdlFold)    edge[->]            (FoldFold)
    (FoldFold)    edge[->,loop right] (FoldFold)
    (GcdfGcdfM)   edge[->,bend right] (GcdfGcdfN)
    (GcdfGcdfM)   edge[->,bend right] (GcdfGcdfMod)
    (GcdfGcdfN)   edge[->,bend right] (GcdfGcdfM)
    (GcdfGcdfN)   edge[->]            (GcdfGcdfMod)
    (GcdfGcdfMod) edge[->,loop right] (GcdfGcdfMod);
  \end{tikzpicture}
  \caption{A graph approximation for \( \setStDP{\setRule} \) from \cref{ex:gcdlist}.}
  \label{fig:graph}
\end{figure}

\subparagraph*{Implementation.}
To compute a graph approximation,
we adapt the common \textsc{Cap} approach \cite{gie:thi:sch:fal:06,thi:07} and
take theories into account.
Considering theories allows us, for example,
\emph{not} to have an edge
from \eiLabel{\ref{itm:gcdlistGcdfGcdfMod}} to \eiLabel{\ref{itm:gcdlistGcdfGcdfM}}
in \cref{fig:graph}.

We assume given a finite set of SDPs and
let \( \homA \) be a bijection.
Whether there is an edge
from \( \homA(\dpcv{\shrp{\trmB_0}}{\shrp{\trmA_0}}{\rulC_0}{\gtvA_0}) \)
to \( \homA(\dpcv{\shrp{\trmB_1}}{\shrp{\trmA_1}}{\rulC_1}{\gtvA_1}) \)---%
we rename variables if necessary to avoid name collisions between the two SDPs---%
is determined by the satisfiability (which we check by an SMT solver) of
\( \rulC_0 \wedge \rulC_1 \wedge \tcap(\shrp{\trmA_0}, \shrp{\trmB_1}) \)
where \( \tcap(\trmC, \trmD) \) is defined as follows:
\begin{itemize}
\item If \( \trmC = \appl{\funA}{\vect{\trmC}{1}{n}{\cdots}} \)
  where \( \funA \in \shrp{\setFunc} \) and
  no rewrite rule in \( \setRule \) takes the form
  \( \rwrl{\appl{\funA}{\vect{\rulL}{1}{k}{\cdots}}}{\rulR}{\rulC} \) for \( k \le n \),
  we define \( \tcap(\trmC, \trmD) \) in two cases:
  \begin{bracketenumerate}
  \item \( \tcap(\trmC, \trmD) = \tcap(\trmC_1, \trmD_1) \wedge \cdots \wedge \tcap(\trmC_n, \trmD_n) \)
    if \( \trmD = \appl{\funA}{\vect{\trmD}{1}{n}{\cdots}} \).
  \item \( \tcap(\trmC, \trmD) = \booF \)
    if \( \trmD = \appl{\funB}{\vect{\trmD}{1}{m}{\cdots}} \)
    for some function symbol \( \funB \) other than \( \funA \), and
    either \( \funA \) is not a theory symbol
    or \( \funB \) is not a value.
  \end{bracketenumerate}
\item Suppose \( \tcap(\trmC, \trmD) \) is not defined above;
  \( \tcap(\trmC, \trmD) = (\trmC \relThEq \trmD) \)
  if \( \trmC \in \setTerm{\setTheo{\setFunc}}{\gtvA_0} \) has a base type
  and \( \trmD \) is a theory term in which the type of each variable is a theory sort,
  and \( \tcap(\trmC, \trmD) = \booT \) otherwise.
\end{itemize}
See \cref{app:proofDpframeworkSoundness} for the proof
that this approach produces a graph approximation.

Then strongly connected components can be computed by Tarjan's algorithm \cite{tar:72}.

\begin{example}
  In \cref{fig:graph}, since
  \( (m_0 \ge 0 \wedge n_0 > 0) \wedge m_1 < 0 \wedge (n_0 \relThEq m_1 \wedge m_0 \bmod n_0 \relThEq n_1) \)
  is unsatisfiable,
  there is no edge
  from \eiLabel{\ref{itm:gcdlistGcdfGcdfMod}} to \eiLabel{\ref{itm:gcdlistGcdfGcdfM}}.
\end{example}

\subsection{The Subterm Criterion}
The subterm criterion \cite{hir:mid:04,kus:sak:07}
handles structural recursion and
allows us to remove decreasing SDPs
without considering rewrite rules in \( \setRule \).
We start with defining projections:

\begin{definition}\label{def:projection}
  Let \( \setHead{\setDePa} \) denote the set of function symbols
  heading either side of an SDP in \( \setDePa \).
  A \emph{projection} \( \prjA \) for a set \( \setDePa \) of SDPs
  is a mapping from \( \setHead{\setDePa} \) to integers such that
  \( 1 \le \prjA(\shrp{\funA}) \le n \) if
  \( \shrp{\funA} \relType \vect{\typA}{1}{n}{\arrType \cdots \arrType} \arrType \dpso \).
  Let \( \bar{\prjA}(\appl{\shrp{\funA}}{\vect{\trmA}{1}{n}{\cdots}}) \) denote
  \( \trmA_{\prjA(\shrp{\funA})} \).
\end{definition}

A projection chooses an argument position for each relevant function symbol
so that arguments at those positions do not increase in a chain.

\begin{definition}
  Given a set \( \setDePa \) of SDPs,
  a projection \( \prjA \) is said to
  \emph{\( \relSupp \)-orient} a subset \( \setDePa^\prime \) of \( \setDePa \) if
  \( \bar{\prjA}(\shrp{\trmB}) \relSupp \bar{\prjA}(\shrp{\trmA}) \) for all
  \( \dpcv{\shrp{\trmB}}{\shrp{\trmA}}{\rulC}{\gtvA} \in \setDePa^\prime \) and
  \( \bar{\prjA}(\shrp{\trmB}) = \bar{\prjA}(\shrp{\trmA}) \) for all
  \( \dpcv{\shrp{\trmB}}{\shrp{\trmA}}{\rulC}{\gtvA} \in \setDePa \setminus \setDePa^\prime \).
  A \emph{subterm criterion processor}
  assigns to a DP problem \( \setDePa \)
  the singleton \( \setBrac{\setDePa \setminus \setDePa^\prime} \)
  for some non-empty subset \( \setDePa^\prime \) of \( \setDePa \) such that
  there exists a projection for \( \setDePa \) which \( \relSupp \)-orients \( \setDePa^\prime \).
\end{definition}

\begin{example}
  Following \cref{ex:gcdlistGraph},
  a subterm criterion processor is applicable to \( \setBrac{\text{\ref{itm:gcdlistFoldFold}}} \).
  Let \( \prjA(\shrp{\fold}) \) be \( 3 \) so that
  \( \bar{\prjA}(\appT{\shrp{\fold}}{f}{y}{(\appB{\lstC}{x}{l})})
  = \appB{\lstC}{x}{l} \relSupp l
  = \bar{\prjA}(\appT{\shrp{\fold}}{f}{y}{l}) \).
  The processor returns \( \setBrac{\emptyset} \),
  and the empty DP problem can (trivially) be removed by a graph processor.
\end{example}

\subparagraph*{Implementation.}
The search for a suitable projection can be done through SMT and is standard:\ %
we introduce an integer variable \( N_{\shrp{\funA}} \)
that represents \( \prjA(\shrp{\funA}) \)
for each \( \shrp{\funA} \in \setHead{\setDePa} \),
and a boolean variable \( \mathsf{strict}_\sdpA \)
for each \( \sdpA \in \setDePa \);
then we encode the requirement per SDP.

\subsection{Integer Mappings}
The subterm criterion deals with recursion over the structure of terms,
but not recursion over, say, integers,
which requires us to utilize the information in logical constraints.
In this subsection,
we assume that \( \tyIn \in \setTheo{\setSort} \) and
\( \setTheo{\setFunc} \supseteq \setBrac{{\ge}, {>}, {\wedge}} \),
where \( {\ge} \relType \tyIn \arrType \tyIn \arrType \tyBo \),
\( {>} \relType \tyIn \arrType \tyIn \arrType \tyBo \) and
\( {\wedge} \relType \tyBo \arrType \tyBo \arrType \tyBo \)
are interpreted in the standard way.

\begin{definition}
  Given a set \( \setDePa \) of SDPs,
  for all \( \shrp{\funA} \in \setHead{\setDePa} \) (see \cref{def:projection}) where
  \( \shrp{\funA} \relType \vect{\typA}{1}{n}{\arrType \cdots \arrType} \arrType \dpso \),
  let \( \setFreI{\shrp{\funA}} \) be the subset of \( \setBrac{1, \ldots, n} \) such that
  \( i \in \setFreI{\shrp{\funA}} \) if and only if
  \( \typA_i \in \setTheo{\setSort} \) and
  the \( i \)-th argument of any occurrence of \( \shrp{\funA} \) in an SDP
  \( \dpcv{\shrp{\trmB}}{\shrp{\trmA}}{\rulC}{\gtvA} \in \setDePa \)
  is in \( \setTerm{\setTheo{\setFunc}}{\gtvA} \).
  Let \( \setFreV{\shrp{\funA}} \) be a set of fresh variables
  \( \setComp{\varA_{\shrp{\funA}, i}}{i \in \setFreI{\shrp{\funA}}} \)
  where \( \varA_{\shrp{\funA}, i} \relType \typA_i \) for all \( i \).
  An \emph{integer mapping} \( \inmA \) for \( \setDePa \) is a mapping
  from \( \setHead{\setDePa} \) to theory terms such that
  for all \( \shrp{\funA} \),
  \( \inmA(\shrp{\funA}) \relType \tyIn \) and
  \( \setFvar{\inmA(\shrp{\funA})} \subseteq \setFreV{\shrp{\funA}} \).
  Let \( \bar{\inmA}(\appl{\shrp{\funA}}{\vect{\trmA}{1}{n}{\cdots}}) \) denote
  \( \inmA(\shrp{\funA})[\sbst{\varA_{\shrp{\funA}, i}}{\trmA_i}]_{i \in \setFreI{\shrp{\funA}}} \).
\end{definition}

With integer mappings, we can handle decreasing integer values.

\begin{definition}
  Given a set \( \setDePa \) of SDPs,
  an integer mapping \( \inmA \) is said to
  \emph{\( > \)-orient} a subset \( \setDePa^\prime \) of \( \setDePa \)
  if \( \rulC \models \bar{\inmA}(\shrp{\trmB}) \ge 0 \wedge \bar{\inmA}(\shrp{\trmB}) > \bar{\inmA}(\shrp{\trmA}) \)
  for all \( \dpcv{\shrp{\trmB}}{\shrp{\trmA}}{\rulC}{\gtvA} \in \setDePa^\prime \),
  and \( \rulC \models \bar{\inmA}(\shrp{\trmB}) \ge \bar{\inmA}(\shrp{\trmA}) \)
  for all \( \dpcv{\shrp{\trmB}}{\shrp{\trmA}}{\rulC}{\gtvA} \in \setDePa \setminus \setDePa^\prime \),
  where \( \rulC \models \rulC^\prime \) denotes that
  \( \mapNtrp{\rulC \subA} = \algT \) implies \( \mapNtrp{\rulC^\prime \subA} = \algT \)
  for each substitution \( \subA \)
  which maps variables in \( \setFvar{\rulC} \cup \setFvar{\rulC^\prime} \) to values.
  An \emph{integer mapping processor} assigns to a DP problem \( \setDePa \)
  the singleton \( \setBrac{\setDePa \setminus \setDePa^\prime} \)
  for some non-empty subset \( \setDePa^\prime \) of \( \setDePa \) such that
  there exists an integer mapping for \( \setDePa \) which \( > \)-orients \( \setDePa^\prime \).
\end{definition}

\begin{example}
  Following \cref{ex:gcdlistGraph}, an integer mapping processor is applicable to
  \( \setBrac{\text{\ref{itm:gcdlistGcdfGcdfMod}}} \).
  Let \( \inmA(\shrp{\gcdf}) \) be \( \varA_{\shrp{\gcdf}, 2} \) so that
  \( \bar{\inmA}(\appB{\shrp{\gcdf}}{m}{n}) = n \),
  \( \bar{\inmA}(\appB{\shrp{\gcdf}}{n}{(m \bmod n)}) = m \bmod n \) and
  \( m \ge 0 \wedge n > 0 \models n \ge 0 \wedge n > m \bmod n \).
  The processor returns \( \setBrac{\emptyset} \),
  and the empty DP problem can (trivially) be removed by a graph processor.
\end{example}

\subparagraph*{Implementation.}
There are several ways to implement integer mapping processors.
In our implementation,
we generate a number of ``interpretation candidates'' and
use an SMT encoding to select for each \( \shrp{\funA} \in \setHead{\setDePa} \)
one candidate that satisfies the requirements.
Candidates include forms such as \( \inmA(\shrp{\funA}) = \varA_{\shrp{\funA}, i} \) and
those that are generated from the SDPs' logical constraints---e.g.,
given \( \dpwc{\appB{\shrp{\mathsf{f}}}{x}{y}}{\appB{\shrp{\mathsf{g}}}{x}{(y+1)}}{y < x} \),
we generate \( \inmA(\shrp{\mathsf{f}}) = \varA_{\shrp{\mathsf{f}}, 1} - \varA_{\shrp{\mathsf{f}}, 2} - 1 \)
because \( y < x \) implies \( x - y - 1 \ge 0 \).

\subsection{Theory Arguments}\label{subsec:theoryarg}
Integer mapping processors have a clear limitation:\ %
what if some key variables do not occur in the set \( \gtvA \)?
This is observed in the remaining DP problem
\( \setBrac{\text{\ref{itm:gcdlistGcdfGcdfM}}, \text{\ref{itm:gcdlistGcdfGcdfN}}} \)
from \cref{ex:gcdlist}.
It is clearly finite but
no integer mapping processor is applicable
since \( \setFreI{\shrp{\gcdf}} = \emptyset \).

This restriction exists for a reason.
Variables that are not guaranteed to be instantiated to theory terms
may be instantiated to \emph{non-deterministic} terms---e.g.,
\( \setBrac{\dpwc{\appT{\shrp{\mathsf{f}}}{x}{y}{z}}{\appT{\shrp{\mathsf{f}}}{x}{(x + 1)}{(x - 1)}}{y < z}} \)
is not a finite DP problem if
\( \setRule \supseteq \setBrac{\appB{\mathsf{c}}{x}{y} \arrRwrt x, \appB{\mathsf{c}}{x}{y} \arrRwrt y} \).

The problem of \( \setBrac{\text{\ref{itm:gcdlistGcdfGcdfM}}, \text{\ref{itm:gcdlistGcdfGcdfN}}} \)
arises because each SDP focuses on only one argument:\ %
for example, the logical constraint (with the component \( \gtvA \))
of \eiLabel{\ref{itm:gcdlistGcdfGcdfN}} only concerns \( n \) so
in principle we cannot assume anything about \( m \).
Yet, if \eiLabel{\ref{itm:gcdlistGcdfGcdfN}}
follows \eiLabel{\ref{itm:gcdlistGcdfGcdfM}} in a chain,
we \emph{can} derive that \( m \) must be instantiated to a ground theory term
(we call such an argument a \emph{theory argument}).
We explore a way of propagating this information.

\begin{definition}
  A \emph{theory argument (position) mapping} \( \tamA \) for a set \( \setDePa \) of SDPs
  is a mapping from \( \setHead{\setDePa} \) (see \cref{def:projection})
  to subsets of \( \setInte \) such that
  \( \tamA(\shrp{\funA}) \subseteq \setComp{1 \le i \le m}{\typA_i \in \setTheo{\setSort}} \)
  if \( \shrp{\funA} \relType \vect{\typA}{1}{m}{\arrType \cdots \arrType} \arrType \dpso \),
  \( \trmB_i \) is a theory term and
  the type of each variable in \( \setFvar{\trmB_i} \) is a theory sort for all
  \( \dpcv{\appl{\shrp{\funA}}{\vect{\trmB}{1}{m}{\cdots}}}{\shrp{\trmA}}{\rulC}{\gtvA} \in \setDePa \) and
  \( i \in \tamA(\shrp{\funA}) \),
  and \( \trmA_j \) is a theory term and
  \( \setFvar{\trmA_j} \subseteq \gtvA \cup \bigcup_{i \in \tamA(\shrp{\funA})} \setFvar{\trmB_i} \) for all
  \( \dpcv{\appl{\shrp{\funA}}{\vect{\trmB}{1}{m}{\cdots}}}{\appl{\shrp{\funB}}{\vect{\trmA}{1}{n}{\cdots}}}{\rulC}{\gtvA} \in \setDePa \) and
  \( j \in \tamA(\shrp{\funB}) \).
  Let \( \bar{\tamA}(\dpcv{\appl{\shrp{\funA}}{\vect{\trmB}{1}{m}{\cdots}}}{\shrp{\trmA}}{\rulC}{\gtvA}) \) denote
  \( \dpcv{\appl{\shrp{\funA}}{\vect{\trmB}{1}{m}{\cdots}}}{\shrp{\trmA}}{\rulC}{\gtvA \cup \bigcup_{i \in \tamA(\shrp{\funA})} \setFvar{\trmB_i}} \).
\end{definition}

By a theory argument mapping,
we choose a subset of the given set of SDPs
from which the theory argument information is propagated.

\begin{definition}
  Given a set \( \setDePa \) of SDPs,
  a theory argument mapping \( \tamA \) is said to
  \emph{fix} a subset \( \setDePa^\prime \) of \( \setDePa \) if
  \( \bigcup_{i \in \tamA(\shrp{\funA})} \setFvar{\trmA_i} \subseteq \gtvA \) for all
  \( \dpcv{\shrp{\trmB}}{\appl{\shrp{\funA}}{\vect{\trmA}{1}{n}{\cdots}}}{\rulC}{\gtvA} \in \setDePa^\prime \).
  A \emph{theory argument processor}
  assigns to a DP problem \( \setDePa \)
  the pair \( \setBrac{\setComp{\bar{\tamA}(\sdpA)}{\sdpA \in \setDePa}, \setDePa \setminus \setDePa^\prime} \)
  for some non-empty subset \( \setDePa^\prime \) of \( \setDePa \) such that
  there exists a theory argument mapping for \( \setDePa \) which fixes \( \setDePa^\prime \).
\end{definition}

\begin{example}
  Following \cref{ex:gcdlistGraph},
  a theory argument processor is applicable to
  \( \setBrac{\text{\ref{itm:gcdlistGcdfGcdfM}}, \text{\ref{itm:gcdlistGcdfGcdfN}}} \).
  Let \( \tamA(\shrp{\gcdf}) \) be \( \setBrac{1} \) so that
  \( \tamA \) fixes \( \setBrac{\text{\ref{itm:gcdlistGcdfGcdfM}}} \).
  The processor returns the pair
  \( \setBrac{\setBrac{\text{\ref{itm:gcdlistGcdfGcdfM}},
                       \text{\eiLabel{7} } \dpcv{\appB{\shrp{\gcdf}}{m}{n}}{\appB{\shrp{\gcdf}}{m}{(- n)}}{n < 0}{\setBrac{m, n}}},
              \setBrac{\text{\ref{itm:gcdlistGcdfGcdfN}}}} \).
  The integer mapping processor with \( \inmA(\shrp{\gcdf}) = {- \varA_{\shrp{\gcdf}, 1}} \)
  removes \eiLabel{\ref{itm:gcdlistGcdfGcdfM}} from
  \( \setBrac{\text{\ref{itm:gcdlistGcdfGcdfM}}, \text{7}} \).
  Then \( \setBrac{\text{7}} \) and
  \( \setBrac{\text{\ref{itm:gcdlistGcdfGcdfN}}} \)
  can be removed by graph processors.
\end{example}

\subparagraph*{Implementation.}
To find a valid theory argument mapping,
we simply start by setting \( \tamA(\shrp{\funA}) = \setBrac{1, \ldots, m} \)
for all \( \shrp{\funA} \),
and choose one SDP to fix.
Then we iteratively remove arguments that do not satisfy the condition
until no such argument is left.

\subsection{Reduction Pairs}
Although it is not needed by the running example,
we present a constrained variant of \emph{reduction pair processors},
which are at the heart of most unconstrained termination analyzers.

\begin{definition}
  A \emph{constrained relation} \( \relA \) is a set of quadruples
  \( (\trmB, \trmA, \rulC, \gtvA) \) where
  \( \trmB \) and \( \trmA \) are terms which have the same type,
  \( \rulC \) is a logical constraint and
  \( \gtvA \supseteq \setFvar{\rulC} \) is a set of variables
  whose types are theory sorts.
  We write \( \crel{\relA}{\trmB}{\trmA}{\rulC}{\gtvA} \)
  if \( (\trmB, \trmA, \rulC, \gtvA) \in \relA \).
  A binary relation \( \relA^\prime \) over terms is said to
  \emph{cover} a constrained relation \( \relA \) if
  \( \crel{\relA}{\trmB}{\trmA}{\rulC}{\gtvA} \) implies that
  \( \mapNorm{(\trmB \subA)}_\lblCalc \mathrel{\relA^\prime} \mapNorm{(\trmA \subA)}_\lblCalc \)
  for each substitution \( \subA \) such that
  \( \subA(\varA) \) is a ground theory term for all \( \varA \in \gtvA \) and
  \( \mapNtrp{\rulC \subA} = \algT \).

  A \emph{constrained reduction pair} \( (\succeq, \succ) \)
  is a pair of constrained relations such that
  \( \succeq \) is covered by some reflexive relation \( \sqsupseteq \)
  which includes \( \arrRwrt_\lblCalc \) and is monotonic
  (i.e., \( \trmB \sqsupseteq \trmA \) implies \( \plug{\ctxA}{\trmB} \sqsupseteq \plug{\ctxA}{\trmA} \)),
  \( \succ \) is covered by some well-founded relation \( \sqsupset \),
  and \( \relComp{\sqsupseteq}{\sqsupset} \subseteq {\sqsupset^+} \).
\end{definition}

\begin{definition}
  A \emph{reduction pair processor} assigns to a DP problem \( \setDePa \)
  the singleton \( \setBrac{\setDePa \setminus \setDePa^\prime} \)
  for some non-empty subset \( \setDePa^\prime \) of \( \setDePa \) such that
  there exists a constrained reduction pair \( (\succeq, \succ) \) where
  \begin{enumerate*}
  \item \( \crel{\succ}{\shrp{\trmB}}{\shrp{\trmA}}{\rulC}{\gtvA} \)
    for all \( \dpcv{\shrp{\trmB}}{\shrp{\trmA}}{\rulC}{\gtvA} \in \setDePa^\prime \),
  \item \( \crel{\succeq}{\shrp{\trmB}}{\shrp{\trmA}}{\rulC}{\gtvA} \)
    for all \( \dpcv{\shrp{\trmB}}{\shrp{\trmA}}{\rulC}{\gtvA} \in \setDePa \setminus \setDePa^\prime \), and
  \item \( \crel{\succeq}{\rulL}{\rulR}{\rulC}{\setFvar{\rulC} \cup (\setFvar{\rulR} \setminus \setFvar{\rulL})} \)
    for all \( \rwrl{\rulL}{\rulR}{\rulC} \in \setRule \).
  \end{enumerate*}
\end{definition}

While a variety of reduction pairs have been proposed for unconstrained rewriting,
it is not yet the case in a higher-order and constrained setting:\ %
so far the only one is a limited version of HORPO \cite{guo:kop:24},
which is adapted into a weakly monotonic reduction pair \cite{kop:24}
and then implemented in the DP framework.
This is still a prototype definition.

We have included reduction pair processors here
because their definition allows us to start designing constrained reduction pairs.
In particular, as unconstrained reduction pairs
can be used as the covering pair \( (\sqsupseteq, \sqsupset) \),
it is likely that many of them
(such as variants of HORPO and weakly monotonic algebras)
can be adapted.
\bigbreak

We conclude this section by the following result (see \cref{app:proofDpframeworkSoundness}):

\begin{theorem}\label{thm:dpframeworkSoundness}
  All the DP processors defined in \cref{sec:dpframework} are sound.
\end{theorem}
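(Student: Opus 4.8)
The plan is to prove soundness separately for each of the five processor classes, since Theorem~\ref{thm:dpframeworkSoundness} is really a conjunction of five independent soundness claims. In each case the goal is the same: assuming the output DP problems are all finite, show the input problem $\setDePa$ is finite, i.e.\ that there is no infinite computable $(\setDePa, \setRule)$-chain. The standard contrapositive strategy applies throughout---assume an infinite computable chain over $\setDePa$ exists, and derive a contradiction with the finiteness of the processor's output. Because the notion of a chain carries the extra constrained data $(\rulC_i, \gtvA_i)$ and the substitutions $\subA_i$ must \emph{respect} each SDP (mapping $\gtvA_i$ to ground theory terms and validating $\rulC_i$), every step must be careful to preserve these side conditions, which is the main new ingredient compared to the unconstrained soundness proofs in \cite{kus:sak:07,fuh:kop:19}.

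For the \textbf{graph processor} I would argue that any infinite chain, being a walk in the true DP graph, must eventually stay inside a single strongly connected component of $\grhA_\homA$ (a tail of the chain visits only vertices on some cycle, since the graph is finite); the defining property of a graph approximation guarantees consecutive chain steps correspond to edges, so the tail yields an infinite computable chain over one of the returned non-trivial SCCs, contradicting its finiteness. For the \textbf{subterm criterion processor} and the \textbf{integer mapping processor} the schema is parallel: from the projection $\prjA$ (resp.\ integer mapping $\inmA$) one extracts from an infinite chain an infinite sequence of terms (resp.\ integers) that is weakly decreasing under $\relSupt$ along $\setDePa \setminus \setDePa^\prime$ steps and strictly decreasing under $\relSupp$ (resp.\ $>$ with a $\ge 0$ floor) along $\setDePa^\prime$ steps. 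Since a computable chain cannot pass through $\setDePa^\prime$ infinitely often without violating well-foundedness of $\relSupp$ (resp.\ of $>$ on the nonnegative integers), a tail avoids $\setDePa^\prime$ entirely and is thus an infinite computable chain over $\setDePa \setminus \setDePa^\prime$, contradicting the assumed finiteness of that single output problem. The delicate point for the integer mapping case is that $\bar{\inmA}(\shrp{\trmB})$ and $\bar{\inmA}(\shrp{\trmA})$ become concrete integers only after applying $\subA_i$, so one must use that $\subA_i$ respects the SDP and that the variables used by $\inmA$ lie in $\setFreV{\shrp{\funA}}$---i.e.\ exactly the arguments guaranteed by $\setFreI{\shrp{\funA}}$ to be ground theory terms---to make $\rulC \models \cdots$ transfer to a genuine numerical inequality; the step-connecting rewrite $\shrp{\trmA_{i-1}}\subA_{i-1} \arrRwrt^* \shrp{\trmB_i}\subA_i$ must be shown to preserve the relevant interpreted value (calculation-normalization is compatible here).

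The \textbf{theory argument processor} is structurally different, since it returns two problems rather than shrinking one, and it is the step I expect to be the main obstacle. Here the key lemma is that in any infinite computable $(\setDePa,\setRule)$-chain, at every position where an SDP from $\setDePa^\prime$ has just been used (its head's $\tamA$-selected arguments are fixed, by the \emph{fix} condition forcing $\setFvar{\trmA_i}\subseteq\gtvA$), the substitution at the \emph{next} step already instantiates the corresponding theory-argument positions to ground theory terms. I would show by induction along the chain---using the closure conditions built into the definition of a theory argument mapping (that $\trmB_i$ are theory terms at $\tamA$-positions, and that $\setFvar{\trmA_j}\subseteq\gtvA\cup\bigcup_i\setFvar{\trmB_i}$ propagates theory-term-ness forward across a step)---that the chain is in fact also a computable chain over $\setComp{\bar{\tamA}(\sdpA)}{\sdpA\in\setDePa}$, the enlarged-$\gtvA$ variant. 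The subtle obstacle is establishing that enlarging each $\gtvA$ by $\bigcup_{i\in\tamA(\shrp{\funA})}\setFvar{\trmB_i}$ does not spuriously strengthen the ``respects'' requirement: one must verify that for the actual substitutions occurring in a chain following a $\setDePa^\prime$-step, these newly added variables genuinely are mapped to ground theory terms, which is precisely what the closure conditions and the non-determinism-avoidance discussion (contrast with the $\mathsf{c}$-counterexample) are designed to guarantee. If the chain uses $\setDePa^\prime$ infinitely often it lives (on a tail) in the first output problem with the stronger $\gtvA$; if only finitely often, a tail lives in $\setDePa\setminus\setDePa^\prime$, the second output problem---either way contradicting finiteness. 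Finally, for the \textbf{reduction pair processor}, covering the constrained relations by a monotonic $\sqsupseteq{\supseteq}\arrRwrt_\lblCalc$ and a well-founded $\sqsupset$ with $\relComp{\sqsupseteq}{\sqsupset}\subseteq\sqsupset^+$ lets me map an infinite chain to an infinite $\sqsupseteq$-or-$\sqsupset$ sequence: the rule condition (iii) ensures the connecting rewrites $\arrRwrt_\setRule^*$ are $\sqsupseteq$-steps (via monotonicity and the covering of $\succeq$ on rules, plus $\lblCalc$-normalization), the $\setDePa\setminus\setDePa^\prime$ SDPs give $\sqsupseteq$-steps, and the $\setDePa^\prime$ SDPs give $\sqsupset$-steps; infinitely many $\sqsupset$-steps then contradict well-foundedness, so a tail avoids $\setDePa^\prime$ and the finiteness of $\setDePa\setminus\setDePa^\prime$ closes the argument.
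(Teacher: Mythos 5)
Your proposal takes essentially the same route as the paper: soundness is proved separately for each of the five processor classes, by the same mechanisms the paper uses (the finite-graph/SCC tail argument for graph processors; well-founded descent with a tail avoiding \( \setDePa^\prime \) for the subterm criterion, integer mappings and reduction pairs, including the \( \lblCalc \)-normalization lemma needed to turn connecting \( \arrRwrt_\setRule^* \)-steps into \( \sqsupseteq^* \)-steps; and forward propagation of ground-theory-term instantiation along the chain, with the same two-case split, for theory argument processors). The one point you state loosely is the subterm criterion: the projected terms are related by \( \arrRwrt_\setRule^* \) composed with \( \relSupt \), not by \( \relSupt \) alone, so the contradiction rests on the termination of the projected terms (guaranteed by computability of the chain, making \( {\arrRwrt_\setRule} \cup {\relSupp} \) well-founded on them) rather than on well-foundedness of \( \relSupp \) by itself---which is exactly how the paper argues.
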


\section{Universal Computability}\label{sec:computability}
Termination is not a \emph{modular} property:\ %
given terminating systems \( \setRule_0 \) and \( \setRule_1 \),
we cannot generally conclude that \( \setRule_0 \cup \setRule_1 \) is also terminating.
As computability is based on termination,
it is not modular either.
For example, both
\( \setBrac{\mathsf{a} \arrRule \mathsf{b}} \) and
\( \setBrac{\appl{\mathsf{f}}{\mathsf{b}} \arrRule \appl{\mathsf{f}}{\mathsf{a}}} \)
are terminating,
and \( \mathsf{f} \relType \mathsf{o} \arrType \mathsf{o} \)
is computable in the second system;
yet, combining the two yields
\( \appl{\mathsf{f}}{\mathsf{a}} \arrRwrt
   \appl{\mathsf{f}}{\mathsf{b}} \arrRwrt
   \appl{\mathsf{f}}{\mathsf{a}} \arrRwrt
   \cdots \),
which refutes the termination of the combination and
the computability of \( \mathsf{f} \).

On the other hand,
functions like \( \mapf \) and \( \fold \) are prevalently used;
the lack of a modular principle for
the termination analysis of higher-order systems involving such functions is painful.
Moreover, if such a system is non-terminating,
this is seldom attributed to those functions,
which are generally considered ``terminating''
regardless of how they may be called.

In this section,
we propose \emph{universal computability},
a concept which corresponds to the termination of a function
in all ``reasonable'' uses.
First, we rephrase the notion of a hierarchical combination
\cite{rao:93,rao:94,rao:95,der:95} in terms of LCSTRSs:

\begin{definition}
  An LCSTRS \( \setRule_1 \) is called an \emph{extension} of
  a base system \( \setRule_0 \) if
  the two systems' interpretations of theory symbols
  coincide over all the theory symbols in common,
  and function symbols in \( \setRule_0 \) are not defined by
  any rewrite rule in \( \setRule_1 \).
  Given a base system \( \setRule_0 \) and
  an extension \( \setRule_1 \) of \( \setRule_0 \),
  the system \( \setRule_0 \cup \setRule_1 \) is called a \emph{hierarchical combination}.
\end{definition}
In a hierarchical combination,
function symbols in the base system can occur in the extension,
but cannot be (re)defined.
This forms the basis of the modular programming scenario we are interested in:\ %
think of the base system as a library containing the definitions of,
say, \( \mapf \) and \( \fold \).
We further define a class of extensions to take information hiding into account:

\begin{definition}
  Given an LCSTRS \( \setRule_0 \) and
  a set of function symbols---called
  \emph{hidden symbols}---in \( \setRule_0 \),
  an extension \( \setRule_1 \) of \( \setRule_0 \)
  is called a \emph{public extension} if
  hidden symbols do not occur in
  any rewrite rule in \( \setRule_1 \).
\end{definition}

Now we present the central definitions of this section:

\begin{definition}
  Given an LCSTRS \( \setRule_0 \) with a sort ordering \( \relSord \),
  a term \( \trmA \) is called \emph{universally computable} if
  for each extension \( \setRule_1 \) of \( \setRule_0 \) and
  each extension \( \relSord^\prime \) of \( \relSord \)
  to sorts in \( \setRule_0 \cup \setRule_1 \)
  (i.e., \( \relSord^\prime \) coincides with \( \relSord \) over sorts in \( \setRule_0 \)),
  \( \trmA \) is \( \setReCa \)-computable in
  \( \setRule_0 \cup \setRule_1 \) with \( \relSord^\prime \);
  if a set of hidden symbols in \( \setRule_0 \) is also given and
  the above universal quantification of \( \setRule_1 \)
  is restricted to \emph{public} extensions,
  such a term \( \trmA \) is called \emph{publicly computable}.

  The base system \( \setRule_0 \) is called universally computable if
  all its terms are;
  it is called publicly computable if
  all its \emph{public} terms---terms that contain no hidden symbol---are.
\end{definition}
With an empty set of hidden symbols,
the two notions---universal computability and public computability---coincide.
Below we state common properties in terms of public computability.

In summary, we consider passing \( \setReCa \)-computable arguments to
a defined symbol in \( \setRule_0 \)
the ``reasonable'' way of calling the function.
To establish the universal computability of higher-order functions
such as \( \mapf \) and \( \fold \)---%
i.e., to prove that they are \( \setReCa \)-computable in
\emph{all} relevant hierarchical combinations---%
we will use SDPs, which are about \( \setReCa \)-computability.

\begin{example}
  The system \( \setBrac{\appl{\lapp}{(\appl{\llam}{f})} \arrRule f} \)
  in \cref{sec:introduction} is not universally computable
  due to the extension \( \setBrac{\appl{\mathsf{w}}{x} \arrRule \appB{\lapp}{x}{x}} \).
\end{example}

\subsection{The DP Framework Revisited}
To use SDPs for universal---or public---computability,
we need a more general version of \cref{thm:chainTermination}.
We start with defining public chains:

\begin{definition}
  An SDP
  \( \dpcv{\appl{\shrp{\funA}}{\vect{\trmB}{1}{m}{\cdots}}}{\shrp{\trmA}}{\rulC}{\gtvA} \)
  is called public if \( \funA \) is not a hidden symbol.
  A \( (\setDePa, \setRule) \)-chain is called public if
  its first SDP is public.
\end{definition}

Now we state the main result of this section:

\begin{restatable}{theorem}{chainComputability}\label{thm:chainComputability}
  An AFP system \( \setRule_0 \) with sort ordering \( \relSord \)
  is publicly computable with respect to a set of hidden symbols in \( \setRule_0 \)
  if there exists no infinite computable
  \( (\setStDP{\setRule_0}, \setRule_0 \cup \setRule_1) \)-chain that is \emph{public}
  for each public extension \( \setRule_1 \) of \( \setRule_0 \) and
  each extension \( \relSord^\prime \) of \( \relSord \)
  to sorts in \( \setRule_0 \cup \setRule_1 \).
\end{restatable}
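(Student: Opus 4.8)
The plan is to follow the proof of \cref{thm:chainTermination} almost verbatim, restricting every object to its \emph{public} counterpart and carefully isolating the role of the base system $\setRule_0$. First I would fix an arbitrary public extension $\setRule_1$ and an arbitrary sort-ordering extension $\relSord^\prime$, write $\setRule = \setRule_0 \cup \setRule_1$, and read $\setReCa$, $\arrComp_\setReCa$ and $\relAcce$ relative to $\setRule$ and $\relSord^\prime$ (which exist by \cref{thm:existsC}). Since $\relSord^\prime$ coincides with $\relSord$ on the sorts of $\setRule_0$, the sets $\setAcce{\funA}$ for every symbol $\funA$ of $\setRule_0$ are unchanged, so $\setRule_0$ stays AFP under $\relSord^\prime$. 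I would stress up front that this base-system accessibility is all we ever need: $\setRule_1$ may well fail to be AFP, but its rules will only rewrite inside arguments already known to be $\setReCa$-computable and will therefore never produce a dependency pair.

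The goal becomes: every public $\setRule_0$-term is $\setReCa$-computable in $\setRule$. By a routine structural induction---variables and values are computable, an application headed by a variable is computable by the definition of computability applied to its (computable) pieces, and a constructor applied to $\setReCa$-computable arguments is $\setReCa$-computable in an AFP system exactly as in the unconstrained case---this reduces to the key lemma: for every \emph{non-hidden} defined symbol $\funA$ of $\setRule_0$ and all $\setReCa$-computable arguments, $\appl{\funA}{\vect{\trmB}{1}{n}{\cdots}}$ is $\setReCa$-computable. The head ranges over non-hidden defined symbols precisely because a public term contains no hidden symbol, and this is what will make the extracted chain public.

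For the key lemma I would argue by contradiction and extract an infinite chain. Suppose $\trmA_0 = \appl{\funA_0}{\vect{\trmB}{1}{n}{\cdots}}$ with $\funA_0$ non-hidden defined in $\setRule_0$ and $\setReCa$-computable arguments is not $\setReCa$-computable; applying further computable arguments, I may take $\trmA_0$ to have base type, so $\trmA_0 \notin \setReCa$. By \cref{thm:existsC}, this failure forces either non-termination under ${\arrRwrt_\setRule} \cup {\arrComp_\setReCa}$ or a reduct with a non-$\setReCa$-computable accessible argument; as the arguments of $\trmA_0$ are computable (hence terminating and stable under $\arrRwrt_\setRule$ and $\arrComp_\setReCa$), the offending activity must occur at the head. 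Because $\funA_0$ is defined in $\setRule_0$ and no $\setRule_0$-symbol is redefined by the extension, the responsible head rule lies in $\setRule_0$, and its right-hand side $\rulR$ contains only $\setRule_0$-symbols. The standard analysis of $\rulR$---using that $\setRule_0$ is AFP, so each relevant variable of $\rulR$ is an accessible subterm of the computable left-hand side instance and is thus instantiated to a $\setReCa$-computable term---then locates a defining subterm $\appl{\funB}{\vect{\trmC}{1}{q}{\cdots}}$ of $\rulR$, yielding an SDP in $\setStDP{\setRule_0}$ with a substitution that respects it (inherited from the rule's respecting substitution) and a new non-$\setReCa$-computable term headed by a defined symbol $\funB$ of $\setRule_0$ (a constructor would be computable) whose arguments are again $\setReCa$-computable. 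Iterating produces an infinite sequence of SDPs from $\setStDP{\setRule_0}$, linked by $\arrRwrt_\setRule^*$-steps on the unmarked arguments; it is a \emph{computable} chain since at each step the proper subterms of the right-hand side are $\setReCa$-computable under the substitution, and it is \emph{public} since its first SDP is headed by $\shrp{\funA_0}$ with $\funA_0$ non-hidden. This contradicts the hypothesis, so the key lemma holds; as $\setRule_1$ and $\relSord^\prime$ were arbitrary, $\setRule_0$ is publicly computable.

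The computability bookkeeping and the structural induction are routine and can be transferred almost unchanged from \cref{thm:chainTermination}; the real obstacle is justifying that \emph{only the base system matters}. Concretely, I must verify (i) that every dependency-pair-producing head reduction uses an $\setRule_0$-rule---so that the whole chain lives in $\setStDP{\setRule_0}$---which rests on the non-redefinition clause of a hierarchical combination; (ii) that AFP of $\setRule_0$ alone suffices although $\setRule_0 \cup \setRule_1$ need not be AFP, by confirming that $\setRule_1$-rules fire only inside already-$\setReCa$-computable arguments; and (iii) that non-computability really propagates through an \emph{accessible} subterm into a defining call, which is where \cref{thm:existsC} and the AFP condition of $\setRule_0$ (now read under $\relSord^\prime$) do the work. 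A final point to keep straight is the public bookkeeping: only the first SDP must be public, whereas later ones may be headed by hidden symbols that are nonetheless defined in $\setRule_0$.
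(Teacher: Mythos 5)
Your proposal is correct and follows essentially the same route as the paper: fix an arbitrary public extension and sort-ordering extension, reduce public computability (via minimality/induction over subterms of a public term) to the computability of applications of non-hidden $\setRule_0$-defined symbols to computable arguments, then iterate a step lemma—resting on the non-redefinition clause, the AFP property of $\setRule_0$ under $\relSord^\prime$, and \cref{thm:existsC}—to extract an infinite computable chain in $\setStDP{\setRule_0}$ whose first SDP is public, contradicting the hypothesis. This mirrors the paper's Lemma on chain steps, its chain-existence corollary, and the final minimal-subterm argument, including the observation that only the first SDP need be public.
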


While this result is not surprising and
its proof (see \cref{app:proofChainComputability}) is standard,
it is not obvious how it can be used.
The key observation which enables us to use the DP framework for public computability
is that among the DP processors in \cref{sec:dpframework},
only reduction pair processors rely on
the rewrite rules of the underlying system \( \setRule \)
(depending on how it computes a graph approximation,
a graph processor does not have to know all the rewrite rules).
Henceforth, we fix a base system \( \setRule_0 \),
a set of hidden symbols in \( \setRule_0 \) and
an arbitrary, unknown public extension \( \setRule_1 \) of \( \setRule_0 \).
Now \( \setRule \) is the hierarchical combination \( \setRule_0 \cup \setRule_1 \).

First, we generalize the definition of a DP problem:

\begin{definition}
  A \emph{(universal) DP problem} \( (\setDePa, \pflg) \) consists of
  a set \( \setDePa \) of SDPs and
  a flag \( \pflg \in \setBrac{\flagAny, \flagPub} \)
  (for \emph{any} or \emph{public}).
  A DP problem \( (\setDePa, \pflg) \) is called \emph{finite} if either
  \begin{enumerate*}
  \item \( \pflg = \flagAny \) and
    there exists no infinite computable \( (\setDePa, \setRule_0 \cup \setRule_1) \)-chain, or
  \item \( \pflg = \flagPub \) and
    there exists no infinite computable \( (\setDePa, \setRule_0 \cup \setRule_1) \)-chain
    which is \emph{public}.
  \end{enumerate*}
\end{definition}
DP processors are defined in the same way as before,
now for universal DP problems.
The goal is to show that \( (\setStDP{\setRule_0}, \flagPub) \) is finite,
and the procedure for termination in \cref{sec:dpframework} also works here
if we change the initialization of \( Q \) accordingly.

Next, we review the DP processors presented in \cref{sec:dpframework}.
For each \( \dppA \) of the original graph, subterm criterion and integer mapping processors,
the processor \( \dppA^\prime \) such that
\( \dppA^\prime(\setDePa, \pflg) = \setComp{(\setDePa^\prime, \flagAny)}{\setDePa^\prime \in \dppA(\setDePa)} \)
is sound for universal DP problems.
For theory argument processors,
we can do better when the input flag is \( \flagPub \)
(when it is \( \flagAny \),
we just handle \( \setDePa \) in the same way as we do in \cref{sec:dpframework} and
the output flags are obviously \( \flagAny \)):\ %
if the subset \( \setDePa^\prime \) of \( \setDePa \)
fixed by a theory argument mapping \( \tamA \)
contains all the public SDPs in \( \setDePa \),
the processor should return the singleton
\( \setBrac{(\setComp{\sdpA}{\sdpA \in \setDePa\text{ is public}} \cup \setComp{\bar{\tamA}(\sdpA)}{\sdpA \in \setDePa\text{ is not public}}, \flagPub)} \);
otherwise, the pair \( \setBrac{(\setComp{\bar{\tamA}(\sdpA)}{\sdpA \in \setDePa}, \flagAny), (\setDePa \setminus \setDePa^\prime, \flagPub)} \).
Reduction pair processors require knowledge of the extension \( \setRule_1 \)
so we do not adapt them.

\subparagraph*{New Processors.}
Last, we propose two classes of DP processors that are useful for public computability.
Processors of the first class do not actually simplify DP problems;
they rather alter their input to allow other DP processors to be applied subsequently.

\begin{definition}
  Given sets \( \setDePa_1 \) and \( \setDePa_2 \) of SDPs,
  \( \setDePa_2 \) is said to \emph{cover} \( \setDePa_1 \) if
  for each SDP \( \dpcv{\shrp{\trmB}}{\shrp{\trmA}}{\rulC_1}{\gtvA_1} \in \setDePa_1 \) and
  each substitution \( \subA_1 \) which respects \( \dpcv{\shrp{\trmB}}{\shrp{\trmA}}{\rulC_1}{\gtvA_1} \),
  there exist an SDP \( \dpcv{\shrp{\trmB}}{\shrp{\trmA}}{\rulC_2}{\gtvA_2} \in \setDePa_2 \) and
  a substitution \( \subA_2 \) such that
  \( \subA_2 \) respects \( \dpcv{\shrp{\trmB}}{\shrp{\trmA}}{\rulC_2}{\gtvA_2} \),
  \( \trmB \subA_1 = \trmB \subA_2 \) and
  \( \trmA \subA_1 = \trmA \subA_2 \).
  A \emph{constraint modification processor}
  assigns to a DP problem \( (\setDePa, \pflg) \)
  the singleton \( \setBrac{(\setDePa^\prime, \pflg)} \)
  for some \( \setDePa^\prime \) which covers \( \setDePa \).
\end{definition}

Now combined with the information of hidden symbols,
the DP graph allows us to remove SDPs that are unreachable from any public SDP.

\begin{definition}
  A \emph{reachability processor}
  assigns to a DP problem \( (\setDePa, \flagPub) \)
  the singleton
  \( \setBrac{(\setComp{\sdpA \in \setDePa}{\homA(\sdpA) \text{ is reachable from } \homA(\sdpA_0) \text{ for some public SDP } \sdpA_0}, \flagPub)} \),
  given a graph approximation \( (\grhA_\homA, \homA) \) for \( \setDePa \).
\end{definition}

These two classes of DP processors are often used together:\ %
a constraint modification processor can split an SDP into simpler ones,
some of which may be removed by a reachability processor.
In our implementation, a constraint modification processor is particularly used
to break an SDP \( \dpcv{\shrp{\trmB}}{\shrp{\trmA}}{\trmC \neq \trmD}{\gtvA} \)
into two SDPs with logical constraints
\( \trmC < \trmD \) and \( \trmC > \trmD \), respectively (see \cref{ex:constraint});
similarly for \( \dpcv{\shrp{\trmB}}{\shrp{\trmA}}{\trmC \vee \trmD}{\gtvA} \).

\begin{example}\label{ex:constraint}
  Consider an alternative implementation of the factorial function from \cref{ex:fact},
  which has SDPs
  \begin{enumerate*}
  \item\label{itm:constraintFactComp}%
    \( \dpwc{\appB{\shrp{\fact}}{n}{k}}{\appT{\shrp{\comp}}{k}{(\appl{(*)}{n})}{x^\prime}}{n \neq 0} \),
  \item\label{itm:constraintFactFact}%
    \( \dpwc{\appB{\shrp{\fact}}{n}{k}}{\appB{\shrp{\fact}}{(n - 1)}{(\appB{\comp}{k}{(\appl{(*)}{n})})}}{n \neq 0} \), and
  \item\label{itm:constraintInitFact}%
    \( \dpnc{\appl{\shrp{\init}}{k}}{\appB{\shrp{\fact}}{42}{k}} \).
  \end{enumerate*}
  Assume that \( \fact \) is a hidden symbol.
  Note that
  \( (\setBrac{\text{\ref{itm:constraintFactComp}}, \text{\ref{itm:constraintFactFact}}, \text{\ref{itm:constraintInitFact}}}, \flagPub) \)
  is not finite without this assumption.
  A constraint modification processor
  can replace \eiLabel{\ref{itm:constraintFactFact}} with
  \begin{enumerate*}\setcounter{enumi}{3}%
  \item\label{itm:constraintFactFactLt}%
    \( \dpwc{\appB{\shrp{\fact}}{n}{k}}{\appB{\shrp{\fact}}{(n - 1)}{(\appB{\comp}{k}{(\appl{(*)}{n})})}}{n < 0} \), and
  \item\label{itm:constraintFactFactGt}%
    \( \dpwc{\appB{\shrp{\fact}}{n}{k}}{\appB{\shrp{\fact}}{(n - 1)}{(\appB{\comp}{k}{(\appl{(*)}{n})})}}{n > 0} \).
  \end{enumerate*}
  A reachability processor can then remove \eiLabel{\ref{itm:constraintFactFactLt}}.
  The remaining DP problem
  \( (\setBrac{\text{\ref{itm:constraintFactComp}}, \text{\ref{itm:constraintInitFact}}, \text{\ref{itm:constraintFactFactGt}}}, \flagPub) \)
  can easily be handled by a graph processor and an integer mapping processor.
\end{example}
\bigbreak

We conclude this section by the following result (see \cref{app:proofComputabilitySoundness}):

\begin{theorem}\label{thm:computabilitySoundness}
  All the DP processors defined in \cref{sec:computability} are sound.
\end{theorem}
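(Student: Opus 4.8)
The plan is to establish soundness class by class, reducing to \cref{thm:dpframeworkSoundness} whenever the underlying construction already appears in \cref{sec:dpframework}, which applies verbatim since its soundness is stated for the fixed LCSTRS \( \setRule = \setRule_0 \cup \setRule_1 \). The observation used throughout is a monotonicity between flags: every public chain is in particular a chain, so a DP problem that is finite with flag \( \flagAny \) is a fortiori finite with flag \( \flagPub \). Hence whenever a processor emits only \( \flagAny \)-problems their assumed finiteness is the strongest hypothesis and settles both input flags at once. This immediately disposes of the adapted graph, subterm criterion and integer mapping processors (and of the theory argument processor on input flag \( \flagAny \)): given an infinite computable chain for \( (\setDePa, \pflg) \) I forget publicness, invoke \cref{thm:dpframeworkSoundness} on the underlying processor to obtain an infinite computable \( (\setDePa', \setRule) \)-chain for some output \( \setDePa' \), and contradict the finiteness of \( (\setDePa', \flagAny) \).

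Two of the new processors are settled by direct chain surgery. For the constraint modification processor I take an infinite computable chain for \( (\setDePa, \pflg) \) and replace each step \( (\sdpA_i, \subA_i) \), using the covering property, by a step \( (\sdpA_i', \subA_i') \) whose SDP has the same terms \( \shrp{\trmB_i}, \shrp{\trmA_i} \) and satisfies \( \shrp{\trmB_i}\subA_i = \shrp{\trmB_i}\subA_i' \) and \( \shrp{\trmA_i}\subA_i = \shrp{\trmA_i}\subA_i' \); the connecting reductions \( \shrp{\trmA_{i-1}}\subA_{i-1} \arrRwrt_\setRule^* \shrp{\trmB_i}\subA_i \) and the \( \setReCa \)-computability requirement (which only mention these instances) are preserved, and since the left head is unchanged the first pair stays public. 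For the reachability processor I note that in an infinite public chain the images \( \homA(\sdpA_0), \homA(\sdpA_1), \ldots \) form a path out of \( \homA(\sdpA_0) \) with \( \sdpA_0 \) public, so every \( \sdpA_i \) is reachable from a public SDP and the chain is already a public chain over the retained set.

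The real work is the theory argument processor on input flag \( \flagPub \), and here the core is a \emph{propagation lemma}: along any chain, if \( \subA_i \) respects \( \bar{\tamA}(\sdpA_i) \)---i.e.\ also maps the variables occurring at the theory-argument positions \( \tamA(\shrp{\funA_i}) \) on the left to ground theory terms---then \( \subA_{i+1} \) respects \( \bar{\tamA}(\sdpA_{i+1}) \). I would prove this from three facts. First, a \( \shrp \)-headed term rewrites only inside its arguments, as the sharp symbols are fresh and undefined. Second, a ground theory term of theory sort admits only calculation steps (no rule matches, theory symbols being undefined) and so reduces to ground theory terms. Third, if a theory term becomes a ground theory term under a substitution, each of its variables is mapped to a ground theory term, since its non-variable positions carry theory symbols. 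The defining conditions of \( \tamA \) then move the ``ground theory'' status from the left positions of \( \sdpA_i \) across \( \arrDePa \) to its right positions---whose free variables lie in \( \gtvA_i \cup \bigcup_{k \in \tamA(\shrp{\funA_i})} \setFvar{\trmB_{i,k}} \)---and the connecting reduction carries it to the matching left positions of \( \sdpA_{i+1} \).

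With the lemma I split on the output shape. In the general (pair) case I take an infinite computable public chain and ask whether some \( \sdpA_j \in \setDePa' \) occurs: if so, the ``fix'' condition seeds the propagation at the least such \( j \) and, by the lemma, the tail is an infinite computable \( (\setComp{\bar{\tamA}(\sdpA)}{\sdpA \in \setDePa}, \setRule) \)-chain, contradicting the \( \flagAny \)-output; if not, the whole chain lives in \( \setDePa \setminus \setDePa' \) and remains public, contradicting the \( \flagPub \)-output. In the optimized (singleton) case, where \( \setDePa' \) contains \emph{all} public SDPs, the first SDP \( \sdpA_0 \) is public, hence in \( \setDePa' \), so the ``fix'' condition seeds the propagation already at index \( 0 \) and an induction gives that \( \subA_i \) respects \( \bar{\tamA}(\sdpA_i) \) for all \( i \ge 1 \); replacing the non-public SDPs by their \( \bar{\tamA} \)-images while keeping the public ones (in particular \( \sdpA_0 \)) unchanged yields an infinite computable public chain over the singleton output. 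The delicate point---and the main obstacle---is exactly that \( \subA_0 \) need not respect \( \bar{\tamA}(\sdpA_0) \), there being no predecessor to fix its theory arguments; this is why the public SDPs must be retained unmodified, and once it is recognised the induction closes and \cref{thm:computabilitySoundness} follows.
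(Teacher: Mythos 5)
Your proposal is correct and takes essentially the same route as the paper's proof: flag monotonicity (a public chain is a chain) to reduce the adapted graph, subterm criterion, integer mapping and \( \flagAny \)-flagged theory argument processors to \cref{thm:dpframeworkSoundness}, chain surgery via the covering property for constraint modification, the graph-path argument for reachability, and, for the theory argument processor on \( \flagPub \), the same propagation of ground-theory-argument status along the chain (which the paper obtains by reusing its proof of \cref{thm:theoryargSoundness}) with the same two-case split on whether the fixed subset \( \setDePa^\prime \) contains all public SDPs. Your explicit remark that \( \subA_0 \) need not respect \( \bar{\tamA}(\sdpA_0) \), forcing public SDPs to be retained unmodified in the singleton output, is precisely the rationale implicit in the paper's construction.
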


\section{Experiments and Future Work}\label{sec:experiments}
All the results in this paper have been implemented
in our open-source analyzer \cora{} \cite{cora}.
We have evaluated \cora{} on three groups of experiments,
and the results are in \cref{tab:experiments}.

\begin{table}[h]
  \centering
  \caption{\cora{} experiment results.}
  \label{tab:experiments}
  \begin{tabular}{|c|c|c|c|}
    \hline
    & Custom & STRS & ITRS \\
    \hline
    Termination & 20/28 & 72/140 & 69/117 \\
    Computability & 20/28 & 66/140 & 68/117 \\
    Wanda & -- & 105/140 & -- \\
    AProVE & -- & -- & 102/117 \\
    \hline
  \end{tabular}
\end{table}

The first group contains examples in this paper and
several other LC(S)TRS benchmarks we have collected.
The second group contains all the \( \lambda \)-free problems
from the higher-order category of TPDB \cite{tpdb}.
The third group contains problems
from the first-order ``integer TRS innermost'' category.
The computability tests analyze public computability;
since there are no hidden symbols in TPDB,
the main difference from a termination check is
that reduction pair processors are disabled.
A full evaluation page is available through the link:
\begin{center}
  \url{https://www.cs.ru.nl/~cynthiakop/experiments/mfcs2024}
\end{center}

Unsurprisingly, \cora{} is substantially weaker than
Wanda \cite{kop:20} on unconstrained higher-order TRSs,
and AProVE \cite{gie:asc:bro:emm:fro:fuh:hen:ott:plu:sch:str:swi:thi:17}
on first-order integer TRSs:\ %
this work aims to be a starting point for \emph{combining}
higher-order term analysis and theory reasoning,
and cannot yet compete with dedicated tools that have had years of development.
Nevertheless, we believe that these results show a solid foundation
with only a handful of simple techniques.

\subparagraph*{Future Work.}
Many of the existing techniques used in the analyses of integer TRSs and higher-order TRSs
are likely to be extensible to our setting,
leaving many encouraging avenues for further development.
We highlight the most important few:
\begin{itemize}
\item Usable rules with respect to an argument filtering \cite{gie:thi:sch:fal:06,kop:22}.
  To effectively use reduction pairs, being able to discard some rewrite rules is essential
  (especially for universal computability, if we can discard the unknown ones).
  Closely related is the adaptation of more reduction pairs such as
  weakly monotonic algebras \cite{zan:94,pol:96},
  tuple interpretations \cite{kop:val:21,yam:22} and
  more sophisticated path orderings \cite{bla:jou:rub:08},
  all of which have higher-order formulations.
\item Transformation techniques, such as narrowing,
  and chaining dependency pairs together
  (as used for instance for integer transition systems
  \cite[Secion~3.1]{gie:asc:bro:emm:fro:fuh:hen:ott:plu:sch:str:swi:thi:17}).
  This could also be a step toward using the constrained DP framework for non-termination.
\item Handling the innermost or call-by-value strategy.
  Several functional languages adopt call-by-value evaluation,
  and applying this restriction may allow for more powerful analyses.
  In the first-order DP framework, there is ample work on the innermost strategy
  to build on (see, e.g., \cite{gie:thi:sch:05,gie:thi:sch:fal:06}).
\item Theory-specific processors for popular theories other than integers,
  e.g., bit vectors \cite{mat:nis:koj:shi:23}.
\end{itemize}

\bibliography{mfcs2024}

\begin{thebibliography}{10}

\bibitem{art:gie:00}
T.~Arts and J.~Giesl.
\newblock Termination of term rewriting using dependency pairs.
\newblock {\em TCS}, 236(1--2):133--178, 2000.
\newblock \href {https://doi.org/10.1016/S0304-3975(99)00207-8}
  {\path{doi:10.1016/S0304-3975(99)00207-8}}.

\bibitem{bla:06}
F.~Blanqui.
\newblock Higher-order dependency pairs.
\newblock In A.~Geser and H.~S{\o}ndergaard, editors, {\em Proc.\ WST}, pages
  22--26, 2006.
\newblock \href {https://doi.org/10.48550/arXiv.1804.08855}
  {\path{doi:10.48550/arXiv.1804.08855}}.

\bibitem{bla:jou:oka:02}
F.~Blanqui, J.-P. Jouannaud, and M.~Okada.
\newblock Inductive-data-type systems.
\newblock {\em TCS}, 272(1--2):41--68, 2002.
\newblock \href {https://doi.org/10.1016/S0304-3975(00)00347-9}
  {\path{doi:10.1016/S0304-3975(00)00347-9}}.

\bibitem{bla:jou:rub:08}
F.~Blanqui, J.-P. Jouannaud, and A.~Rubio.
\newblock The computability path ordering:\ the end of a quest.
\newblock In M.~Kaminski and S.~Martini, editors, {\em Proc.\ CSL}, pages
  1--14, 2008.
\newblock \href {https://doi.org/10.1007/978-3-540-87531-4_1}
  {\path{doi:10.1007/978-3-540-87531-4_1}}.

\bibitem{tpdb}
Community.
\newblock The termination problem database ({TPDB}).
\newblock URL: \url{https://github.com/TermCOMP/TPDB}.

\bibitem{der:95}
N.~Dershowitz.
\newblock Hierarchical termination.
\newblock In N.~Dershowitz and N.~Lindenstrauss, editors, {\em Proc.\ CTRS},
  pages 89--105, 1995.
\newblock \href {https://doi.org/10.1007/3-540-60381-6_6}
  {\path{doi:10.1007/3-540-60381-6_6}}.

\bibitem{fuh:kop:19}
C.~Fuhs and C.~Kop.
\newblock A static higher-order dependency pair framework.
\newblock In L.~Caires, editor, {\em Proc.\ ESOP}, pages 752--782, 2019.
\newblock \href {https://doi.org/10.1007/978-3-030-17184-1_27}
  {\path{doi:10.1007/978-3-030-17184-1_27}}.

\bibitem{gie:asc:bro:emm:fro:fuh:hen:ott:plu:sch:str:swi:thi:17}
J.~Giesl, C.~Aschermann, M.~Brockschmidt, F.~Emmes, F.~Frohn, C.~Fuhs,
  J.~Hensel, C.~Otto, M.~Pl{\"u}cker, P.~Schneider-Kamp, T.~Str{\"o}der,
  S.~Swiderski, and R.~Thiemann.
\newblock Analyzing program termination and complexity automatically with
  {AProVE}.
\newblock {\em JAR}, 58:3--31, 2017.
\newblock \href {https://doi.org/10.1007/s10817-016-9388-y}
  {\path{doi:10.1007/s10817-016-9388-y}}.

\bibitem{gie:thi:sch:05}
J.~Giesl, R.~Thiemann, and P.~Schneider-Kamp.
\newblock The dependency pair framework:\ combining techniques for automated
  termination proofs.
\newblock In F.~Baader and A.~Voronkov, editors, {\em Proc.\ LPAR}, pages
  301--331, 2005.
\newblock \href {https://doi.org/10.1007/978-3-540-32275-7_21}
  {\path{doi:10.1007/978-3-540-32275-7_21}}.

\bibitem{gie:thi:sch:fal:06}
J.~Giesl, R.~Thiemann, P.~Schneider-Kamp, and S.~Falke.
\newblock Mechanizing and improving dependency pairs.
\newblock {\em JAR}, 37:155--203, 2006.
\newblock \href {https://doi.org/10.1007/s10817-006-9057-7}
  {\path{doi:10.1007/s10817-006-9057-7}}.

\bibitem{guo:kop:24}
L.~Guo and C.~Kop.
\newblock Higher-order {LCTRSs} and their termination.
\newblock In S.~Weirich, editor, {\em Proc.\ ESOP}, pages 331--357, 2024.
\newblock \href {https://doi.org/10.1007/978-3-031-57267-8_13}
  {\path{doi:10.1007/978-3-031-57267-8_13}}.

\bibitem{hir:mid:04}
N.~Hirokawa and A.~Middeldorp.
\newblock Dependency pairs revisited.
\newblock In V.~van Oostrom, editor, {\em Proc.\ RTA}, pages 249--268, 2004.
\newblock \href {https://doi.org/10.1007/978-3-540-25979-4_18}
  {\path{doi:10.1007/978-3-540-25979-4_18}}.

\bibitem{jou:rub:99}
J.-P. Jouannaud and A.~Rubio.
\newblock The higher-order recursive path ordering.
\newblock In G.~Longo, editor, {\em Proc.\ LICS}, pages 402--411, 1999.
\newblock \href {https://doi.org/10.1109/LICS.1999.782635}
  {\path{doi:10.1109/LICS.1999.782635}}.

\bibitem{kop:13}
C.~Kop.
\newblock Termination of {LCTRSs}.
\newblock In J.~Waldmann, editor, {\em Proc.\ WST}, pages 59--63, 2013.
\newblock \href {https://doi.org/10.48550/arXiv.1601.03206}
  {\path{doi:10.48550/arXiv.1601.03206}}.

\bibitem{kop:20}
C.~Kop.
\newblock {WANDA} --- a higher order termination tool.
\newblock In Z.~M. Ariola, editor, {\em Proc.\ FSCD}, pages 36:1--36:19, 2020.
\newblock \href {https://doi.org/10.4230/LIPIcs.FSCD.2020.36}
  {\path{doi:10.4230/LIPIcs.FSCD.2020.36}}.

\bibitem{kop:22}
C.~Kop.
\newblock Cutting a proof into bite-sized chunks:\ incrementally proving
  termination in higher-order term rewriting.
\newblock In A.~P. Felty, editor, {\em Proc.\ FSCD}, pages 1:1--1:17, 2022.
\newblock \href {https://doi.org/10.4230/LIPIcs.FSCD.2022.1}
  {\path{doi:10.4230/LIPIcs.FSCD.2022.1}}.

\bibitem{kop:24}
C.~Kop.
\newblock A weakly monotonic, logically constrained, {HORPO}-variant.
\newblock Technical report, Radboud University, 2024.
\newblock \href {https://doi.org/10.48550/arXiv.2406.18493}
  {\path{doi:10.48550/arXiv.2406.18493}}.

\bibitem{kop:nis:13}
C.~Kop and N.~Nishida.
\newblock Term rewriting with logical constraints.
\newblock In P.~Fontaine, C.~Ringeissen, and R.~A. Schmidt, editors, {\em
  Proc.\ FroCoS}, pages 343--358, 2013.
\newblock \href {https://doi.org/10.1007/978-3-642-40885-4_24}
  {\path{doi:10.1007/978-3-642-40885-4_24}}.

\bibitem{cora}
C.~Kop and D.~Vale.
\newblock The {Cora} analyzer.
\newblock URL: \url{https://github.com/hezzel/cora}.

\bibitem{coraMFCS:24}
C.~Kop and D.~Vale.
\newblock hezzel/cora:\ artifact for {MFCS} 2024.
\newblock \href {https://doi.org/10.5281/zenodo.12551027}
  {\path{doi:10.5281/zenodo.12551027}}.

\bibitem{kop:val:21}
C.~Kop and D.~Vale.
\newblock Tuple interpretations for higher-order complexity.
\newblock In N.~Kobayashi, editor, {\em Proc.\ FSCD}, pages 31:1--31:22, 2021.
\newblock \href {https://doi.org/10.4230/LIPIcs.FSCD.2021.31}
  {\path{doi:10.4230/LIPIcs.FSCD.2021.31}}.

\bibitem{kop:raa:12}
C.~Kop and F.~van Raamsdonk.
\newblock Dynamic dependency pairs for algebraic functional systems.
\newblock {\em LMCS}, 8(2):10:1--10:51, 2012.
\newblock \href {https://doi.org/10.2168/lmcs-8(2:10)2012}
  {\path{doi:10.2168/lmcs-8(2:10)2012}}.

\bibitem{kus:sak:07}
K.~Kusakari and M.~Sakai.
\newblock Enhancing dependency pair method using strong computability in
  simply-typed term rewriting.
\newblock {\em AAECC}, 18(5):407--431, 2007.
\newblock \href {https://doi.org/10.1007/s00200-007-0046-9}
  {\path{doi:10.1007/s00200-007-0046-9}}.

\bibitem{mat:nis:koj:shi:23}
A.~Matsumi, N.~Nishida, M.~Kojima, and D.~Shin.
\newblock On singleton self-loop removal for termination of {LCTRSs} with
  bit-vector arithmetic.
\newblock In A.~Yamada, editor, {\em Proc.\ WST}, 2023.
\newblock \href {https://doi.org/10.48550/arXiv.2307.14094}
  {\path{doi:10.48550/arXiv.2307.14094}}.

\bibitem{hea:18}
P.~W. O'Hearn.
\newblock Continuous reasoning:\ scaling the impact of formal methods.
\newblock In A.~Dawar and E.~Gr{\"a}del, editors, {\em Proc.\ LICS}, pages
  13--25, 2018.
\newblock \href {https://doi.org/10.1145/3209108.3209109}
  {\path{doi:10.1145/3209108.3209109}}.

\bibitem{rao:93}
M.~R. K.~K. Rao.
\newblock Completeness of hierarchical combinations of term rewriting systems.
\newblock In R.~K. Shyamasundar, editor, {\em Proc.\ FSTTCS}, pages 125--138,
  1993.
\newblock \href {https://doi.org/10.1007/3-540-57529-4_48}
  {\path{doi:10.1007/3-540-57529-4_48}}.

\bibitem{rao:94}
M.~R. K.~K. Rao.
\newblock Simple termination of hierarchical combinations of term rewriting
  systems.
\newblock In M.~Hagiya and J.~C. Mitchell, editors, {\em Proc.\ TACS}, pages
  203--223, 1994.
\newblock \href {https://doi.org/10.1007/3-540-57887-0_97}
  {\path{doi:10.1007/3-540-57887-0_97}}.

\bibitem{rao:95}
M.~R. K.~K. Rao.
\newblock Semi-completeness of hierarchical and super-hierarchical combinations
  of term rewriting systems.
\newblock In P.~D. Mosses, M.~Nielsen, and M.~I. Schwartzbach, editors, {\em
  Proc.\ CAAP}, pages 379--393, 1995.
\newblock \href {https://doi.org/10.1007/3-540-59293-8_208}
  {\path{doi:10.1007/3-540-59293-8_208}}.

\bibitem{sak:kus:05}
M.~Sakai and K.~Kusakari.
\newblock On dependency pair method for proving termination of higher-order
  rewrite systems.
\newblock {\em {IEICE} Trans. Inf. Syst.}, E88-D(3):583--593, 2005.
\newblock \href {https://doi.org/10.1093/ietisy/e88-d.3.583}
  {\path{doi:10.1093/ietisy/e88-d.3.583}}.

\bibitem{sak:wat:sak:01}
M.~Sakai, Y.~Watanabe, and T.~Sakabe.
\newblock An extension of the dependency pair method for proving termination of
  higher-order rewrite systems.
\newblock {\em {IEICE} Trans. Inf. Syst.}, E84-D(8):1025--1032, 2001.
\newblock URL:
  \url{https://search.ieice.org/bin/summary.php?id=e84-d\_8\_1025}.

\bibitem{tar:72}
R.~Tarjan.
\newblock Depth-first search and linear graph algorithms.
\newblock {\em SICOMP}, 1(2):146--160, 1972.
\newblock \href {https://doi.org/10.1137/0201010} {\path{doi:10.1137/0201010}}.

\bibitem{thi:07}
R.~Thiemann.
\newblock {\em The {DP} Framework for Proving Termination of Term Rewriting}.
\newblock PhD thesis, RWTH Aachen University, 2007.
\newblock URL:
  \url{http://cl-informatik.uibk.ac.at/users/thiemann/paper/diss.pdf}.

\bibitem{pol:96}
J.~C. van~de Pol.
\newblock {\em Termination of Higher-Order Rewrite Systems}.
\newblock PhD thesis, Utrecht University, 1996.
\newblock URL: \url{https://www.cs.au.dk/~jaco/papers/thesis.pdf}.

\bibitem{yam:22}
A.~Yamada.
\newblock Tuple interpretations for termination of term rewriting.
\newblock {\em JAR}, 66:667--688, 2022.
\newblock \href {https://doi.org/10.1007/s10817-022-09640-4}
  {\path{doi:10.1007/s10817-022-09640-4}}.

\bibitem{zan:94}
H.~Zantema.
\newblock Termination of term rewriting:\ interpretation and type elimination.
\newblock {\em JSC}, 17(1):23--50, 1994.
\newblock \href {https://doi.org/10.1006/jsco.1994.1003}
  {\path{doi:10.1006/jsco.1994.1003}}.

\end{thebibliography}

\appendix

\section{Proofs of the Results}

\subsection{The Proof of Theorem~\ref{thm:chainTermination}}\label{app:proofChainTermination}
\Cref{thm:chainTermination} is closely connected to \cref{thm:chainComputability},
which is a more general version of the former,
and the following proof is formulated on the basis of a common premise---%
\cref{cor:chainExistence} (see \cref{app:proofChainComputability})---%
to stress the connection.

\chainTermination*

\begin{proof}
  Assume that \( \setRule \) is not terminating.
  By definition, there exists a non-terminating term \( \trmC \),
  which cannot be \( \setReCa \)-computable.
  Take a minimal subterm \( \trmB \) of \( \trmC \)
  such that \( \trmB \) is not \( \setReCa \)-computable,
  then \( \trmB \) must take the form \( \appl{\funA}{\vect{\trmB}{1}{k}{\cdots}} \) where
  \( \funA \) is a defined symbol and
  \( \trmB_i \) is \( \setReCa \)-computable for all \( i \).
  Let the type of \( \funA \) be denoted by
  \( \vect{\typA}{1}{m}{\arrType \cdots \arrType} \arrType \typB \)
  where \( \typB \) is a sort.
  Because \( \trmB = \appl{\funA}{\vect{\trmB}{1}{k}{\cdots}} \)
  is not \( \setReCa \)-computable,
  there exist \( \setReCa \)-computable terms \( \vect{\trmB}{k + 1}{m}{, \ldots,} \) such that
  \( \appl{\trmB}{\vect{\trmB}{k + 1}{m}{\cdots}} = \appl{\funA}{\vect{\trmB}{1}{m}{\cdots}} \)
  is not \( \setReCa \)-computable.

  Now due to \cref{cor:chainExistence}
  (with \( \setRule_0 = \setRule \) and \( \setRule_1 = \emptyset \)),
  there exists an infinite computable \( (\setStDP{\setRule}, \setRule) \)-chain.
  The non-existence of such a chain implies the termination of \( \setRule \).
\end{proof}

\subsection{The Proof of Theorem~\ref{thm:dpframeworkSoundness}}\label{app:proofDpframeworkSoundness}
We prove the soundness of each class of the DP processors separately.

\begin{theorem}
  Graph processors are sound.
\end{theorem}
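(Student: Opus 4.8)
The plan is to argue by contraposition: I assume the input DP problem \( \setDePa \) is not finite and exhibit a non-finite DP problem among those returned by the graph processor. By the definition of finiteness, the assumption yields an infinite computable \( (\setDePa, \setRule) \)-chain \( (\sdpA_0, \subA_0), (\sdpA_1, \subA_1), \ldots \), and the goal is to locate a non-trivial SCC \( \sccA \) of the approximating graph such that the set \( \setDePa_\sccA \), namely \( \setComp{\sdpA \in \setDePa}{\homA(\sdpA)\text{ belongs to }\sccA} \), already carries an infinite computable chain.

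First I would pass from the chain to a walk in the finite graph \( \grhA_\homA \). The key observation is that, for every \( i \), the two-element sequence \( (\sdpA_i, \subA_i), (\sdpA_{i+1}, \subA_{i+1}) \) is itself a \( (\setDePa, \setRule) \)-chain, since the conditions defining a chain only constrain consecutive members. Hence the defining property of a graph approximation supplies an edge from \( \homA(\sdpA_i) \) to \( \homA(\sdpA_{i+1}) \), and applying \( \homA \) to the whole chain produces an infinite walk \( \homA(\sdpA_0) \to \homA(\sdpA_1) \to \cdots \) in \( \grhA_\homA \).

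Next I would invoke finiteness of \( \grhA_\homA \): some vertex \( v \) recurs infinitely often in this walk. Between two consecutive occurrences of \( v \) the walk is a closed walk of positive length, so \( v \) lies on a cycle and its SCC \( \sccA \) is non-trivial; moreover every vertex the walk visits strictly between two occurrences of \( v \) is both reachable from \( v \) and able to reach \( v \), hence lies in \( \sccA \). Letting \( i_0 \) be the index of the first occurrence of \( v \), it follows that \( \homA(\sdpA_i) \) belongs to \( \sccA \), i.e.\ \( \sdpA_i \in \setDePa_\sccA \), for all \( i \ge i_0 \).

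Finally I would verify that the tail \( (\sdpA_{i_0}, \subA_{i_0}), (\sdpA_{i_0 + 1}, \subA_{i_0 + 1}), \ldots \) is an infinite computable \( (\setDePa_\sccA, \setRule) \)-chain: membership of each \( \sdpA_i \) in \( \setDePa_\sccA \) was just shown, while the respect conditions on the \( \subA_i \), the reachability conditions \( \shrp{\trmA_{i - 1}} \subA_{i - 1} \arrRwrt_\setRule^* \shrp{\trmB_i} \subA_i \) and computability are all inherited verbatim from the original chain. Thus \( \setDePa_\sccA \) is not finite, and since \( \sccA \) is a non-trivial SCC it is one of the problems returned by the graph processor---contradicting the hypothesis that all returned problems are finite. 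The one genuinely delicate point is the combinatorial claim that, once \( v \) first recurs, the walk never leaves \( \sccA \); this is the standard fact that the suffix of an infinite walk in a finite graph beyond a recurring vertex stays within that vertex's component, and everything else is bookkeeping.
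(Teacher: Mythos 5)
Your proof is correct and follows essentially the same route as the paper's: map the infinite computable chain to an infinite walk in the finite graph \( \grhA_\homA \), observe that eventually the walk stays within a single non-trivial SCC, and conclude that the tail of the chain is an infinite computable chain for the corresponding returned DP problem. You spell out two points the paper leaves implicit (the non-triviality of the SCC via the closed walk between recurrences of \( v \), and the verification that the tail really is a \( (\setDePa_\sccA, \setRule) \)-chain), but the underlying argument is the same.
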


\begin{proof}
  Given a DP problem \( \setDePa \) and
  a graph approximation \( (\grhA_\homA, \homA) \) for \( \setDePa \),
  every \( (\setDePa, \setRule) \)-chain induces a walk in \( \grhA_\homA \).
  In particular, assume given an infinite \( (\setDePa, \setRule) \)-chain
  \( (\sdpA_0, \subA_0), (\sdpA_1, \subA_1), \ldots \),
  then \( \homA(\sdpA_0), \homA(\sdpA_1), \ldots \) is an infinite sequence of vertices of \( \grhA_\homA \)
  such that there is an edge from \( \homA(\sdpA_{i - 1}) \) to \( \homA(\sdpA_i) \)
  for all \( i > 0 \).
  Since \( \grhA_\homA \) is finite, there exists \( N \) such that
  \( \bind{\forall}{i > N}{\bind{\exists}{k > i}{\homA(\sdpA_i) = \homA(\sdpA_k)}} \).
  Hence, there is a strongly connected component
  to which \( \homA(\sdpA_i) \) belong for all \( i > N \).
\end{proof}

The following result is not part of \cref{thm:dpframeworkSoundness},
and is not formally stated above;
we nevertheless need it for the correctness of the implementation.

\begin{lemma}
  The \textsc{Cap}-based approach in \cref{subsec:dpgraph}
  produces a graph approximation.
\end{lemma}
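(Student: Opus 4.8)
The plan is to show that the \textsc{Cap}-based construction never omits a required edge: whenever $(\sdpA_0, \subA_0), (\sdpA_1, \subA_1)$ is a $(\setDePa, \setRule)$-chain with $\sdpA_0 = \dpcv{\shrp{\trmB_0}}{\shrp{\trmA_0}}{\rulC_0}{\gtvA_0}$ and $\sdpA_1 = \dpcv{\shrp{\trmB_1}}{\shrp{\trmA_1}}{\rulC_1}{\gtvA_1}$, the formula $\rulC_0 \wedge \rulC_1 \wedge \tcap(\shrp{\trmA_0}, \shrp{\trmB_1})$ is satisfiable, so that the approach places an edge from $\homA(\sdpA_0)$ to $\homA(\sdpA_1)$. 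Extra, spurious edges are harmless for a graph approximation, so only this inclusion needs proof. First I would rename apart the variables of the two SDPs and merge $\subA_0$ and $\subA_1$ into a single substitution $\subA$; then $\mapNtrp{\rulC_0\subA} = \mapNtrp{\rulC_1\subA} = \algT$, the substitution $\subA$ maps every variable of $\gtvA_0 \cup \gtvA_1$ to a ground theory term, and $\shrp{\trmA_0}\subA \arrRwrt_\setRule^* \shrp{\trmB_1}\subA$. The candidate satisfying assignment is the valuation $\subA$ induces on theory-sort variables, namely $\varA \mapsto \mapNorm{(\subA(\varA))}_\lblCalc$ wherever $\subA(\varA)$ is a ground theory term. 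The $\rulC_0$ and $\rulC_1$ conjuncts hold under this valuation by construction, so the crux is the $\tcap$ conjunct.

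Second, I would prove the central lemma by induction following the recursive definition of $\tcap$ (equivalently, on the structure of its first argument): if $\trmC\subA \arrRwrt_\setRule^* \trmD\subA$, then $\tcap(\trmC, \trmD)$ evaluates to $\algT$ under the valuation above. In the first branch, case (1), the head $\funA$ of $\trmC$ matches no rule of arity at most $n$, so no rule step fires at the top of $\trmC\subA$; I would then argue that no calculation step fires at the top either, since a top calculation would turn the head into a value normal form, contradicting that $\trmD\subA$ is again $\funA$-headed. Hence the whole reduction stays below the root, decomposes as $\trmC_i\subA \arrRwrt_\setRule^* \trmD_i\subA$ for each $i$, and the induction hypothesis yields the conjunction $\tcap(\trmC_1,\trmD_1)\wedge\cdots\wedge\tcap(\trmC_n,\trmD_n)$. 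Case (2), where $\tcap$ returns $\booF$, I would show to be incompatible with the premise: under its side conditions ($\funA$ not a theory symbol, or the target head $\funB$ not a value) the reduct $\trmD\subA$ could not be headed by $\funB \neq \funA$, so this case simply does not arise once a reduction is assumed. In the remaining ``otherwise'' branch, the safe value $\booT$ is trivially satisfied, while the informative case $\tcap(\trmC,\trmD) = (\trmC \relThEq \trmD)$ requires $\trmC \in \setTerm{\setTheo{\setFunc}}{\gtvA_0}$ of base type: then $\trmC\subA$ is a ground theory term, its reducts are all ground theory terms with the same interpretation, and $\trmC\subA \arrRwrt_\setRule^* \trmD\subA$ forces $\subA$ to send the (theory-sort) variables of $\trmD$ to ground theory terms with $\mapNtrp{\trmC\subA} = \mapNtrp{\trmD\subA}$; hence $\trmC \relThEq \trmD$ holds under the valuation.

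The main obstacle I anticipate is the careful treatment of calculation steps interleaved with rule steps, which is what distinguishes this constrained argument from the classical \textsc{Cap} correctness proof. Concretely, cases (1) and (2) of the first branch both hinge on the claim that a term headed by a non-reducible (or purely structural) symbol cannot rewrite to a term with an incompatible head once calculation is taken into account, and establishing this requires tracking when a theory-symbol application can and cannot fire a calculation step and what normal form it reaches. A secondary point needing care is that the single valuation must simultaneously satisfy every $\relThEq$ equation produced by $\tcap$; this is resolved by defining the valuation globally from $\subA$, so that it is automatically consistent across equations, and by invoking that the interpretation $\mapNtrp{\cdot}$ is preserved along $\arrRwrt_\setRule$ on ground theory terms. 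Verifying that every variable occurring in the formula is indeed assigned a value by this global valuation is the bookkeeping that closes the proof.
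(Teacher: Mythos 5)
Your proposal is correct and follows essentially the same route as the paper's proof: reduce to the ``no missing edge'' direction, merge the chain's substitutions into one $\subA$, take the valuation it induces on theory variables, and show by induction on the first argument of $\tcap$ that every branch of its definition is satisfied, with case (1) decomposing argument-wise, case (2) being impossible, and the $\relThEq$ case using that a ground theory term of base type only admits calculation steps, which preserve the interpretation. Your valuation $\varA \mapsto \mapNorm{(\subA(\varA))}_\lblCalc$ versus the paper's $\varA \mapsto \mapNtrp{\subA(\varA)}$ is an immaterial difference, and your explicit justification of why root steps cannot occur in cases (1) and (2) merely spells out what the paper asserts as ``inconsistent with $\trmC \subA \arrRwrt_\setRule^* \trmD \subA$.''
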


\begin{proof}
  Take SDPs \( \dpcv{\shrp{\trmB_0}}{\shrp{\trmA_0}}{\rulC_0}{\gtvA_0} \) and
  \( \dpcv{\shrp{\trmB_1}}{\shrp{\trmA_1}}{\rulC_1}{\gtvA_1} \),
  where variables are renamed if necessary to avoid name collisions.
  There ought to be an edge
  from \( \homA(\dpcv{\shrp{\trmB_0}}{\shrp{\trmA_0}}{\rulC_0}{\gtvA_0}) \)
  to \( \homA(\dpcv{\shrp{\trmB_1}}{\shrp{\trmA_1}}{\rulC_1}{\gtvA_1}) \)
  if there exists a substitution \( \subA \) such that
  \begin{enumerate*}
  \item \( \subA(\varA) \) is a ground theory term
    for all \( \varA \in \gtvA_0 \cup \gtvA_1 \),
  \item \( \mapNtrp{\rulC_0 \subA} = \mapNtrp{\rulC_1 \subA} = \algT \), and
  \item \( \shrp{\trmA_0} \subA \arrRwrt_\setRule^* \shrp{\trmB_1} \subA \).
  \end{enumerate*}
  Given such a substitution \( \subA \),
  the goal is to prove that
  \( \rulC_0 \wedge \rulC_1 \wedge \tcap(\shrp{\trmA_0}, \shrp{\trmB_1}) \) is satisfiable.
  We claim that \( \varA \mapsto \mapNtrp{\subA(\varA)} \) is a satisfying assignment.
  Since \( \mapNtrp{\rulC_0 \subA} = \mapNtrp{\rulC_1 \subA} = \algT \),
  we need only to show that \( \tcap(\shrp{\trmA_0}, \shrp{\trmB_1}) \)
  is satisfied by the assignment.
  To do so, we prove by induction on \( \trmC \)
  that \( \tcap(\trmC, \trmD) \) is satisfied for all \( \trmC \) and \( \trmD \) such that
  \( \trmC \subA \arrRwrt_\setRule^* \trmD \subA \).
  Consider each case of the definition of \( \tcap(\trmC, \trmD) \):
  \begin{itemize}
  \item \( \tcap(\trmC, \trmD) = \tcap(\trmC_1, \trmD_1) \wedge \cdots \wedge \tcap(\trmC_n, \trmD_n) \).
    In this case, \( \trmC_i \subA \arrRwrt_\setRule^* \trmD_i \subA \) for all \( i \).
    By induction, \( \tcap(\trmC_i, \trmD_i) \) is satisfied for all \( i \).
  \item \( \tcap(\trmC, \trmD) = \booF \).
    This case is impossible because
    its condition is inconsistent with \( \trmC \subA \arrRwrt_\setRule^* \trmD \subA \).
  \item \( \tcap(\trmC, \trmD) = (\trmC \relThEq \trmD) \).
    In this case, \( \trmC \subA \arrRwrt_\lblCalc^* \trmD \subA \).
    Hence, \( \subA(\varA) \) is a ground theory term for all \( \varA \in \setFvar{\trmD} \),
    and \( \mapNtrp{\trmC \subA} = \mapNtrp{\trmD \subA} \).
  \item \( \tcap(\trmC, \trmD) = \booT \).
    In this case, \( \tcap(\trmC, \trmD) \) is trivially satisfied.
    \qedhere
  \end{itemize}
\end{proof}

\begin{remark}
  This approach works for universal DP problems as well
  since function symbols in the base system cannot be defined in an extension.
\end{remark}

\begin{theorem}
  Subterm criterion processors are sound.
\end{theorem}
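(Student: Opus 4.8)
The plan is to prove soundness by contraposition: assuming the input problem \( \setDePa \) is \emph{not} finite, I will construct an infinite computable \( (\setDePa \setminus \setDePa', \setRule) \)-chain, witnessing that the unique output problem \( \setDePa \setminus \setDePa' \) is not finite either. So fix a projection \( \prjA \) which \( \relSupp \)-orients \( \setDePa' \), together with an infinite computable \( (\setDePa, \setRule) \)-chain \( (\sdpA_0, \subA_0), (\sdpA_1, \subA_1), \ldots \) where \( \sdpA_i = \dpcv{\shrp{\trmB_i}}{\shrp{\trmA_i}}{\rulC_i}{\gtvA_i} \).

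First I would set up three ingredients. Stability: \( \relSupt \) and \( \relSupp \) are preserved under substitution, proved by a routine induction on the definition of \( \relSupt \); the only delicate clause is \( \trmB = \appl{\trmB_0}{\trmB_1} \) with \( \trmB_0 \relSupp \trmA \), where one checks that \( \trmB_0 \subA \neq \trmA \subA \) still holds because \( \trmA \) sits inside a nontrivial context of \( \trmB_0 \) that survives the substitution. Head preservation: since the common head symbol \( \shrp{\funB} \) is a fresh marked symbol occurring in no rewrite rule, the reduction \( \shrp{\trmA_i} \subA_i \arrRwrt_\setRule^* \shrp{\trmB_{i+1}} \subA_{i+1} \) can take no step at the root, so both sides are headed by \( \shrp{\funB} \) with the same arity, their arguments reduce componentwise, and hence \( \bar{\prjA}(\shrp{\trmA_i}) \subA_i \arrRwrt_\setRule^* \bar{\prjA}(\shrp{\trmB_{i+1}}) \subA_{i+1} \). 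Computability: \( \bar{\prjA}(\shrp{\trmA_i}) \) is an argument of \( \shrp{\trmA_i} \), hence a proper maximally-applied subterm of the unsharped right-hand side \( \trmA_i \) (even when the projected position is a fresh variable, that variable occurs in \( \trmA_i \)), so the definition of a computable chain yields that \( \bar{\prjA}(\shrp{\trmA_i}) \subA_i \) is \( \setReCa \)-computable, and thus \( \arrRwrt_\setRule \)-terminating.

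Combining stability with the orientation gives \( \bar{\prjA}(\shrp{\trmB_i}) \subA_i \relSupp \bar{\prjA}(\shrp{\trmA_i}) \subA_i \) when \( \sdpA_i \in \setDePa' \) and equality otherwise. Chaining this with head preservation produces an infinite sequence \( \bar{\prjA}(\shrp{\trmA_0}) \subA_0, \bar{\prjA}(\shrp{\trmA_1}) \subA_1, \ldots \) in which consecutive elements are related by \( {\arrRwrt_\setRule^*} \cdot ({\relSupp} \text{ or } {=}) \), with a strict \( \relSupp \)-step occurring exactly when the next SDP lies in \( \setDePa' \). The crux, and the step I expect to be the main obstacle, is then the combinatorial lemma that an \( \arrRwrt_\setRule \)-terminating term admits no infinite \( ({\arrRwrt_\setRule} \cup {\relSupp}) \)-sequence containing infinitely many \( \relSupp \)-steps. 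I would prove it by a context-tracking argument exploiting the monotonicity of \( \arrRwrt_\setRule \) (closure under arbitrary contexts): maintain a term \( \plug{\ctxA_j}{e_j} \) where \( e_j \) is the current element; a rewrite step \( e_j \arrRwrt_\setRule e_{j+1} \) is replayed inside the fixed context as a genuine rewrite step \( \plug{\ctxA_j}{e_j} \arrRwrt_\setRule \plug{\ctxA_j}{e_{j+1}} \), while a subterm step \( e_j = \plug{D}{e_{j+1}} \) is absorbed by enlarging the context to \( \plug{\ctxA_j}{D} \), leaving \( \plug{\ctxA_j}{e_j} \) unchanged. An infinite sequence with infinitely many rewrite steps then gives an infinite \( \arrRwrt_\setRule \)-reduction of the terminating initial term, while one with only finitely many rewrite steps ends in an infinite pure \( \relSupp \)-descent, impossible since \( \relSupp \) strictly decreases term size.

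Applying the lemma to \( \bar{\prjA}(\shrp{\trmA_0}) \subA_0 \), which is \( \arrRwrt_\setRule \)-terminating by the computability ingredient, forces only finitely many indices \( i \) with \( \sdpA_i \in \setDePa' \). Discarding the finite prefix up to the last such index leaves an infinite tail all of whose SDPs lie in \( \setDePa \setminus \setDePa' \); since the chain condition and the computability condition hold per index and so are inherited by any suffix, this tail is an infinite computable \( (\setDePa \setminus \setDePa', \setRule) \)-chain. Thus \( \setDePa \setminus \setDePa' \) is not finite, and by contraposition every subterm criterion processor is sound.
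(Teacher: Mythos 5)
Your proof is correct and takes essentially the same route as the paper's: project the chain componentwise (using that the sharped head symbols admit no root steps), use computability of the chain to obtain termination of the projected argument, and conclude that strict \( \relSupp \)-steps occur only finitely often, so a tail of the chain is an infinite computable \( (\setDePa \setminus \setDePa^\prime, \setRule) \)-chain. The only difference is one of detail: where the paper simply asserts the existence of the cut-off indices \( N \) and \( N^\prime \) beyond which the rewrite parts and then the subterm parts become trivial, you explicitly state and prove the underlying fact---well-foundedness of \( {\arrRwrt_\setRule} \cup {\relSupp} \) on terminating terms---via your context-tracking lemma, which is a standard and valid way to discharge that step.
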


\begin{proof}
  Given a DP problem \( \setDePa \),
  a projection \( \prjA \) which \( \relSupp \)-orients
  a subset \( \setDePa^\prime \) of \( \setDePa \) and
  an infinite computable \( (\setDePa, \setRule) \)-chain
  \( (\dpcv{\shrp{\trmB_0}}{\shrp{\trmA_0}}{\rulC_0}{\gtvA_0}, \subA_0), (\dpcv{\shrp{\trmB_1}}{\shrp{\trmA_1}}{\rulC_1}{\gtvA_1}, \subA_1), \ldots \),
  by definition, for all \( i > 0 \), there exists a context \( \plug{\ctxA_i}{} \) such that
  \( \bar{\prjA}(\shrp{\trmA_{i - 1}} \subA_{i - 1}) \arrRwrt_\setRule^* \bar{\prjA}(\shrp{\trmB_i} \subA_i) = \plug{\ctxA_i}{\bar{\prjA}(\shrp{\trmA_i} \subA_i)} \).
  Because the \( (\setDePa, \setRule) \)-chain is computable,
  in particular, \( \bar{\prjA}(\shrp{\trmA_0} \subA_0) \)
  is \( \setReCa \)-computable and therefore terminating.
  Hence, there exists \( N \) such that for all \( i > N \),
  \( \bar{\prjA}(\shrp{\trmA_{i - 1}} \subA_{i - 1}) = \bar{\prjA}(\shrp{\trmB_i} \subA_i) \relSupt \bar{\prjA}(\shrp{\trmA_i} \subA_i) \).
  Now since \( \relSupp \) is well-founded,
  there exists \( N^\prime \) such that for all \( i > N^\prime \),
  \( \bar{\prjA}(\shrp{\trmB_i} \subA_i) = \bar{\prjA}(\shrp{\trmA_i} \subA_i) \).
  So \( \dpcv{\shrp{\trmB_i}}{\shrp{\trmA_i}}{\rulC_i}{\gtvA_i} \notin \setDePa^\prime \)
  for all \( i > N^\prime \).
\end{proof}

\begin{theorem}
  Integer mapping processors are sound.
\end{theorem}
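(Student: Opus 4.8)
The plan is to prove soundness by contraposition, reducing an infinite computable chain for $\setDePa$ to one for $\setDePa \setminus \setDePa^\prime$. Concretely, I would fix an integer mapping $\inmA$ that $>$-orients $\setDePa^\prime$ and assume, towards a contradiction, that $\setDePa \setminus \setDePa^\prime$ is finite but $\setDePa$ is not, so that there is an infinite computable chain $(\dpcv{\shrp{\trmB_0}}{\shrp{\trmA_0}}{\rulC_0}{\gtvA_0}, \subA_0), (\dpcv{\shrp{\trmB_1}}{\shrp{\trmA_1}}{\rulC_1}{\gtvA_1}, \subA_1), \ldots$ To this chain I attach the sequence of integers $d_i = \mapNtrp{\bar{\inmA}(\shrp{\trmB_i} \subA_i)}$, and the goal is to show that $(d_i)$ weakly decreases, strictly at every index whose SDP lies in $\setDePa^\prime$, while staying non-negative there. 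A non-negative integer sequence cannot strictly decrease infinitely often, so only finitely many SDPs in the chain belong to $\setDePa^\prime$; discarding a suitable prefix then leaves an infinite computable $(\setDePa \setminus \setDePa^\prime, \setRule)$-chain, contradicting finiteness.

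First I would justify that each $d_i$ is a well-defined integer and that the orientation conditions apply to the actual chain substitutions. For $\shrp{\trmB_i} = \appl{\shrp{\funA}}{\vect{\trmC}{1}{n}{\cdots}}$, only the arguments at positions in $\setFreI{\shrp{\funA}}$ enter $\bar{\inmA}$, and these are by definition theory terms over $\gtvA_i$; since $\subA_i$ respects the SDP and hence maps $\gtvA_i$ to ground theory terms, $\bar{\inmA}(\shrp{\trmB_i}) \subA_i = \bar{\inmA}(\shrp{\trmB_i} \subA_i)$ (the variables $\varA_{\shrp{\funA}, j}$ being fresh) is a ground theory term of type $\tyIn$, so $d_i$ is defined; the same holds for $\shrp{\trmA_i}$. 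Because the $>$-orientation is phrased through $\models$, i.e.\ over substitutions into values, I would replace $\subA_i$ by the value substitution $\tau_i$ with $\tau_i(\varA) = \mapNorm{\subA_i(\varA)}_\lblCalc$ on $\gtvA_i$; as calculation steps preserve interpretations, $\mapNtrp{\rulC_i \tau_i} = \mapNtrp{\rulC_i \subA_i} = \algT$, while $\mapNtrp{\bar{\inmA}(\shrp{\trmB_i}) \tau_i} = d_i$ and likewise $\mapNtrp{\bar{\inmA}(\shrp{\trmA_i}) \tau_i} = \mapNtrp{\bar{\inmA}(\shrp{\trmA_i} \subA_i)}$. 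The orientation then gives $d_i \ge 0$ and $d_i > \mapNtrp{\bar{\inmA}(\shrp{\trmA_i} \subA_i)}$ when the $i$-th SDP is in $\setDePa^\prime$, and $d_i \ge \mapNtrp{\bar{\inmA}(\shrp{\trmA_i} \subA_i)}$ otherwise.

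The hard part will be the linking step $\mapNtrp{\bar{\inmA}(\shrp{\trmA_i} \subA_i)} = \mapNtrp{\bar{\inmA}(\shrp{\trmB_{i+1}} \subA_{i+1})}$, which bridges consecutive chain elements through the condition $\shrp{\trmA_i} \subA_i \arrRwrt_\setRule^* \shrp{\trmB_{i+1}} \subA_{i+1}$. Here I would exploit that the marked head symbol occurs in no rule of $\setRule$ and is not a theory symbol, so no step of this reduction takes place at the root; the two terms therefore share the same head $\shrp{\funB}$ and reduce argumentwise. At each position in $\setFreI{\shrp{\funB}}$ the corresponding arguments become, under $\subA_i$ and $\subA_{i+1}$, ground theory terms, and a ground theory term admits only calculation steps—no rule applies, since every left-hand side contains a symbol in $\setFunc \setminus \setTheo{\setFunc}$—which leave interpretations unchanged; hence these arguments agree in interpretation and so do the two $\bar{\inmA}$-values. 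Chaining this equality with the per-SDP (in)equalities yields $d_i \ge d_{i+1}$ throughout, with a strict drop precisely at the $\setDePa^\prime$-indices, each bounded below by $0$. Well-foundedness of $\ge$ on the non-negative integers then forces finitely many such indices, so that beyond some point every SDP in the chain lies in $\setDePa \setminus \setDePa^\prime$; the corresponding suffix is an infinite computable $(\setDePa \setminus \setDePa^\prime, \setRule)$-chain, contradicting its assumed finiteness.
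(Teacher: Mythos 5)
Your proof is correct and follows essentially the same route as the paper's: interpret the $\setFreI$-arguments as ground theory terms, pass to the $\lblCalc$-normalized (value) substitution to invoke the $\models$-based orientation, link consecutive chain elements via the fact that ground theory terms admit only interpretation-preserving calculation steps, and conclude that strict, bounded-below decreases can occur only finitely often. The only difference is presentational — you argue by contraposition and spell out the head-symbol/argumentwise-reduction justification that the paper compresses into ``by definition'' — which is a faithful elaboration, not a different method.
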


\begin{proof}
  Given a DP problem \( \setDePa \),
  an integer mapping \( \inmA \) which \( > \)-orients
  a subset \( \setDePa^\prime \) of \( \setDePa \) and
  an infinite computable \( (\setDePa, \setRule) \)-chain
  \( (\dpcv{\shrp{\trmB_0}}{\shrp{\trmA_0}}{\rulC_0}{\gtvA_0}, \subA_0), (\dpcv{\shrp{\trmB_1}}{\shrp{\trmA_1}}{\rulC_1}{\gtvA_1}, \subA_1), \ldots \),
  by definition, for all \( i \),
  \( \bar{\inmA}(\shrp{\trmB_i}) \subA_i \) and \( \bar{\inmA}(\shrp{\trmA_i}) \subA_i \)
  are ground theory terms, and
  \( \bar{\inmA}(\shrp{\trmA_{i - 1}}) \subA_{i - 1} =
     \bar{\inmA}(\shrp{\trmA_{i - 1}} \subA_{i - 1}) \arrRwrt_\lblCalc^*
     \bar{\inmA}(\shrp{\trmB_i} \subA_i) =
     \bar{\inmA}(\shrp{\trmB_i}) \subA_i \)
  if \( i > 0 \).
  Let \( \subA^{\arrNorm_\lblCalc} \) be the substitution such that
  \( \subA^{\arrNorm_\lblCalc}(\varA) = \mapNorm{\subA(\varA)}_\lblCalc \),
  given any substitution \( \subA \).
  Then \( \subA_i^{\arrNorm_\lblCalc}(\varA) \) is a value
  for all \( i \) and \( \varA \in \gtvA_i \).
  Now for all \( i \),
  since \( \mapNtrp{\rulC_i \subA_i^{\arrNorm_\lblCalc}} = \mapNtrp{\rulC_i \subA_i} = \algT \), we have
  \( \mapNtrp{\bar{\inmA}(\shrp{\trmB_i}) \subA_i} =
     \mapNtrp{\bar{\inmA}(\shrp{\trmB_i}) \subA_i^{\arrNorm_\lblCalc}} \ge
     \mapNtrp{\bar{\inmA}(\shrp{\trmA_i}) \subA_i^{\arrNorm_\lblCalc}} =
     \mapNtrp{\bar{\inmA}(\shrp{\trmA_i}) \subA_i} \),
  and if \( \dpcv{\shrp{\trmB_i}}{\shrp{\trmA_i}}{\rulC_i}{\gtvA_i} \in \setDePa^\prime \),
  the inequality is strict with \( \mapNtrp{\bar{\inmA}(\shrp{\trmB_i}) \subA_i} \ge 0 \).
  Hence, there exists \( N \) such that for all \( i > N \),
  \( \dpcv{\shrp{\trmB_i}}{\shrp{\trmA_i}}{\rulC_i}{\gtvA_i} \notin \setDePa^\prime \).
\end{proof}

\begin{theorem}\label{thm:theoryargSoundness}
  Theory argument processors are sound.
\end{theorem}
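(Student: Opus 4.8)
The plan is to prove the contrapositive. Assume an infinite computable $(\setDePa, \setRule)$-chain $(\sdpA_0, \subA_0), (\sdpA_1, \subA_1), \ldots$, and produce an infinite computable chain for one of the two problems the processor returns, namely $\setComp{\bar{\tamA}(\sdpA)}{\sdpA \in \setDePa}$ and $\setDePa \setminus \setDePa^\prime$. The useful observation is that $\bar{\tamA}$ only \emph{enlarges} the component $\gtvA$ of an SDP and leaves both terms fixed; hence a suffix of the given chain already satisfies the rewrite condition $\shrp{\trmA_{i-1}} \subA_{i-1} \arrRwrt_\setRule^* \shrp{\trmB_i} \subA_i$ and the computability condition (both of which refer only to the terms and the $\subA_i$). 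Reading such a suffix as a chain for the first problem, the sole thing to verify is that each $\subA_i$ still \emph{respects} the enlarged SDP $\bar{\tamA}(\sdpA_i)$; note also that the variables $\bar{\tamA}$ adds are of theory sort, so $\bar{\tamA}(\sdpA)$ is a legitimate SDP. The argument then splits on how often SDPs from $\setDePa^\prime$ occur.

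If $\setDePa^\prime$ occurs only finitely often, the suffix beyond its last occurrence uses only SDPs from $\setDePa \setminus \setDePa^\prime$ and is itself an infinite computable $(\setDePa \setminus \setDePa^\prime, \setRule)$-chain, so that problem is not finite.

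The interesting case is when $\setDePa^\prime$ occurs infinitely often. Here I would prove a propagation lemma by induction along the chain: if $\subA_{i-1}$ respects $\bar{\tamA}(\sdpA_{i-1})$, then $\subA_i$ respects $\bar{\tamA}(\sdpA_i)$. Write the right-hand side of $\sdpA_{i-1}$ as $\appl{\shrp{\funB}}{\vect{\trmA^{(i-1)}}{1}{n}{\cdots}}$ and the left-hand side of $\sdpA_i$ as $\appl{\shrp{\funB}}{\vect{\trmB^{(i)}}{1}{n}{\cdots}}$ — the heads must coincide, since marked symbols are fresh and never rewritten, so the step descends into arguments and yields $\trmA^{(i-1)}_j \subA_{i-1} \arrRwrt_\setRule^* \trmB^{(i)}_j \subA_i$ for each $j \in \tamA(\shrp{\funB})$. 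The theory-argument-mapping conditions make $\trmA^{(i-1)}_j$ a theory term whose variables lie in $\gtvA_{i-1}$ together with the free variables of the theory-argument positions of $\sdpA_{i-1}$'s left-hand side; under the induction hypothesis $\subA_{i-1}$ maps all of these to ground theory terms, so $\trmA^{(i-1)}_j \subA_{i-1}$ is a ground theory term. The crux is then a two-part fact: a ground theory term can be rewritten only by calculation steps (no rule applies, as every rule's left-hand side contains a non-theory symbol) and these preserve ground theory terms, so $\trmB^{(i)}_j \subA_i$ is again a ground theory term; and since $\trmB^{(i)}_j$ is a theory term all of whose variables are of theory sort, $\subA_i(\varA)$ must be a ground theory term for every $\varA \in \setFvar{\trmB^{(i)}_j}$. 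Combined with $\subA_i$ respecting $\sdpA_i$, this gives that $\subA_i$ respects $\bar{\tamA}(\sdpA_i)$.

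The base case is where the \emph{fix} condition enters, and getting it to seed the induction is the main obstacle. Taking $i_0$ minimal with $\sdpA_{i_0} \in \setDePa^\prime$, the fix condition strengthens the membership bound to $\setFvar{\trmA^{(i_0)}_j} \subseteq \gtvA_{i_0}$ for $j \in \tamA(\shrp{\funB})$ — crucially without reference to $\sdpA_{i_0}$'s left-hand side — so that $\trmA^{(i_0)}_j \subA_{i_0}$ is a ground theory term using only that $\subA_{i_0}$ respects $\sdpA_{i_0}$, and the same two-part fact yields that $\subA_{i_0+1}$ respects $\bar{\tamA}(\sdpA_{i_0+1})$. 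Propagating forward, $\subA_i$ respects $\bar{\tamA}(\sdpA_i)$ for all $i > i_0$, so the suffix $(\bar{\tamA}(\sdpA_{i_0+1}), \subA_{i_0+1}), (\bar{\tamA}(\sdpA_{i_0+2}), \subA_{i_0+2}), \ldots$ is an infinite computable chain for $\setComp{\bar{\tamA}(\sdpA)}{\sdpA \in \setDePa}$, so that problem is not finite. In either case one returned problem is non-finite, which is the contrapositive of soundness. I expect the delicate points to be the invariance of ground theory terms under rewriting (rather than the case analysis) and the careful matching of the \emph{fix} condition on right-hand sides against the left-hand enlargement performed by $\bar{\tamA}$.
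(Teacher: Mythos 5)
Your proof is correct and follows essentially the same route as the paper's: a case split on whether the chain ever uses an SDP from $\setDePa^\prime$, with the \emph{fix} condition seeding a propagation argument (ground theory terms are preserved under rewriting below a marked head symbol) that shows every later $\subA_i$ respects $\bar{\tamA}(\sdpA_i)$, yielding a chain for the enlarged problem, while the other case yields a $(\setDePa \setminus \setDePa^\prime, \setRule)$-chain. Your finitely-often/infinitely-often split and your explicit justification of the head-coincidence and ground-theory-term preservation steps are only presentational refinements of the paper's argument.
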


\begin{proof}
  Given a DP problem \( \setDePa \),
  a theory argument mapping \( \tamA \) which fixes
  a subset \( \setDePa^\prime \) of \( \setDePa \) and
  an infinite computable \( (\setDePa, \setRule) \)-chain
  \( (\dpcv{\shrp{\trmB_0}}{\shrp{\trmA_0}}{\rulC_0}{\gtvA_0}, \subA_0), (\dpcv{\shrp{\trmB_1}}{\shrp{\trmA_1}}{\rulC_1}{\gtvA_1}, \subA_1), \ldots \),
  we consider two cases:
  \begin{itemize}
  \item If \( \dpcv{\shrp{\trmB_k}}{\shrp{\trmA_k}}{\rulC_k}{\gtvA_k} \in \setDePa^\prime \)
    for some \( k \) and
    \( \shrp{\trmA_k} = \appl{\shrp{\funA}}{\vect{\trmC}{1}{m}{\cdots}} \),
    by definition, for all \( i \in \tamA(\shrp{\funA}) \),
    \( \setFvar{\trmC_i} \subseteq \gtvA_k \) and therefore
    \( \trmC_i \subA_k \) is a ground theory term.
    Consider \( \shrp{\trmB_{k + 1}} = \appl{\shrp{\funA}}{\vect{\trmC^\prime}{1}{m}{\cdots}} \).
    For all \( i \), we have
    \( \trmC_i \subA_k \arrRwrt_\setRule^* \trmC_i^\prime \subA_{k + 1} \),
    and if \( i \in \tamA(\shrp{\funA}) \),
    \( \trmC_i^\prime \subA_{k + 1} \) is a ground theory term since \( \trmC_i \subA_k \) is.
    Hence, \( \subA_{k + 1}(\varA) \) is a ground theory term for all
    \( \varA \in \gtvA_{k + 1} \cup \bigcup_{i \in \tamA(\shrp{\funA})} \setFvar{\trmC_i^\prime} \),
    and \( \subA_{k + 1} \) respects
    \( \bar{\tamA}(\dpcv{\shrp{\trmB_{k + 1}}}{\shrp{\trmA_{k + 1}}}{\rulC_{k + 1}}{\gtvA_{k + 1}}) \).
    Now consider \( \shrp{\trmA_{k + 1}} = \appl{\shrp{\funB}}{\vect{\trmD}{1}{n}{\cdots}} \).
    It follows from the definition of a theory argument mapping that
    \( \trmD_i \subA_{k + 1} \) is a ground theory term for all \( i \in \tamA(\shrp{\funB}) \).
    We can thus repeat the above reasoning and conclude that
    \( \subA_i \) respects
    \( \bar{\tamA}(\dpcv{\shrp{\trmB_i}}{\shrp{\trmA_i}}{\rulC_i}{\gtvA_i}) \)
    for all \( i > k \).
    This gives us an infinite computable
    \( (\setComp{\bar{\tamA}(\sdpA)}{\sdpA \in \setDePa}, \setRule) \)-chain.
  \item If \( \dpcv{\shrp{\trmB_i}}{\shrp{\trmA_i}}{\rulC_i}{\gtvA_i} \notin \setDePa^\prime \)
    for any \( i \),
    the given \( (\setDePa, \setRule) \)-chain is in fact
    a \( (\setDePa \setminus \setDePa^\prime, \setRule) \)-chain.
    \qedhere
  \end{itemize}
\end{proof}

In order to prove the soundness of reduction pair processors,
we first present a lemma.

\begin{lemma}\label{lem:reductionpair}
  Given a constrained relation \( \succeq \) such that
  \( \crel{\succeq}{\rulL}{\rulR}{\rulC}{\setFvar{\rulC} \cup (\setFvar{\rulR} \setminus \setFvar{\rulL})} \)
  for all \( \rwrl{\rulL}{\rulR}{\rulC} \in \setRule \),
  and a monotonic relation \( \sqsupseteq \) which
  includes \( \arrRwrt_\lblCalc \) and
  covers \( \succeq \),
  for all terms \( \trmA \) and \( \trmA^\prime \)
  such that \( \trmA \arrRwrt_\setRule \trmA^\prime \),
  \( \mapNorm{\trmA}_\lblCalc \sqsupseteq^* \mapNorm{\trmA^\prime}_\lblCalc \).
\end{lemma}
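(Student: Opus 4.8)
The plan is to case-split on the two ways a step $\trmA \arrRwrt_\setRule \trmA^\prime$ can arise. The calculation case is immediate: if the step is a calculation step then $\trmA \arrRwrt_\lblCalc \trmA^\prime$, so $\trmA$ and $\trmA^\prime$ share the same (unique) $\lblCalc$-normal form, whence $\mapNorm{\trmA}_\lblCalc = \mapNorm{\trmA^\prime}_\lblCalc$ and therefore $\mapNorm{\trmA}_\lblCalc \sqsupseteq^* \mapNorm{\trmA^\prime}_\lblCalc$ by reflexivity of $\sqsupseteq^*$. All the work is in the rule case, where $\trmA = \plug{\ctxA}{\rulL\subA}$ and $\trmA^\prime = \plug{\ctxA}{\rulR\subA}$ for some $\rwrl{\rulL}{\rulR}{\rulC} \in \setRule$ and some $\subA$ respecting it.

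First I would extract a single $\sqsupseteq$-step at the root of the redex. Since $\subA$ respects the rule, $\subA(\varA)$ is a value---hence a ground theory term---for every $\varA \in \setFvar{\rulC} \cup (\setFvar{\rulR}\setminus\setFvar{\rulL})$, and $\mapNtrp{\rulC\subA} = \algT$. The hypothesis provides $\crel{\succeq}{\rulL}{\rulR}{\rulC}{\setFvar{\rulC}\cup(\setFvar{\rulR}\setminus\setFvar{\rulL})}$, so because $\sqsupseteq$ covers $\succeq$ we obtain $\mapNorm{(\rulL\subA)}_\lblCalc \sqsupseteq \mapNorm{(\rulR\subA)}_\lblCalc$. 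Write $\trmC = \mapNorm{(\rulL\subA)}_\lblCalc$ and $\trmD = \mapNorm{(\rulR\subA)}_\lblCalc$.

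The key step is to lift this root inequality through the context $\ctxA$ while respecting $\lblCalc$-normalization, and here the main obstacle is the interaction between plugging and calculation. The crucial observation is that $\rulL$ is headed by a defined symbol $\funA$, which is not a theory symbol; hence $\trmC = \mapNorm{(\rulL\subA)}_\lblCalc$ is again headed by $\funA$, so $\trmC$ is neither a value nor an application with theory-symbol head. Consequently the occurrence of $\trmC$ at the hole can never participate in a calculation redex: such a redex needs a theory-symbol head and value arguments, and $\trmC$ is neither a value nor theory-headed. I would make this precise by a short case analysis on the position of the hole relative to a putative calculation redex in $\plug{\ctxA}{\trmC}$, concluding that every calculation step avoids the hole. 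It follows that $\mapNorm{\trmA}_\lblCalc = \mapNorm{\plug{\ctxA}{\rulL\subA}}_\lblCalc = \mapNorm{\plug{\ctxA}{\trmC}}_\lblCalc = \plug{\hat{\ctxA}}{\trmC}$, where $\hat{\ctxA}$ is obtained by normalizing the hole-avoiding redexes of $\ctxA$, and $\plug{\hat{\ctxA}}{\trmC}$ is already $\lblCalc$-normal.

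Finally I would close the chain. Monotonicity of $\sqsupseteq$ applied to $\trmC \sqsupseteq \trmD$ gives $\plug{\hat{\ctxA}}{\trmC} \sqsupseteq \plug{\hat{\ctxA}}{\trmD}$. Because the reductions turning $\ctxA$ into $\hat{\ctxA}$ avoid the hole, they remain valid with $\trmD$ in place of $\trmC$, so together with $\rulR\subA \arrRwrt_\lblCalc^* \trmD$ we get $\plug{\ctxA}{\rulR\subA} \arrRwrt_\lblCalc^* \plug{\hat{\ctxA}}{\trmD}$; by uniqueness of $\lblCalc$-normal forms this yields $\plug{\hat{\ctxA}}{\trmD} \arrRwrt_\lblCalc^* \mapNorm{\trmA^\prime}_\lblCalc$. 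As $\arrRwrt_\lblCalc\,\subseteq\,\sqsupseteq$, this last reduction is in particular a $\sqsupseteq^*$-chain, and concatenating gives $\mapNorm{\trmA}_\lblCalc = \plug{\hat{\ctxA}}{\trmC} \sqsupseteq \plug{\hat{\ctxA}}{\trmD} \sqsupseteq^* \mapNorm{\trmA^\prime}_\lblCalc$, as required. I expect the inertness argument for $\trmC$---together with the bookkeeping that the context's calculation reductions commute with replacing $\trmC$ by $\trmD$ at the hole---to be the delicate part; the remaining ingredients are only reflexivity, monotonicity, and confluence of $\arrRwrt_\lblCalc$.
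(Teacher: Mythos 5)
Your proof follows the same route as the paper's: split on the two kinds of rewrite step, use covering to get \( \mapNorm{(\rulL\subA)}_\lblCalc \sqsupseteq \mapNorm{(\rulR\subA)}_\lblCalc \) at the redex, argue that calculation steps cannot overlap the hole so that \( \mapNorm{\plug{\ctxA}{\rulL\subA}}_\lblCalc = \plug{\hat{\ctxA}}{\mapNorm{(\rulL\subA)}_\lblCalc} \), then close with monotonicity, uniqueness of \( \lblCalc \)-normal forms, and \( {\arrRwrt_\lblCalc} \subseteq {\sqsupseteq} \). The decomposition, the treatment of the right-hand side (which may itself be a theory term and so may need further normalization after plugging), and the final concatenation all match the paper's argument.

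However, your justification of the crucial inertness step rests on a claim that is false in this formalism: that \( \rulL \) is ``headed by a defined symbol \( \funA \), which is not a theory symbol.'' The definition of a rewrite rule only requires that \( \rulL \) \emph{contains} at least one function symbol in \( \setFunc \setminus \setTheo{\setFunc} \); the head itself may be a theory symbol (e.g.\ \( \appB{(+)}{(\appl{\mathsf{f}}{x})}{y} \arrRule \cdots \) is a legal left-hand side, and nothing makes \( \setDfnd \) disjoint from \( \setTheo{\setFunc} \)). So \( \trmC = \mapNorm{(\rulL\subA)}_\lblCalc \) can perfectly well be theory-headed, and your ``neither a value nor theory-headed'' dichotomy breaks. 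The repair is the one the paper uses: since \( \rulL \) is not a theory term, it contains a symbol outside \( \setTheo{\setFunc} \); calculation steps only rewrite pure theory subterms to values, so this symbol survives normalization and \( \trmC \) is never a pure (ground) theory term. A calculation redex, together with all its spine prefixes and its value arguments, consists solely of theory symbols, so no calculation redex in \( \plug{\ctxA}{\trmC} \) can overlap the hole---which is exactly the inertness you need. With that substitution of justification, the rest of your argument goes through unchanged.
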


\begin{proof}
  If \( \trmA \arrRwrt_\lblCalc \trmA^\prime \), we have
  \( \mapNorm{\trmA}_\lblCalc = \mapNorm{\trmA^\prime}_\lblCalc \).
  Now consider the case where
  \( \trmA = \plug{\ctxA}{\rulL \subA} \) and
  \( \trmA^\prime = \plug{\ctxA}{\rulR \subA} \) for
  some context \( \plug{\ctxA}{} \),
  some rewrite rule \( \rwrl{\rulL}{\rulR}{\rulC} \in \setRule \) and
  some substitution \( \subA \) which respects \( \rwrl{\rulL}{\rulR}{\rulC} \).
  By definition,
  \( \mapNorm{(\rulL \subA)}_\lblCalc \sqsupseteq \mapNorm{(\rulR \subA)}_\lblCalc \).
  Since \( \rulL \) is not a theory term,
  there exists a context \( \plug{\ctxA^\prime}{} \) such that
  \( \mapNorm{\trmA}_\lblCalc =
     \mapNorm{\plug{\ctxA}{\rulL \subA}}_\lblCalc =
     \plug{\ctxA^\prime}{\mapNorm{(\rulL \subA)}_\lblCalc} \).
  On the other hand,
  \( \mapNorm{\trmA^\prime}_\lblCalc =
     \mapNorm{\plug{\ctxA}{\rulR \subA}}_\lblCalc =
     \mapNorm{\plug{\ctxA^\prime}{\mapNorm{(\rulR \subA)}_\lblCalc}}_\lblCalc \).
  Due to the monotonicity of \( \sqsupseteq \),
  \( \plug{\ctxA^\prime}{\mapNorm{(\rulL \subA)}_\lblCalc} \sqsupseteq
     \plug{\ctxA^\prime}{\mapNorm{(\rulR \subA)}_\lblCalc} \arrRwrt_\lblCalc^*
     \mapNorm{\plug{\ctxA^\prime}{\mapNorm{(\rulR \subA)}_\lblCalc}}_\lblCalc \).
  Because \( {\arrRwrt_\lblCalc} \subseteq {\sqsupseteq} \),
  we have \( \mapNorm{\trmA}_\lblCalc \sqsupseteq^+ \mapNorm{\trmA^\prime}_\lblCalc \).
\end{proof}

\begin{theorem}
  Reduction pair processors are sound.
\end{theorem}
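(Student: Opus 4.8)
The plan is to argue by contradiction. I would assume that \( \dppA(\setDePa) = \setBrac{\setDePa \setminus \setDePa^\prime} \) consists only of finite DP problems, i.e.\ that \( \setDePa \setminus \setDePa^\prime \) is finite, and suppose toward a contradiction that \( \setDePa \) is not finite. Then there is an infinite computable \( (\setDePa, \setRule) \)-chain \( (\dpcv{\shrp{\trmB_0}}{\shrp{\trmA_0}}{\rulC_0}{\gtvA_0}, \subA_0), (\dpcv{\shrp{\trmB_1}}{\shrp{\trmA_1}}{\rulC_1}{\gtvA_1}, \subA_1), \ldots \). Fix a constrained reduction pair \( (\succeq, \succ) \) as in the processor's definition, together with relations \( \sqsupseteq \) and \( \sqsupset \) covering \( \succeq \) and \( \succ \) respectively, where \( \sqsupseteq \) is reflexive, monotonic and includes \( \arrRwrt_\lblCalc \), where \( \sqsupset \) is well-founded, and where \( \relComp{\sqsupseteq}{\sqsupset} \subseteq {\sqsupset^+} \).

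First I would turn the chain into a single infinite sequence of \( \lblCalc \)-normal forms ordered by \( \sqsupseteq \) and \( \sqsupset \). For each \( i \), the substitution \( \subA_i \) respects the SDP \( \dpcv{\shrp{\trmB_i}}{\shrp{\trmA_i}}{\rulC_i}{\gtvA_i} \), which means \( \subA_i(\varA) \) is a ground theory term for all \( \varA \in \gtvA_i \) and \( \mapNtrp{\rulC_i \subA_i} = \algT \); this is exactly the hypothesis under which coverage applies. Hence the first requirement on the reduction pair, together with the fact that \( \sqsupset \) covers \( \succ \), yields \( \mapNorm{(\shrp{\trmB_i} \subA_i)}_\lblCalc \sqsupset \mapNorm{(\shrp{\trmA_i} \subA_i)}_\lblCalc \) whenever the \( i \)-th SDP lies in \( \setDePa^\prime \), while the second requirement and the fact that \( \sqsupseteq \) covers \( \succeq \) yield \( \mapNorm{(\shrp{\trmB_i} \subA_i)}_\lblCalc \sqsupseteq \mapNorm{(\shrp{\trmA_i} \subA_i)}_\lblCalc \) otherwise. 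For the links between consecutive pairs I would invoke \cref{lem:reductionpair}: since the third requirement provides \( \crel{\succeq}{\rulL}{\rulR}{\rulC}{\setFvar{\rulC} \cup (\setFvar{\rulR} \setminus \setFvar{\rulL})} \) for every rule of \( \setRule \) and \( \sqsupseteq \) covers \( \succeq \), each single \( \arrRwrt_\setRule \)-step is mirrored by \( \sqsupseteq^* \) on \( \lblCalc \)-normal forms, so composing along \( \shrp{\trmA_{i - 1}} \subA_{i - 1} \arrRwrt_\setRule^* \shrp{\trmB_i} \subA_i \) gives \( \mapNorm{(\shrp{\trmA_{i - 1}} \subA_{i - 1})}_\lblCalc \sqsupseteq^* \mapNorm{(\shrp{\trmB_i} \subA_i)}_\lblCalc \). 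Interleaving these facts produces one infinite sequence of terms related by \( \sqsupseteq \) throughout, and by \( \sqsupset \) precisely at the steps whose SDP belongs to \( \setDePa^\prime \).

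The most delicate part will be extracting a contradiction from this mixed sequence, for which I would split on how often SDPs of \( \setDePa^\prime \) occur. If such an SDP occurs infinitely often, the sequence has infinitely many strict steps; a short induction lifts the given \( \relComp{\sqsupseteq}{\sqsupset} \subseteq {\sqsupset^+} \) to \( \relComp{\sqsupseteq^*}{\sqsupset} \subseteq {\sqsupset^+} \), so each maximal block of weak steps followed by one strict step collapses, and the terms taken just after each strict step form an infinite \( \sqsupset^+ \)-descending—hence \( \sqsupset \)-descending—sequence, contradicting the well-foundedness of \( \sqsupset \). If instead SDPs of \( \setDePa^\prime \) occur only finitely often, there is an \( N \) beyond which every SDP in the chain lies in \( \setDePa \setminus \setDePa^\prime \); since the defining conditions of a computable chain are pointwise, the suffix from index \( N \) on is again an infinite computable \( (\setDePa \setminus \setDePa^\prime, \setRule) \)-chain, contradicting the finiteness of \( \setDePa \setminus \setDePa^\prime \). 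Either way we reach a contradiction, so \( \setDePa \) is finite and the processor is sound. The only routine points left are checking that the coverage hypotheses coincide exactly with what ``respects'' supplies, and verifying the \( \relComp{\sqsupseteq^*}{\sqsupset} \subseteq {\sqsupset^+} \) collapse.
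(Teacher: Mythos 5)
Your proposal is correct and follows essentially the same route as the paper's proof: use the fact that each $\subA_i$ respects its SDP so that coverage applies, obtain $\sqsupset$ or $\sqsupseteq$ on $\lblCalc$-normal forms per SDP, bridge consecutive pairs via \cref{lem:reductionpair}, and then use well-foundedness of $\sqsupset$ together with $\relComp{\sqsupseteq}{\sqsupset} \subseteq {\sqsupset^+}$ to conclude that strict SDPs occur only finitely often, leaving a tail chain in $\setDePa \setminus \setDePa^\prime$. Your explicit case split and the $\relComp{\sqsupseteq^*}{\sqsupset} \subseteq {\sqsupset^+}$ lifting merely spell out the step the paper compresses into its final sentence.
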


\begin{proof}
  Given a DP problem \( \setDePa \),
  a constrained reduction pair \( (\succeq, \succ) \)
  (covered by \( (\sqsupseteq, \sqsupset) \))
  with the said properties for a subset \( \setDePa^\prime \) of \( \setDePa \) and
  an infinite computable \( (\setDePa, \setRule) \)-chain
  \( (\dpcv{\shrp{\trmB_0}}{\shrp{\trmA_0}}{\rulC_0}{\gtvA_0}, \subA_0), (\dpcv{\shrp{\trmB_1}}{\shrp{\trmA_1}}{\rulC_1}{\gtvA_1}, \subA_1), \ldots \),
  by definition, for all \( i \),
  \( \mapNorm{(\shrp{\trmB_i} \subA_i)}_\lblCalc \sqsupset \mapNorm{(\shrp{\trmA_i} \subA_i)}_\lblCalc \)
  if \( \dpcv{\shrp{\trmB_i}}{\shrp{\trmA_i}}{\rulC_i}{\gtvA_i} \in \setDePa^\prime \), and
  \( \mapNorm{(\shrp{\trmB_i} \subA_i)}_\lblCalc \sqsupseteq \mapNorm{(\shrp{\trmA_i} \subA_i)}_\lblCalc \)
  if \( \dpcv{\shrp{\trmB_i}}{\shrp{\trmA_i}}{\rulC_i}{\gtvA_i} \notin \setDePa^\prime \).
  Due to \cref{lem:reductionpair},
  \( \mapNorm{\shrp{(\trmA_{i - 1}} \subA_{i - 1})}_\lblCalc \sqsupseteq^* \mapNorm{(\shrp{\trmB_i} \subA_i)}_\lblCalc \)
  for all \( i > 0 \).
  Now since \( \sqsupset \) is well-founded
  and \( \relComp{\sqsupseteq}{\sqsupset} \subseteq {\sqsupset^+} \),
  there exists \( N \) such that for all \( i > N \),
  \( \dpcv{\shrp{\trmB_i}}{\shrp{\trmA_i}}{\rulC_i}{\gtvA_i} \notin \setDePa^\prime \).
\end{proof}

\subsection{The Proof of Theorem~\ref{thm:chainComputability}}\label{app:proofChainComputability}
For the properties of \( \setReCa \)-computability,
see Appendix A in the extended version\footnote{\url{https://doi.org/10.48550/arXiv.1902.06733}} of \cite{fuh:kop:19}.
In order to prove \cref{thm:chainComputability},
we first present two lemmas.

\begin{lemma}
  Undefined function symbols are \( \setReCa \)-computable.
\end{lemma}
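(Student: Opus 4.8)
The plan is to reduce the claim to the characterization of \( \setReCa \) provided by \cref{thm:existsC}. Fix an undefined function symbol \( \funA \relType \vect{\typA}{1}{n}{\arrType \cdots \arrType} \arrType \typB \) with \( \typB \) a sort. Unfolding the inductive definition of \( \setReCa \)-computability along the type of \( \funA \), it suffices to show that \( \trmB \coloneq \appl{\funA}{\vect{\trmA}{1}{n}{\cdots}} \) lies in \( \setReCa_\typB \) whenever \( \trmA_1, \ldots, \trmA_n \) are \( \setReCa \)-computable (note that \( \trmB \) has base type \( \typB \)). By \cref{thm:existsC}, this in turn splits into two obligations: (i) \( \trmB \) is terminating with respect to \( {\arrRwrt_\setRule} \cup {\arrComp_\setReCa} \), and (ii) for every \( \trmB^\prime \) with \( \trmB \arrRwrt_\setRule^* \trmB^\prime \) of the form \( \appl{\funB}{\ldots} \), the arguments of \( \funB \) at accessible positions are \( \setReCa \)-computable.

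The key structural observation is that, since no rewrite rule defines \( \funA \), no rule applies at the head of \( \trmB \); the only possible head step is a calculation step, and only when \( \funA \) is a theory (calculation) symbol and all its arguments are values, in which case the result is itself a value. Consequently, along any \( \arrRwrt_\setRule \)-reduction from \( \trmB \) the head symbol stays \( \funA \) until (possibly) a single calculation step collapses the term into a value, and every other step occurs strictly inside some argument \( \trmA_i \). This settles obligation (ii): any reduct \( \trmB^\prime \) is either a value (whose head carries no arguments, so the condition is vacuous) or of the form \( \appl{\funA}{\vect{\trmA^\prime}{1}{n}{\cdots}} \) with \( \trmA_i \arrRwrt_\setRule^* \trmA_i^\prime \); since \( \setReCa \)-computability is preserved under reduction (a standard property of \( \setReCa \), cited from the appendix of \cite{fuh:kop:19}), each accessible argument \( \trmA_i^\prime \) remains \( \setReCa \)-computable.

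For obligation (i) I would argue by contradiction, assuming an infinite \( ({\arrRwrt_\setRule} \cup {\arrComp_\setReCa}) \)-sequence out of \( \trmB \). A \( \arrComp_\setReCa \)-step applies only to a term \( \appl{\funA}{\vect{\trmC}{1}{n}{\cdots}} \) and produces \( \appl{\trmC_k}{\vect{\trmD}{1}{l}{\cdots}} \) with \( k \in \setAcce{\funA} \) and every \( \trmD_j \) \( \setReCa \)-computable; here \( \trmC_k \), being a reduct of the computable \( \trmA_k \), is \( \setReCa \)-computable, so the whole application is \( \setReCa \)-computable, and as it has base type it lies in \( \setReCa \) and therefore terminates. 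A head calculation step likewise yields a value, which terminates. Hence an infinite sequence can contain neither a \( \arrComp_\setReCa \)-step nor a head calculation step, so it consists entirely of argument-internal \( \arrRwrt_\setRule \)-steps. Since there are only finitely many arguments, one of them, say \( \trmA_i \), is rewritten infinitely often, giving an infinite \( \arrRwrt_\setRule \)-reduction of \( \trmA_i \) and contradicting the termination of the \( \setReCa \)-computable term \( \trmA_i \). Thus \( \trmB \) terminates, obligation (i) holds, and by \cref{thm:existsC} we conclude \( \trmB \in \setReCa_\typB \).

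The main obstacle is obligation (i): the termination argument must cope with the combined relation \( {\arrRwrt_\setRule} \cup {\arrComp_\setReCa} \) and with the possibility of a head calculation step, and it leans on two auxiliary facts about \( \setReCa \)—preservation of computability under reduction and termination of \( \setReCa \)-computable terms—both established in \cite{fuh:kop:19}. Everything else is a routine case analysis driven by the fact that \( \funA \) is undefined, so that no head rewrite step is ever possible.
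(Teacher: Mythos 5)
Your proof is correct and takes essentially the same route as the paper's: fix arbitrary \( \setReCa \)-computable arguments, observe that because \( \funA \) is undefined every reduct of \( \appl{\funA}{\vect{\trmA}{1}{n}{\cdots}} \) is either \( \appl{\funA}{\vect{\trmA^\prime}{1}{n}{\cdots}} \) (with argument-internal reductions) or a value, and conclude membership in \( \setReCa \) via the characterization of \cref{thm:existsC}. The paper compresses the verification of both conditions of that characterization---termination with respect to \( {\arrRwrt_\setRule} \cup {\arrComp_\setReCa} \) and computability of accessible arguments of reducts---into a single ``by definition,'' whereas you spell them out explicitly (including the pigeonhole argument and the treatment of \( \arrComp_\setReCa \)-steps and head calculation steps); this is a faithful elaboration of the same argument, not a different one.
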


\begin{proof}
  Given an LCSTRS \( \setRule \) and an undefined function symbol
  \( \funA \relType \vect{\typA}{1}{n}{\arrType \cdots \arrType} \arrType \typB \)
  where \( \typB \) is a sort,
  take arbitrary \( \setReCa \)-computable terms \( \vect{\trmA}{1}{n}{, \ldots,} \)
  such that \( \trmA_i \relType \typA_i \) for all \( i \).
  Consider the reducts of \( \appl{\funA}{\vect{\trmA}{1}{n}{\cdots}} \),
  each of which must be either
  \( \appl{\funA}{\vect{\trmA^\prime}{1}{n}{\cdots}} \)
  where \( \trmA_i \arrRwrt_\setRule \trmA_i^\prime \) for all \( i \) or
  a value (when \( \funA \) is a theory symbol)
  because \( \funA \) is undefined.
  By definition, \( \appl{\funA}{\vect{\trmA}{1}{n}{\cdots}} \in \setReCa \),
  and therefore \( \funA \) is \( \setReCa \)-computable.
\end{proof}

\begin{lemma}\label{lem:chainStep}
  Given an AFP system \( \setRule_0 \) with sort ordering \( \relSord \),
  an extension \( \setRule_1 \) of \( \setRule_0 \) and
  a sort ordering \( \relSord^\prime \)
  which extends \( \relSord \) over sorts in \( \setRule_0 \cup \setRule_1 \),
  for each defined symbol
  \( \funA \relType \vect{\typA}{1}{m}{\arrType \cdots \arrType} \arrType \typB \)
  in \( \setRule_0 \) where \( \typB \) is a sort,
  if \( \appl{\funA}{\vect{\trmB}{1}{m}{\cdots}} \)
  is not \( \setReCa \)-computable
  in \( \setRule_0 \cup \setRule_1 \) with \( \relSord^\prime \)
  but \( \trmB_i \) is for all \( i \),
  there exist an SDP
  \( \dpcv{\appl{\shrp{\funA}}{\vect{\trmB^\prime}{1}{m}{\cdots}}}{\appl{\shrp{\funB}}{\vect{\trmA}{1}{n}{\cdots}}}{\rulC}{\gtvA} \in \setStDP{\setRule_0} \)
  and a substitution \( \subA \) such that
  \begin{enumerate*}
  \item \( \trmB_i \arrRwrt_{\setRule_0 \cup \setRule_1}^* \trmB_i^\prime \subA \) for all \( i \),
  \item \( \subA \) respects
    \( \dpcv{\appl{\shrp{\funA}}{\vect{\trmB^\prime}{1}{m}{\cdots}}}{\appl{\shrp{\funB}}{\vect{\trmA}{1}{n}{\cdots}}}{\rulC}{\gtvA} \), and
  \item \( (\appl{\funB}{\vect{\trmA}{1}{n}{\cdots}}) \subA = \appl{\funB}{(\trmA_1 \subA) \cdots (\trmA_n \subA)} \)
    is not \( \setReCa \)-computable
    in \( \setRule_0 \cup \setRule_1 \) with \( \relSord^\prime \)
    but \( \trmC \subA \) is
    for each proper subterm \( \trmC \) of \( \appl{\funB}{\vect{\trmA}{1}{n}{\cdots}} \).
  \end{enumerate*}
\end{lemma}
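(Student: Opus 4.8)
The plan is to follow the classical static-dependency-pair argument, using the characterization of $\setReCa$-computability from \cref{thm:existsC} as the main engine. First I would extract from $\appl{\funA}{\trmB_1 \cdots \trmB_m}$ a root rewrite step whose contractum is again non-computable. Since every $\trmB_i$ is $\setReCa$-computable it is terminating, so no infinite rewrite sequence can stay inside the arguments, and since the head $\funA$ is a defined (hence non-theory) symbol, every rewrite step on $\appl{\funA}{\trmB_1 \cdots \trmB_m}$ is either internal to an argument or a root application of a rule defining $\funA$. Moreover each $\arrComp_\setReCa$-step from such a term projects onto an accessible argument and computable terms, hence lands in a $\setReCa$-computable (terminating) term. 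The key contradiction is this: if every reduction of $\appl{\funA}{\trmB_1 \cdots \trmB_m}$ ending in a root step had a $\setReCa$-computable contractum, then both clauses of \cref{thm:existsC} would hold for $\appl{\funA}{\trmB_1 \cdots \trmB_m}$ --- termination with respect to $\arrRwrt_{\setRule_0 \cup \setRule_1} \cup \arrComp_\setReCa$, because after any root or $\arrComp_\setReCa$ step we reach a terminating term and only finitely many internal steps precede it; and computability of the accessible arguments of every reduct, because a reduct is either still headed by $\funA$ (whose accessible arguments are reducts of the computable $\trmB_i$) or a reduct of a computable contractum --- forcing $\appl{\funA}{\trmB_1 \cdots \trmB_m}$ to be computable, a contradiction. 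Hence there is an internal reduction $\trmB_i \arrRwrt^* w_i$ and a rule $\rwrl{\rulL}{\rulR}{\rulC} \in \setRule_0$ defining $\funA$ that fires at the root of $\appl{\funA}{w_1 \cdots w_m}$ (matching a prefix $\rulL = \appl{\funA}{\ell_1 \cdots \ell_p}$ via some $\subA_0$) with a non-computable contractum $\appl{(\rulR\subA_0)}{w_{p+1} \cdots w_m}$.

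Next I would build the witnessing substitution $\subA$ by extending $\subA_0$ so that the fresh variables $x_{p+1}, \ldots, x_m$ of the SDP's left-hand side are sent to $w_{p+1}, \ldots, w_m$; this immediately gives condition (1), since $\trmB_j \arrRwrt^* w_j = \ell_j\subA$ for $j \le p$ and $\trmB_j \arrRwrt^* w_j = x_j\subA$ for $j > p$, and condition (2), since $\subA_0$ respects $\rwrl{\rulL}{\rulR}{\rulC}$ and $\gtvA = \setFvar{\rulC} \cup (\setFvar{\rulR} \setminus \setFvar{\rulL})$. The point to verify here is that every variable of the rule instantiates to a $\setReCa$-computable term: variables in $\setFvar{\rulC} \cup (\setFvar{\rulR} \setminus \setFvar{\rulL})$ are mapped to values, which are computable by the preceding lemma, whereas a variable occurring on both sides is, because $\setRule_0$ is \emph{AFP}, an accessible subterm of some pattern argument $\ell_k$ whose instance $w_k = \ell_k\subA_0$ is computable, so its own instance is computable by the accessibility property of $\setReCa$-computability in \cref{thm:existsC}.

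Finally I would take a subterm of $\appl{\rulR}{x_1 \cdots x_m}$ that is minimal with respect to $\relSupt$ among those whose $\subA$-instance is non-computable --- one exists because the contractum itself is non-computable. By minimality every proper subterm of it instantiates to a computable term; since every variable instantiates to a computable term (as just shown) and a computable head applied to computable arguments is computable, this subterm can be neither a variable nor neutral, and by the preceding lemma it cannot be headed by an undefined symbol either. Hence it has the form $\appl{\funB}{\trmA_1 \cdots \trmA_q}$ with $\funB \in \setDfnd$; padding with fresh variables $y_{q+1}, \ldots, y_n$, mapped by $\subA$ to the computable arguments that witness non-computability at base type, yields exactly the right-hand side of an SDP $\dpcv{\appl{\shrp{\rulL}}{x_1 \cdots x_m}}{\appl{\shrp{\funB}}{\trmA_1 \cdots \trmA_q\, y_{q+1} \cdots y_n}}{\rulC}{\gtvA} \in \setStDP{\rwrl{\rulL}{\rulR}{\rulC}} \subseteq \setStDP{\setRule_0}$ and establishes condition (3). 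The main obstacle is the first step: marshalling \emph{both} clauses of \cref{thm:existsC} to guarantee that a root step with a non-computable contractum is reachable. Once that is in place, the remaining steps are routine applications of the standard closure properties of $\setReCa$-computability (preservation under reduction, accessible subterms, undefined symbols, and application to computable arguments).
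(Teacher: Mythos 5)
Your proposal is correct and takes essentially the same approach as the paper's proof: you force a reachable root step (by a rule of \( \setRule_0 \)) with a non-\( \setReCa \)-computable contractum, pick a \( \relSupt \)-minimal subterm with non-computable instance, rule out variable-headed and undefined-headed cases via AFP-accessibility, computability of values, and computability of undefined symbols, and then pad with fresh variables to obtain the SDP and the substitution, exactly as the paper does (your choice to take the minimal subterm of \( \appl{\rulR}{\varA_{p+1} \cdots \varA_m} \) rather than of \( \rulR \) even matches the SDP definition slightly more cleanly). The one slip is the parenthetical ``defined (hence non-theory)'': in this formalism a defined symbol \emph{can} be a theory symbol (a rule head need not be the non-theory symbol required in the left-hand side, which is why the paper's case analysis explicitly allows root calculation steps producing values), but since values are \( \setReCa \)-computable, adding that case leaves your contradiction argument intact.
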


\begin{proof}
  We consider \( \setReCa \)-computability in
  \( \setRule_0 \cup \setRule_1 \) with \( \relSord^\prime \).
  If all the reducts of \( \appl{\funA}{\vect{\trmB}{1}{m}{\cdots}} \) were either
  \( \appl{\funA}{\vect{\trmB^\prime}{1}{m}{\cdots}} \)
  where \( \trmB_i \arrRwrt_{\setRule_0 \cup \setRule_1}^* \trmB_i^\prime \) for all \( i \) or
  a value (when \( \funA \) is a theory symbol),
  \( \appl{\funA}{\vect{\trmB}{1}{m}{\cdots}} \) would be computable.
  Hence, there exist a rewrite rule
  \( \rwrl{\appl{\funA}{\vect{\trmB^\prime}{1}{p}{\cdots}}}{\rulR}{\rulC} \in \setRule_0 \)
  (\( \funA \) cannot be defined in \( \setRule_1 \)) and
  a substitution \( \subA^\prime \) such that
  \( \trmB_i \arrRwrt_{\setRule_0 \cup \setRule_1}^* \trmB_i^\prime \subA^\prime \)
  for all \( i \le p \) and
  \( \subA^\prime \) respects the rewrite rule;
  \( \appl{(\rulR \subA^\prime)}{\vect{\trmB}{p + 1}{m}{\cdots}} \)
  is thus a reduct of \( \appl{\funA}{\vect{\trmB}{1}{m}{\cdots}} \).
  There is at least one such reduct that is uncomputable---otherwise,
  \( \appl{\funA}{\vect{\trmB}{1}{m}{\cdots}} \) would be computable.
  Let \( \appl{(\rulR \subA^\prime)}{\vect{\trmB}{p + 1}{m}{\cdots}} \) be uncomputable,
  and therefore so is \( \rulR \subA^\prime \).

  Take a minimal subterm \( \trmA \) of \( \rulR \)
  such that \( \trmA \subA^\prime \) is uncomputable.
  If \( \trmA = \appl{\varA}{\vect{\trmA}{1}{q}{\cdots}} \) for some variable \( \varA \),
  \( \subA^\prime(\varA) \) is either a value or
  an accessible subterm of \( \trmB_k^\prime \subA^\prime \) for some \( k \)
  because \( \setRule_0 \) is AFP.
  Either way, \( \subA^\prime(\varA) \) is computable.
  Due to the minimality of \( \trmA \),
  \( \trmA_i \subA^\prime \) is computable for all \( i \),
  which implies the computability of
  \( \trmA \subA^\prime = \appl{\subA^\prime(\varA)}{(\trmA_1 \subA^\prime) \cdots (\trmA_q \subA^\prime)} \).
  This contradiction shows that \( \trmA = \appl{\funB}{\vect{\trmA}{1}{q}{\cdots}} \)
  for some function symbol \( \funB \) in \( \setRule_0 \).
  And \( \funB \) must be a defined symbol.

  Now we have an SDP
  \( \dpcv{\appB{\shrp{\funA}}{\vect{\trmB^\prime}{1}{p}{\cdots}}{\vect{\varA}{p + 1}{m}{\cdots}}}{\appB{\shrp{\funB}}{\vect{\trmA}{1}{q}{\cdots}}{\vect{\varB}{q + 1}{n}{\cdots}}}{\rulC}{\setFvar{\rulC} \cup (\setFvar{\rulR} \setminus \setFvar{\appl{\funA}{\vect{\trmB^\prime}{1}{p}{\cdots}}})} \in \setStDP{\setRule_0} \).
  Because \( \trmA \subA^\prime \) is uncomputable,
  there exist computable terms \( \vect{\trmA^\prime}{q + 1}{n}{, \ldots,} \) such that
  \( \appl{(\trmA \subA^\prime)}{\vect{\trmA^\prime}{q + 1}{n}{\cdots}} = \appB{\funB}{(\trmA_1 \subA^\prime) \cdots (\trmA_q \subA^\prime)}{\vect{\trmA^\prime}{q + 1}{n}{\cdots}} \)
  is uncomputable.
  Let \( \subA \) be the substitution such that
  \( \subA(\varA_i) = \trmB_i \) for all \( i > p \),
  \( \subA(\varB_i) = \trmA_i^\prime \) for all \( i > q \),
  and \( \subA(\varC) = \subA^\prime(\varC) \) for any other variable \( \varC \).
  Let \( \trmB_i^\prime \) denote \( \varA_i \) for all \( i > p \),
  let \( \trmA_i \) denote \( \varB_i \) for all \( i > q \),
  and let \( \gtvA \) denote
  \( \setFvar{\rulC} \cup (\setFvar{\rulR} \setminus \setFvar{\appl{\funA}{\vect{\trmB^\prime}{1}{p}{\cdots}}}) \),
  then \( \dpcv{\appl{\shrp{\funA}}{\vect{\trmB^\prime}{1}{m}{\cdots}}}{\appl{\shrp{\funB}}{\vect{\trmA}{1}{n}{\cdots}}}{\rulC}{\gtvA} \) and
  \( \subA \) satisfy all the requirements.
\end{proof}

\begin{corollary}\label{cor:chainExistence}
  Given an AFP system \( \setRule_0 \) with sort ordering \( \relSord \),
  an extension \( \setRule_1 \) of \( \setRule_0 \) and
  a sort ordering \( \relSord^\prime \)
  which extends \( \relSord \) over sorts in \( \setRule_0 \cup \setRule_1 \),
  for each defined symbol
  \( \funA \relType \vect{\typA}{1}{m}{\arrType \cdots \arrType} \arrType \typB \)
  in \( \setRule_0 \) where \( \typB \) is a sort,
  if \( \appl{\funA}{\vect{\trmB}{1}{m}{\cdots}} \)
  is not \( \setReCa \)-computable
  in \( \setRule_0 \cup \setRule_1 \) with \( \relSord^\prime \)
  but \( \trmB_i \) is for all \( i \),
  there exists an infinite computable
  \( (\setStDP{\setRule_0}, \setRule_0 \cup \setRule_1) \)-chain
  \( (\dpcv{\appl{\shrp{\funA}}{\vect{\trmB^\prime}{1}{m}{\cdots}}}{\shrp{\trmA}}{\rulC}{\gtvA}, \subA), \ldots \)
\end{corollary}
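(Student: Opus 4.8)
The plan is to manufacture the chain one link at a time, using \cref{lem:chainStep} not only to produce each link but also to supply the hypothesis needed for the next application. Concretely, I would build the pairs $(\sdpA_0, \subA_0), (\sdpA_1, \subA_1), \ldots$ by recursion on $i$, keeping fixed throughout the ambient data ($\setRule_0$ is AFP, $\setRule_1$ is an extension, and $\relSord^\prime$ extends $\relSord$), so that every invocation of the lemma is legitimate. For the base link, \cref{lem:chainStep} applied to $\funA$ and $\vect{\trmB}{1}{m}{, \ldots,}$ returns an SDP $\sdpA_0 = \dpcv{\appl{\shrp{\funA}}{\vect{\trmB^\prime}{1}{m}{\cdots}}}{\shrp{\trmA_0}}{\rulC_0}{\gtvA_0} \in \setStDP{\setRule_0}$ together with a substitution $\subA_0$ enjoying properties~(1)--(3).

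For the recursion step, suppose the link $(\sdpA_i, \subA_i)$ has been obtained with right-hand side $\shrp{\trmA_i} = \appl{\shrp{\funB}}{\vect{\trmD}{1}{n}{\cdots}}$. Property~(3) tells us that $\trmA_i \subA_i = \appl{\funB}{(\trmD_1 \subA_i) \cdots (\trmD_n \subA_i)}$ is not $\setReCa$-computable, whereas each $\trmD_j \subA_i$ (the instance of a proper subterm of $\trmA_i$) is. Because $\shrp{\funB}$ has a type ending in $\dpso$, the symbol $\funB$ is a defined symbol of $\setRule_0$ whose codomain is a sort, so \cref{lem:chainStep} applies anew---this time to $\funB$ and the arguments $\trmD_1 \subA_i, \ldots, \trmD_n \subA_i$---and yields the next link $(\sdpA_{i + 1}, \subA_{i + 1})$. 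Thus the recursion never gets stuck and produces an infinite sequence whose first element has the required shape $\dpcv{\appl{\shrp{\funA}}{\vect{\trmB^\prime}{1}{m}{\cdots}}}{\shrp{\trmA_0}}{\rulC_0}{\gtvA_0}$.

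Next I would verify that the sequence is a computable $(\setStDP{\setRule_0}, \setRule_0 \cup \setRule_1)$-chain. Each $\sdpA_i \in \setStDP{\setRule_0}$ and each $\subA_i$ respects $\sdpA_i$ directly by the lemma. For the connecting requirement $\shrp{\trmA_i} \subA_i \arrRwrt_{\setRule_0 \cup \setRule_1}^* \shrp{\trmB_{i + 1}} \subA_{i + 1}$, I would use property~(1) of the $(i + 1)$-th application, namely $\trmD_j \subA_i \arrRwrt^* \trmB^\prime_j \subA_{i + 1}$ where $\shrp{\trmB_{i + 1}} = \appl{\shrp{\funB}}{\vect{\trmB^\prime}{1}{n}{\cdots}}$; since both sides are headed by the same marked symbol $\shrp{\funB}$, which carries no rewrite rules, reducing each argument inside the context $\appl{\shrp{\funB}}{\cdots}$ and appealing to monotonicity of $\arrRwrt$ assembles the reduction. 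Computability of the chain is immediate from property~(3): for every $\trmC$ with $\trmA_i \relSupp \trmC$, the term $\trmC \subA_i$ is $\setReCa$-computable.

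The step I expect to require the most care is not conceptual but bookkeeping around this connecting condition: one must line up the argument list of $\shrp{\trmA_i}$ with that of $\shrp{\trmB_{i + 1}}$ so that property~(1) of the next invocation refers to exactly those arguments, and one must confirm that reducing beneath the fresh head $\shrp{\funB}$ is sound, i.e., that the chain's rewrite relation on marked terms can only fire strictly inside the arguments. Both points rest solely on monotonicity of $\arrRwrt$ and on $\shrp{\funB}$ being undefined, so no ingredient beyond \cref{lem:chainStep} is needed.
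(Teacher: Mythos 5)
Your proposal is correct and takes essentially the same approach as the paper: the paper's own proof consists of repeatedly applying \cref{lem:chainStep}---first on \( \appl{\funA}{\vect{\trmB}{1}{m}{\cdots}} \), then on the uncomputable instance \( \appl{\funB}{(\trmA_1 \subA) \cdots (\trmA_n \subA)} \) supplied by property~(3), and so on. Your explicit verification of the connecting condition \( \shrp{\trmA_i} \subA_i \arrRwrt_{\setRule_0 \cup \setRule_1}^* \shrp{\trmB_{i + 1}} \subA_{i + 1} \) via property~(1) and closure of rewriting under contexts is precisely the bookkeeping the paper leaves implicit.
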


\begin{proof}
  Repeatedly applying \cref{lem:chainStep}---first on
  \( \appl{\funA}{\vect{\trmB}{1}{m}{\cdots}} \),
  then on \( \appl{\funB}{(\trmA_1 \subA) \cdots (\trmA_n \subA)} \) and
  so on---we thus get such an infinite computable
  \( (\setStDP{\setRule_0}, \setRule_0 \cup \setRule_1) \)-chain.
\end{proof}

\chainComputability*

\begin{proof}
  Assume that \( \setRule_0 \) is not publicly computable.
  By definition, there exist
  a public extension \( \setRule_1 \) of \( \setRule_0 \),
  a sort ordering \( \relSord^\prime \)
  which extends \( \relSord \) over sorts in \( \setRule_0 \cup \setRule_1 \) and
  a public term \( \trmC \) of \( \setRule_0 \)
  such that \( \trmC \) is not \( \setReCa \)-computable
  in \( \setRule_0 \cup \setRule_1 \) with \( \relSord^\prime \).
  We consider \( \setReCa \)-computability in
  \( \setRule_0 \cup \setRule_1 \) with \( \relSord^\prime \).
  Take a minimal uncomputable subterm \( \trmB \) of \( \trmC \),
  then \( \trmB \) must take the form \( \appl{\funA}{\vect{\trmB}{1}{k}{\cdots}} \) where
  \( \funA \) is a defined symbol in \( \setRule_0 \) and
  \( \trmB_i \) is computable for all \( i \).
  Let the type of \( \funA \) be denoted by
  \( \vect{\typA}{1}{m}{\arrType \cdots \arrType} \arrType \typB \)
  where \( \typB \) is a sort.
  Because \( \trmB = \appl{\funA}{\vect{\trmB}{1}{k}{\cdots}} \) is uncomputable,
  there exist computable terms
  \( \vect{\trmB}{k + 1}{m}{, \ldots,} \) of \( \setRule_0 \cup \setRule_1 \) such that
  \( \appl{\trmB}{\vect{\trmB}{k + 1}{m}{\cdots}} = \appl{\funA}{\vect{\trmB}{1}{m}{\cdots}} \)
  is uncomputable.

  Now due to \cref{cor:chainExistence},
  there exists an infinite computable
  \( (\setStDP{\setRule_0}, \setRule_0 \cup \setRule_1) \)-chain
  \( (\dpcv{\appl{\shrp{\funA}}{\vect{\trmB^\prime}{1}{m}{\cdots}}}{\shrp{\trmA}}{\rulC}{\gtvA}, \subA), \ldots \),
  which is public because \( \funA \) is not a hidden symbol by construction.
  The non-existence of such a chain implies the public computability of \( \setRule_0 \).
\end{proof}

\subsection{The Proof of Theorem~\ref{thm:computabilitySoundness}}\label{app:proofComputabilitySoundness}
We prove the soundness of each class of the DP processors separately.

\begin{theorem}
  Theory argument processors are sound when the input flag is \( \flagPub \).
\end{theorem}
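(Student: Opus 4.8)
The plan is to reduce to the argument already carried out in the proof of \cref{thm:theoryargSoundness}, reusing its central propagation observation: once some SDP of a chain lies in the fixed subset \( \setDePa^\prime \), the theory-argument positions selected by \( \tamA \) stay instantiated to ground theory terms from that point on, so that for every later index \( i \) the substitution \( \subA_i \) respects \( \bar{\tamA}(\sdpA_i) \). Before branching, I would record two facts that make the bookkeeping routine. First, \( \bar{\tamA} \) only enlarges the component \( \gtvA \) while leaving both term components untouched, so respecting \( \bar{\tamA}(\sdpA) \) implies respecting \( \sdpA \), and both the connecting condition \( \shrp{\trmA_{i-1}}\subA_{i-1} \arrRwrt^* \shrp{\trmB_i}\subA_i \) and computability are insensitive to whether we read \( \sdpA_i \) or \( \bar{\tamA}(\sdpA_i) \). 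Second, the publicness of an SDP depends only on the head symbol of its left-hand side, which \( \bar{\tamA} \) does not alter.

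Then I would assume, for contradiction, an infinite computable public \( (\setDePa, \setRule_0 \cup \setRule_1) \)-chain \( (\sdpA_0, \subA_0), (\sdpA_1, \subA_1), \ldots \), so that \( \sdpA_0 \) is public, and split on the shape of the processor's output. In the first subcase \( \setDePa^\prime \) contains every public SDP of \( \setDePa \), hence \( \sdpA_0 \in \setDePa^\prime \) and the propagation applies with \( k = 0 \): for every \( i > 0 \), \( \subA_i \) respects \( \bar{\tamA}(\sdpA_i) \). I would then exhibit a chain over the output problem \( \setDePa^{\prime\prime} = \setComp{\sdpA}{\sdpA \in \setDePa \text{ is public}} \cup \setComp{\bar{\tamA}(\sdpA)}{\sdpA \in \setDePa \text{ is not public}} \) by keeping each \( \sdpA_i \) verbatim when it is public (using the first fact to see \( \subA_i \) still respects it, and, for \( i = 0 \), directly from the original chain) and replacing it by \( \bar{\tamA}(\sdpA_i) \) otherwise. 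By the second fact the first pair remains public, so the rebuilt chain is an infinite computable public chain, contradicting finiteness of \( (\setDePa^{\prime\prime}, \flagPub) \).

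In the second subcase the output is the pair \( \setBrac{(\setComp{\bar{\tamA}(\sdpA)}{\sdpA \in \setDePa}, \flagAny), (\setDePa \setminus \setDePa^\prime, \flagPub)} \), and here I would branch exactly as in \cref{thm:theoryargSoundness}. If some \( \sdpA_k \in \setDePa^\prime \), the infinite tail \( (\bar{\tamA}(\sdpA_{k+1}), \subA_{k+1}), (\bar{\tamA}(\sdpA_{k+2}), \subA_{k+2}), \ldots \) is an infinite computable chain over \( \setComp{\bar{\tamA}(\sdpA)}{\sdpA \in \setDePa} \) (no publicness is demanded, matching the \( \flagAny \) flag), contradicting finiteness of the first output problem. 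If instead no \( \sdpA_i \) lies in \( \setDePa^\prime \), the given chain is already an infinite computable chain over \( \setDePa \setminus \setDePa^\prime \) whose first SDP \( \sdpA_0 \) is public, contradicting finiteness of the second output problem.

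The main obstacle will be the flag bookkeeping in the first subcase: I must ensure the rebuilt chain begins with a genuinely public SDP of \( \setDePa^{\prime\prime} \), which is precisely why the processor keeps public SDPs unmodified there. Were the public head \( \sdpA_0 \) replaced by \( \bar{\tamA}(\sdpA_0) \), I could no longer argue that \( \subA_0 \) respects it, since the left-hand theory arguments of \( \sdpA_0 \) need not be instantiated to ground theory terms at the very start of the chain—only the propagation guarantees this, and propagation only kicks in after the first step. Everything else is the routine propagation already established for the unflagged processor.
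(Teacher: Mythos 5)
Your proposal is correct and takes essentially the same route as the paper's own proof: the same case split on the shape of the processor's output, the same reuse of the propagation argument from \cref{thm:theoryargSoundness} kick-started at \( k = 0 \) by the publicness of the first SDP, and the same chain constructions (public SDPs kept verbatim, non-public ones replaced by \( \bar{\tamA}(\sdpA) \), resp.\ the tail/whole-chain dichotomy for the paired output). You merely spell out the bookkeeping---and the reason the first, public SDP cannot be replaced by \( \bar{\tamA}(\sdpA_0) \)---more explicitly than the paper does.
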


\begin{proof}
  We refer to the proof of \cref{thm:theoryargSoundness},
  and consider the two cases in the definition of a theory argument processor
  when the input is \( (\setDePa, \flagPub) \):
  \begin{itemize}
  \item If \( \setDePa^\prime \) contains all the public SDPs in \( \setDePa \),
    the first SDP in a public \( (\setDePa, \setRule_0 \cup \setRule_1) \)-chain must be an element of \( \setDePa^\prime \).
    Every infinite computable \( (\setDePa, \setRule_0 \cup \setRule_1) \)-chain which is public
    thus gives rise to an infinite computable
    \( (\setComp{\sdpA}{\sdpA \in \setDePa\text{ is public}} \cup \setComp{\bar{\tamA}(\sdpA)}{\sdpA \in \setDePa\text{ is not public}}, \setRule_0 \cup \setRule_1) \)-chain
    which is public.
  \item If \( \setDePa \setminus \setDePa^\prime \) contains at least one public SDP,
    every infinite computable \( (\setDePa, \setRule_0 \cup \setRule_1) \)-chain which is public
    either gives rise to an infinite computable
    \( (\setComp{\bar{\tamA}(\sdpA)}{\sdpA \in \setDePa}, \setRule_0 \cup \setRule_1) \)-chain
    which may or may not be public,
    or is a \( (\setDePa \setminus \setDePa^\prime, \setRule_0 \cup \setRule_1) \)-chain.
    \qedhere
  \end{itemize}
\end{proof}

\begin{theorem}
  Constraint modification processors are sound.
\end{theorem}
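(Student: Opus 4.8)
The plan is to prove soundness by contraposition. Recall that a constraint modification processor sends $(\setDePa, \pflg)$ to the singleton $\setBrac{(\setDePa^\prime, \pflg)}$ where $\setDePa^\prime$ covers $\setDePa$, so soundness amounts to showing that $(\setDePa, \pflg)$ is finite whenever $(\setDePa^\prime, \pflg)$ is. I would instead assume $(\setDePa, \pflg)$ is \emph{not} finite and produce an infinite computable $(\setDePa^\prime, \setRule_0 \cup \setRule_1)$-chain (public when $\pflg = \flagPub$), contradicting the finiteness of $(\setDePa^\prime, \pflg)$. Concretely, I start from an infinite computable $(\setDePa, \setRule_0 \cup \setRule_1)$-chain
\[
  (\dpcv{\shrp{\trmB_0}}{\shrp{\trmA_0}}{\rulC_0}{\gtvA_0}, \subA_0),
  (\dpcv{\shrp{\trmB_1}}{\shrp{\trmA_1}}{\rulC_1}{\gtvA_1}, \subA_1), \ldots
\]
which is moreover public when $\pflg = \flagPub$.

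For each index $i$, the covering hypothesis applied to $\dpcv{\shrp{\trmB_i}}{\shrp{\trmA_i}}{\rulC_i}{\gtvA_i} \in \setDePa$ and the respecting substitution $\subA_i$ furnishes an SDP $\dpcv{\shrp{\trmB_i}}{\shrp{\trmA_i}}{\rulC_i^\prime}{\gtvA_i^\prime} \in \setDePa^\prime$ with the \emph{same} two terms, together with a substitution $\subA_i^\prime$ that respects it and satisfies $\trmB_i \subA_i = \trmB_i \subA_i^\prime$ and $\trmA_i \subA_i = \trmA_i \subA_i^\prime$. I claim the sequence $(\dpcv{\shrp{\trmB_i}}{\shrp{\trmA_i}}{\rulC_i^\prime}{\gtvA_i^\prime}, \subA_i^\prime)_i$ is a chain of the required kind. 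Membership in $\setDePa^\prime$ and the respecting condition hold by construction. For the connecting condition, I note that $\shrp{(\cdot)}$ only relabels the head symbol, so $\trmA_{i-1} \subA_{i-1} = \trmA_{i-1} \subA_{i-1}^\prime$ lifts to $\shrp{\trmA_{i-1}} \subA_{i-1} = \shrp{\trmA_{i-1}} \subA_{i-1}^\prime$, and likewise $\shrp{\trmB_i} \subA_i = \shrp{\trmB_i} \subA_i^\prime$; the original reduction $\shrp{\trmA_{i-1}} \subA_{i-1} \arrRwrt_{\setRule_0 \cup \setRule_1}^* \shrp{\trmB_i} \subA_i$ thus carries over verbatim. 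Publicness is immediate, since it is determined by the head symbol of the (unchanged) left-hand side $\shrp{\trmB_0}$.

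The step I expect to demand the most care is transferring \emph{computability}, i.e.\ showing that $\trmC \subA_i^\prime$ is $\setReCa$-computable for every $\trmC$ with $\trmA_i \relSupp \trmC$, given only that $\trmC \subA_i$ is. The crux is a small observation I would isolate first: the equation $\trmA_i \subA_i = \trmA_i \subA_i^\prime$ forces $\subA_i$ and $\subA_i^\prime$ to agree on every variable actually occurring in $\trmA_i$, because each such variable sits at a fixed position of $\trmA_i$ and equality of the two instantiated terms forces equality of the replacements at that position. Since $\setFvar{\trmC} \subseteq \setFvar{\trmA_i}$ for any subterm $\trmC$ of $\trmA_i$, this yields $\trmC \subA_i^\prime = \trmC \subA_i$, which is $\setReCa$-computable by hypothesis. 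Computability of the new chain then follows, and we obtain the desired infinite computable (and, when $\pflg = \flagPub$, public) $(\setDePa^\prime, \setRule_0 \cup \setRule_1)$-chain, completing the contraposition.
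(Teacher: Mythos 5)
Your proof is correct and takes essentially the same route as the paper's: apply the covering property index-wise to the given infinite computable (and, when \( \pflg = \flagPub \), public) chain, and check that the resulting sequence is again a chain of the same kind over \( \setDePa^\prime \). The variable-agreement observation you isolate---that \( \trmA_i \subA_i = \trmA_i \subA_i^\prime \) forces \( \subA_i \) and \( \subA_i^\prime \) to coincide on \( \setFvar{\trmA_i} \), so computability of the instantiated proper subterms carries over---is exactly the detail the paper's proof leaves implicit in its concluding ``Hence''.
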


\begin{proof}
  Given a DP problem \( (\setDePa, \pflg) \),
  a set \( \setDePa^\prime \) of SDPs which covers \( \setDePa \) and
  an infinite computable \( (\setDePa, \setRule_0 \cup \setRule_1) \)-chain
  \( (\dpcv{\shrp{\trmB_0}}{\shrp{\trmA_0}}{\rulC_0}{\gtvA_0}, \subA_0), (\dpcv{\shrp{\trmB_1}}{\shrp{\trmA_1}}{\rulC_1}{\gtvA_1}, \subA_1), \ldots \),
  by definition, for all \( i \),
  there exist \( \dpcv{\shrp{\trmB_i}}{\shrp{\trmA_i}}{\rulC_i^\prime}{\gtvA_i^\prime} \in \setDePa^\prime \)
  and \( \subA_i^\prime \) such that
  \( \subA_i^\prime \) respects \( \dpcv{\shrp{\trmB_i}}{\shrp{\trmA_i}}{\rulC_i^\prime}{\gtvA_i^\prime} \),
  \( \trmB_i \subA_i = \trmB_i \subA_i^\prime \) and
  \( \trmA_i \subA_i = \trmA_i \subA_i^\prime \).
  Hence,
  \( (\dpcv{\shrp{\trmB_0}}{\shrp{\trmA_0}}{\rulC_0^\prime}{\gtvA_0^\prime}, \subA_0^\prime), (\dpcv{\shrp{\trmB_1}}{\shrp{\trmA_1}}{\rulC_1^\prime}{\gtvA_1^\prime}, \subA_1^\prime), \ldots \)
  is an infinite computable \( (\setDePa^\prime, \setRule_0 \cup \setRule_1) \)-chain,
  and is public if the given \( (\setDePa, \setRule_0 \cup \setRule_1) \)-chain is.
\end{proof}

\begin{theorem}
  Reachability processors are sound.
\end{theorem}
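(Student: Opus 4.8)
The plan is to argue by contraposition, reusing the walk-in-the-graph observation that underlies the soundness of graph processors. Write $\setDePa^\prime$ for the output set, i.e.\ the SDPs $\sdpA \in \setDePa$ such that $\homA(\sdpA)$ is reachable in $\grhA_\homA$ from $\homA(\sdpA_0)$ for some public SDP $\sdpA_0 \in \setDePa$. Assuming $(\setDePa^\prime, \flagPub)$ is finite, I would show that $(\setDePa, \flagPub)$ is finite as well. So suppose, towards a contradiction, that there is an infinite computable $(\setDePa, \setRule_0 \cup \setRule_1)$-chain $(\sdpA_0, \subA_0), (\sdpA_1, \subA_1), \ldots$ that is public, meaning $\sdpA_0$ is a public SDP.

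The key step is to locate every $\sdpA_i$ inside $\setDePa^\prime$. For each $i > 0$, the contiguous pair $(\sdpA_{i - 1}, \subA_{i - 1}), (\sdpA_i, \subA_i)$ is itself a length-two $(\setDePa, \setRule_0 \cup \setRule_1)$-chain, so by the defining property of a graph approximation there is an edge from $\homA(\sdpA_{i - 1})$ to $\homA(\sdpA_i)$ in $\grhA_\homA$. Hence $\homA(\sdpA_0), \homA(\sdpA_1), \ldots$ is an infinite walk, and a trivial induction shows that $\homA(\sdpA_i)$ is reachable from $\homA(\sdpA_0)$ for every $i$. Since $\sdpA_0$ is public, each $\homA(\sdpA_i)$ is reachable from the vertex of a public SDP, so $\sdpA_i \in \setDePa^\prime$ for all $i$.

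It then suffices to observe that the \emph{same} sequence $(\sdpA_0, \subA_0), (\sdpA_1, \subA_1), \ldots$ is an infinite computable public $(\setDePa^\prime, \setRule_0 \cup \setRule_1)$-chain: every chain condition---that each $\subA_i$ respects $\sdpA_i$, that $\shrp{\trmA_{i - 1}} \subA_{i - 1} \arrRwrt_{\setRule_0 \cup \setRule_1}^* \shrp{\trmB_i} \subA_i$, and the computability requirement on the $\subA_i$---refers only to the individual SDPs and substitutions, never to the ambient set, so nothing is disturbed by passing from $\setDePa$ to $\setDePa^\prime$; and the chain stays public because its first SDP $\sdpA_0$ is unchanged. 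This contradicts the finiteness of $(\setDePa^\prime, \flagPub)$. The argument carries no real obstacle: the only points needing care are that reachability must be witnessed starting from a \emph{public} vertex---which is exactly guaranteed by the hypothesis that the chain is public---and that restricting the SDP set leaves all chain and computability conditions intact.
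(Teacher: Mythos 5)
Your proof is correct and follows essentially the same route as the paper's (much terser) argument: the paper also observes that, since graph approximations over-approximate chain-induced edges, every SDP occurring in a public $(\setDePa, \setRule_0 \cup \setRule_1)$-chain is reachable in $\grhA_\homA$ from the vertex of the (public) first SDP, so the chain survives intact as a chain over the restricted set. Your write-up merely spells out the contrapositive, the walk-induction, and the fact that chain and computability conditions are unaffected by shrinking the ambient SDP set---all of which the paper leaves implicit.
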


\begin{proof}
  Graph approximations over-approximate the DP graph,
  and SDPs that are unreachable from any public SDP
  cannot contribute to a public \( (\setDePa, \setRule_0 \cup \setRule_1) \)-chain.
\end{proof}

\end{document}